\documentclass[11pt,letterpaper]{article}

\usepackage{etoolbox}

\usepackage[margin=1in]{geometry}

\usepackage[utf8]{inputenc}

\usepackage{amsmath}
\usepackage{amssymb}
\usepackage{amsthm}
\usepackage{bbm}
\usepackage{thm-restate}
\newtheorem{theorem}{Theorem}[section]
\newtheorem{lemma}[theorem]{Lemma}

\usepackage{url}
\urlstyle{same}

\usepackage[table]{xcolor}
\usepackage{array}

\usepackage{setspace}

\usepackage{wrapfig}

\makeatletter
\newcommand{\labeltarget}[1]{\Hy@raisedlink{\hypertarget{#1}{}}}
\makeatother

\usepackage{lmodern}
\usepackage{times}
\usepackage{upgreek}

\usepackage{complexity}

\usepackage[inline]{enumitem}
\usepackage{moreenum}

\usepackage{mathtools}
\usepackage[super]{nth}

\usepackage{bm}

\usepackage{xspace}

\usepackage{ifthen}

\usepackage[immediate]{silence}
\WarningFilter[tmp]{latex}{Command}
\usepackage{sectsty}
\DeactivateWarningFilters[tmp]
\allsectionsfont{\boldmath}

\setlist[enumerate]{nosep,topsep=0.1em}
\setlist[enumerate,1]{label=(\roman*), leftmargin=2.2em}
\setlist[itemize]{nosep,topsep=0.3em}

\usepackage[textsize=footnotesize, color=blue!30!white]{todonotes}

\usepackage{xfrac}

\usepackage{tikz}
\usetikzlibrary{calc}
\usetikzlibrary{shapes.geometric}
\usetikzlibrary{arrows}
\usetikzlibrary{decorations.pathreplacing, decorations.pathmorphing}

\usepackage{tcolorbox}
\tcbuselibrary{skins}

\usepackage{float}
\usepackage{graphicx}
\graphicspath{{graphics/}}
\makeatletter
\newcommand\appendtographicspath[1]{%
  \g@addto@macro\Ginput@path{#1}%
}
\makeatother

\usepackage[margin=10pt,font=small,labelfont=bf,skip=-5pt]{caption}
\usepackage{subcaption}

\usepackage[
backend=biber,
style=alphabetic,
citestyle=alphabetic,
maxalphanames=6,
maxcitenames=99,
mincitenames=98,
maxbibnames=99,
giveninits=true,
]{biblatex}

\begin{filecontents*}{main.bbl}
%
%
\begingroup
\makeatletter
\@ifundefined{ver@biblatex.sty}
  {\@latex@error
     {Missing 'biblatex' package}
     {The bibliography requires the 'biblatex' package.}
      \aftergroup\endinput}
  {}
\endgroup

\refsection{0}
  \datalist[entry]{anyt/global//global/global}
    \entry{adjiashvili_2018_beating}{article}{}
      \name{author}{1}{}{%
        {{hash=2bd6f137693236784702d60381663ecc}{%
           family={Adjiashvili},
           familyi={A\bibinitperiod},
           given={D.},
           giveni={D\bibinitperiod}}}%
      }
      \strng{namehash}{2bd6f137693236784702d60381663ecc}
      \strng{fullhash}{2bd6f137693236784702d60381663ecc}
      \strng{bibnamehash}{2bd6f137693236784702d60381663ecc}
      \strng{authorbibnamehash}{2bd6f137693236784702d60381663ecc}
      \strng{authornamehash}{2bd6f137693236784702d60381663ecc}
      \strng{authorfullhash}{2bd6f137693236784702d60381663ecc}
      \field{labelalpha}{Adj18}
      \field{sortinit}{A}
      \field{sortinithash}{a3dcedd53b04d1adfd5ac303ecd5e6fa}
      \field{labelnamesource}{author}
      \field{labeltitlesource}{title}
      \field{journaltitle}{ACM Transactions on Algorithms}
      \field{number}{2}
      \field{title}{Beating Approximation Factor Two for Weighted Tree Augmentation with Bounded Costs}
      \field{volume}{15}
      \field{year}{2018}
      \field{pages}{19:1\bibrangedash 19:26}
      \range{pages}{-1}
    \endentry
    \entry{borchers_1997_thek}{article}{}
      \name{author}{2}{}{%
        {{hash=8bac108a756825f8c51a402b849f4bc7}{%
           family={Borchers},
           familyi={B\bibinitperiod},
           given={A.},
           giveni={A\bibinitperiod}}}%
        {{hash=8bec204cda74a1b4a2014dafaa9e7f00}{%
           family={Du},
           familyi={D\bibinitperiod},
           given={D.},
           giveni={D\bibinitperiod}}}%
      }
      \strng{namehash}{efff413d1836bb96f24daf4e0990375f}
      \strng{fullhash}{efff413d1836bb96f24daf4e0990375f}
      \strng{bibnamehash}{efff413d1836bb96f24daf4e0990375f}
      \strng{authorbibnamehash}{efff413d1836bb96f24daf4e0990375f}
      \strng{authornamehash}{efff413d1836bb96f24daf4e0990375f}
      \strng{authorfullhash}{efff413d1836bb96f24daf4e0990375f}
      \field{labelalpha}{BD97}
      \field{sortinit}{B}
      \field{sortinithash}{8de16967003c7207dae369d874f1456e}
      \field{labelnamesource}{author}
      \field{labeltitlesource}{title}
      \field{journaltitle}{SIAM Journal on Computing}
      \field{number}{3}
      \field{title}{The k-{S}teiner ratio in graphs}
      \field{volume}{26}
      \field{year}{1997}
      \field{pages}{857\bibrangedash 869}
      \range{pages}{13}
    \endentry
    \entry{byrka_2013_steiner}{article}{}
      \name{author}{4}{}{%
        {{hash=29eb2139a42aba87a1c83705c2c1d970}{%
           family={Byrka},
           familyi={B\bibinitperiod},
           given={J.},
           giveni={J\bibinitperiod}}}%
        {{hash=38a2d822c8bdccff8c37876b3745583e}{%
           family={Grandoni},
           familyi={G\bibinitperiod},
           given={F.},
           giveni={F\bibinitperiod}}}%
        {{hash=39b0296cc740241af3252dbfaf1aed17}{%
           family={Rothvoß},
           familyi={R\bibinitperiod},
           given={T.},
           giveni={T\bibinitperiod}}}%
        {{hash=f9651db0fbc375f52c0ed3a42371d346}{%
           family={Sanità},
           familyi={S\bibinitperiod},
           given={L.},
           giveni={L\bibinitperiod}}}%
      }
      \strng{namehash}{e194f8b1112b8d0ed2dee5103f43f54a}
      \strng{fullhash}{e194f8b1112b8d0ed2dee5103f43f54a}
      \strng{bibnamehash}{e194f8b1112b8d0ed2dee5103f43f54a}
      \strng{authorbibnamehash}{e194f8b1112b8d0ed2dee5103f43f54a}
      \strng{authornamehash}{e194f8b1112b8d0ed2dee5103f43f54a}
      \strng{authorfullhash}{e194f8b1112b8d0ed2dee5103f43f54a}
      \field{labelalpha}{BGRS13}
      \field{sortinit}{B}
      \field{sortinithash}{8de16967003c7207dae369d874f1456e}
      \field{labelnamesource}{author}
      \field{labeltitlesource}{title}
      \field{journaltitle}{Journal of the ACM}
      \field{number}{1}
      \field{title}{Steiner Tree Approximation via Iterative Randomized Rounding}
      \field{volume}{60}
      \field{year}{2013}
      \field{pages}{6:1\bibrangedash 6:33}
      \range{pages}{-1}
    \endentry
    \entry{cheriyan_2018_approximating_b}{article}{}
      \name{author}{2}{}{%
        {{hash=af103a71de13714b5ada5c88681dab8c}{%
           family={Cheriyan},
           familyi={C\bibinitperiod},
           given={J.},
           giveni={J\bibinitperiod}}}%
        {{hash=7605b60e61210edb67ac210db54a3ab8}{%
           family={Gao},
           familyi={G\bibinitperiod},
           given={Z.},
           giveni={Z\bibinitperiod}}}%
      }
      \strng{namehash}{a8afc9e70152b4f42cd73fef2454f126}
      \strng{fullhash}{a8afc9e70152b4f42cd73fef2454f126}
      \strng{bibnamehash}{a8afc9e70152b4f42cd73fef2454f126}
      \strng{authorbibnamehash}{a8afc9e70152b4f42cd73fef2454f126}
      \strng{authornamehash}{a8afc9e70152b4f42cd73fef2454f126}
      \strng{authorfullhash}{a8afc9e70152b4f42cd73fef2454f126}
      \field{labelalpha}{CG18}
      \field{sortinit}{C}
      \field{sortinithash}{4c244ceae61406cdc0cc2ce1cb1ff703}
      \field{labelnamesource}{author}
      \field{labeltitlesource}{title}
      \field{journaltitle}{Algorithmica}
      \field{number}{2}
      \field{title}{Approximating (Unweighted) Tree Augmentation via Lift-and-\linebreak Project, Part {II}}
      \field{volume}{80}
      \field{year}{2018}
      \field{pages}{608\bibrangedash 651}
      \range{pages}{44}
    \endentry
    \entry{cheriyan_1999_2-coverings}{inproceedings}{}
      \name{author}{3}{}{%
        {{hash=af103a71de13714b5ada5c88681dab8c}{%
           family={Cheriyan},
           familyi={C\bibinitperiod},
           given={J.},
           giveni={J\bibinitperiod}}}%
        {{hash=6ec0714f9027d48a3d13a6a19ff499a7}{%
           family={Jord{á}n},
           familyi={J\bibinitperiod},
           given={T.},
           giveni={T\bibinitperiod}}}%
        {{hash=6c06426f10eb237c91b9e29d1601fe79}{%
           family={Ravi},
           familyi={R\bibinitperiod},
           given={R.},
           giveni={R\bibinitperiod}}}%
      }
      \strng{namehash}{082a2edf2dc715cf2494701986e0ec5d}
      \strng{fullhash}{082a2edf2dc715cf2494701986e0ec5d}
      \strng{bibnamehash}{082a2edf2dc715cf2494701986e0ec5d}
      \strng{authorbibnamehash}{082a2edf2dc715cf2494701986e0ec5d}
      \strng{authornamehash}{082a2edf2dc715cf2494701986e0ec5d}
      \strng{authorfullhash}{082a2edf2dc715cf2494701986e0ec5d}
      \field{labelalpha}{CJR99}
      \field{sortinit}{C}
      \field{sortinithash}{4c244ceae61406cdc0cc2ce1cb1ff703}
      \field{labelnamesource}{author}
      \field{labeltitlesource}{title}
      \field{booktitle}{Proceedings of 7th Annual European Symposium on Algorithms (ESA)}
      \field{title}{On $2$-Coverings and $2$-Packings of Laminar Families}
      \field{year}{1999}
      \field{pages}{523\bibrangedash 533}
      \range{pages}{11}
    \endentry
    \entry{cheriyan_2008_integrality}{article}{}
      \name{author}{4}{}{%
        {{hash=af103a71de13714b5ada5c88681dab8c}{%
           family={Cheriyan},
           familyi={C\bibinitperiod},
           given={J.},
           giveni={J\bibinitperiod}}}%
        {{hash=ffedab1d748cb0d63d6b25303a9e7eef}{%
           family={Karloff},
           familyi={K\bibinitperiod},
           given={H.},
           giveni={H\bibinitperiod}}}%
        {{hash=1d0e7bb04d059786f75e7b0980ed7704}{%
           family={Khandekar},
           familyi={K\bibinitperiod},
           given={R.},
           giveni={R\bibinitperiod}}}%
        {{hash=387c636434a080ffe055471c993c4fbd}{%
           family={K{ö}nemann},
           familyi={K\bibinitperiod},
           given={J.},
           giveni={J\bibinitperiod}}}%
      }
      \strng{namehash}{40061e473bf68f9e29a31dec8d314f2c}
      \strng{fullhash}{40061e473bf68f9e29a31dec8d314f2c}
      \strng{bibnamehash}{40061e473bf68f9e29a31dec8d314f2c}
      \strng{authorbibnamehash}{40061e473bf68f9e29a31dec8d314f2c}
      \strng{authornamehash}{40061e473bf68f9e29a31dec8d314f2c}
      \strng{authorfullhash}{40061e473bf68f9e29a31dec8d314f2c}
      \field{labelalpha}{CKKK08}
      \field{sortinit}{C}
      \field{sortinithash}{4c244ceae61406cdc0cc2ce1cb1ff703}
      \field{labelnamesource}{author}
      \field{labeltitlesource}{title}
      \field{journaltitle}{Operations Research Letters}
      \field{number}{4}
      \field{title}{On the Integrality Ratio for Tree Augmentation}
      \field{volume}{36}
      \field{year}{2008}
      \field{pages}{399\bibrangedash 401}
      \range{pages}{3}
    \endentry
    \entry{cohen_2013_approximation}{article}{}
      \name{author}{2}{}{%
        {{hash=1585ea8f3adc309afbe540be7a8cee49}{%
           family={Cohen},
           familyi={C\bibinitperiod},
           given={N.},
           giveni={N\bibinitperiod}}}%
        {{hash=b1b58c0f347d5ff53d4342f5372b2662}{%
           family={Nutov},
           familyi={N\bibinitperiod},
           given={Z.},
           giveni={Z\bibinitperiod}}}%
      }
      \list{publisher}{1}{%
        {Elsevier}%
      }
      \strng{namehash}{32e4c1d1f83f1e84143a3897e40448f4}
      \strng{fullhash}{32e4c1d1f83f1e84143a3897e40448f4}
      \strng{bibnamehash}{32e4c1d1f83f1e84143a3897e40448f4}
      \strng{authorbibnamehash}{32e4c1d1f83f1e84143a3897e40448f4}
      \strng{authornamehash}{32e4c1d1f83f1e84143a3897e40448f4}
      \strng{authorfullhash}{32e4c1d1f83f1e84143a3897e40448f4}
      \field{labelalpha}{CN13}
      \field{sortinit}{C}
      \field{sortinithash}{4c244ceae61406cdc0cc2ce1cb1ff703}
      \field{labelnamesource}{author}
      \field{labeltitlesource}{title}
      \field{journaltitle}{Theoretical Computer Science}
      \field{title}{A $(1+ \ln 2)$-Approximation Algorithm for Minimum-Cost $2$-Edge-Connectivity Augmentation of Trees with Constant Radius}
      \field{volume}{489}
      \field{year}{2013}
      \field{pages}{67\bibrangedash 74}
      \range{pages}{8}
    \endentry
    \entry{cecchetto_2021_bridging}{inproceedings}{}
      \name{author}{3}{}{%
        {{hash=9913f2fd88a6db28bf12d2a5d39b0ee0}{%
           family={Cecchetto},
           familyi={C\bibinitperiod},
           given={F.},
           giveni={F\bibinitperiod}}}%
        {{hash=86435d42fde1aef89473f886121a79eb}{%
           family={Traub},
           familyi={T\bibinitperiod},
           given={V.},
           giveni={V\bibinitperiod}}}%
        {{hash=159a9da7df6e4b5c4e08c591850259b4}{%
           family={Zenklusen},
           familyi={Z\bibinitperiod},
           given={R.},
           giveni={R\bibinitperiod}}}%
      }
      \strng{namehash}{9f6e4046af4cd96f55fbdf10c8395c17}
      \strng{fullhash}{9f6e4046af4cd96f55fbdf10c8395c17}
      \strng{bibnamehash}{9f6e4046af4cd96f55fbdf10c8395c17}
      \strng{authorbibnamehash}{9f6e4046af4cd96f55fbdf10c8395c17}
      \strng{authornamehash}{9f6e4046af4cd96f55fbdf10c8395c17}
      \strng{authorfullhash}{9f6e4046af4cd96f55fbdf10c8395c17}
      \field{labelalpha}{CTZ21}
      \field{sortinit}{C}
      \field{sortinithash}{4c244ceae61406cdc0cc2ce1cb1ff703}
      \field{labelnamesource}{author}
      \field{labeltitlesource}{title}
      \field{booktitle}{Proceedings of 53rd Annual ACM Symposium on Theory of Computing (STOC)}
      \field{title}{Bridging the Gap Between Tree and Connectivity Augmentation: Unified and Stronger Approaches}
      \field{year}{2021}
      \field{pages}{370\bibrangedash 383}
      \range{pages}{14}
    \endentry
    \entry{dreyfus_1971_steiner}{article}{}
      \name{author}{2}{}{%
        {{hash=8ded9420ae56b62915efea7d00d7cc02}{%
           family={Dreyfus},
           familyi={D\bibinitperiod},
           given={S.\bibnamedelimi E.},
           giveni={S\bibinitperiod\bibinitdelim E\bibinitperiod}}}%
        {{hash=c47a4a4b4a1e8db7c1eff810e060bc5e}{%
           family={Wagner},
           familyi={W\bibinitperiod},
           given={R.\bibnamedelimi A.},
           giveni={R\bibinitperiod\bibinitdelim A\bibinitperiod}}}%
      }
      \strng{namehash}{b8a45ef19efa227c35cf6899ee4e4a5f}
      \strng{fullhash}{b8a45ef19efa227c35cf6899ee4e4a5f}
      \strng{bibnamehash}{b8a45ef19efa227c35cf6899ee4e4a5f}
      \strng{authorbibnamehash}{b8a45ef19efa227c35cf6899ee4e4a5f}
      \strng{authornamehash}{b8a45ef19efa227c35cf6899ee4e4a5f}
      \strng{authorfullhash}{b8a45ef19efa227c35cf6899ee4e4a5f}
      \field{labelalpha}{DW71}
      \field{sortinit}{D}
      \field{sortinithash}{c438b3d5d027251ba63f5ed538d98af5}
      \field{labelnamesource}{author}
      \field{labeltitlesource}{title}
      \field{journaltitle}{Networks}
      \field{number}{3}
      \field{title}{The {S}teiner Problem in Graphs}
      \field{volume}{1}
      \field{year}{1971}
      \field{pages}{195\bibrangedash 207}
      \range{pages}{13}
      \keyw{steiner trees}
    \endentry
    \entry{fiorini_2018_approximating}{inproceedings}{}
      \name{author}{4}{}{%
        {{hash=f196886882907c27f48160664952a877}{%
           family={Fiorini},
           familyi={F\bibinitperiod},
           given={S.},
           giveni={S\bibinitperiod}}}%
        {{hash=ac4e6ea1f9ba9111a82d25b7087d4ceb}{%
           family={Groß},
           familyi={G\bibinitperiod},
           given={M.},
           giveni={M\bibinitperiod}}}%
        {{hash=387c636434a080ffe055471c993c4fbd}{%
           family={Könemann},
           familyi={K\bibinitperiod},
           given={J.},
           giveni={J\bibinitperiod}}}%
        {{hash=f9651db0fbc375f52c0ed3a42371d346}{%
           family={Sanità},
           familyi={S\bibinitperiod},
           given={L.},
           giveni={L\bibinitperiod}}}%
      }
      \strng{namehash}{5034c11543cb2b637db832d241ebd86f}
      \strng{fullhash}{5034c11543cb2b637db832d241ebd86f}
      \strng{bibnamehash}{5034c11543cb2b637db832d241ebd86f}
      \strng{authorbibnamehash}{5034c11543cb2b637db832d241ebd86f}
      \strng{authornamehash}{5034c11543cb2b637db832d241ebd86f}
      \strng{authorfullhash}{5034c11543cb2b637db832d241ebd86f}
      \field{labelalpha}{FGKS18}
      \field{sortinit}{F}
      \field{sortinithash}{fb0c0faa89eb6abae8213bf60e6799ea}
      \field{labelnamesource}{author}
      \field{labeltitlesource}{title}
      \field{booktitle}{Proceedings of the 29th Annual ACM-SIAM Symposium on Discrete Algorithms (SODA)}
      \field{title}{Approximating Weighted Tree Augmentation via {C}hvátal-{G}omory {C}uts}
      \field{year}{2018}
      \field{pages}{817\bibrangedash 831}
      \range{pages}{15}
    \endentry
    \entry{grandoni_2018_improved}{inproceedings}{}
      \name{author}{3}{}{%
        {{hash=38a2d822c8bdccff8c37876b3745583e}{%
           family={Grandoni},
           familyi={G\bibinitperiod},
           given={F.},
           giveni={F\bibinitperiod}}}%
        {{hash=9456eb5bbba3b0950a0793782e292b40}{%
           family={Kalaitzis},
           familyi={K\bibinitperiod},
           given={C.},
           giveni={C\bibinitperiod}}}%
        {{hash=159a9da7df6e4b5c4e08c591850259b4}{%
           family={Zenklusen},
           familyi={Z\bibinitperiod},
           given={R.},
           giveni={R\bibinitperiod}}}%
      }
      \strng{namehash}{e5e96d096d7943e666c5452cecb1ba3c}
      \strng{fullhash}{e5e96d096d7943e666c5452cecb1ba3c}
      \strng{bibnamehash}{e5e96d096d7943e666c5452cecb1ba3c}
      \strng{authorbibnamehash}{e5e96d096d7943e666c5452cecb1ba3c}
      \strng{authornamehash}{e5e96d096d7943e666c5452cecb1ba3c}
      \strng{authorfullhash}{e5e96d096d7943e666c5452cecb1ba3c}
      \field{labelalpha}{GKZ18}
      \field{sortinit}{G}
      \field{sortinithash}{62eb2aa29549e4fdbd3cb154ec5711cb}
      \field{labelnamesource}{author}
      \field{labeltitlesource}{title}
      \field{booktitle}{Proceedings of 50th ACM Symposium on Theory of Computing (STOC)}
      \field{title}{Improved Approximation for Tree Augmentation: Saving by Rewiring}
      \field{year}{2018}
      \field{pages}{632\bibrangedash 645}
      \range{pages}{14}
    \endentry
    \entry{greene_1975_some}{article}{}
      \name{author}{2}{}{%
        {{hash=f0a64c8f776de75e515a1f2b2a5817e4}{%
           family={Greene},
           familyi={G\bibinitperiod},
           given={C.},
           giveni={C\bibinitperiod}}}%
        {{hash=2c2a7ae1b3e82c35184bc5ec23bed562}{%
           family={Magnanti},
           familyi={M\bibinitperiod},
           given={T.L.},
           giveni={T\bibinitperiod}}}%
      }
      \list{publisher}{1}{%
        {SIAM}%
      }
      \strng{namehash}{1f763dac50d5393d386b26a2198a0700}
      \strng{fullhash}{1f763dac50d5393d386b26a2198a0700}
      \strng{bibnamehash}{1f763dac50d5393d386b26a2198a0700}
      \strng{authorbibnamehash}{1f763dac50d5393d386b26a2198a0700}
      \strng{authornamehash}{1f763dac50d5393d386b26a2198a0700}
      \strng{authorfullhash}{1f763dac50d5393d386b26a2198a0700}
      \field{labelalpha}{GM75}
      \field{sortinit}{G}
      \field{sortinithash}{62eb2aa29549e4fdbd3cb154ec5711cb}
      \field{labelnamesource}{author}
      \field{labeltitlesource}{title}
      \field{journaltitle}{SIAM Journal on Applied Mathematics}
      \field{number}{3}
      \field{title}{Some abstract pivot algorithms}
      \field{volume}{29}
      \field{year}{1975}
      \field{pages}{530\bibrangedash 539}
      \range{pages}{10}
    \endentry
    \entry{goemans_2012_matroids}{inproceedings}{}
      \name{author}{4}{}{%
        {{hash=b064c6042cf492d2a9c396f25aa9c366}{%
           family={Goemans},
           familyi={G\bibinitperiod},
           given={M.\bibnamedelimi X.},
           giveni={M\bibinitperiod\bibinitdelim X\bibinitperiod}}}%
        {{hash=fde0c0c2b453bfb82d50459c8de27d14}{%
           family={Olver},
           familyi={O\bibinitperiod},
           given={N.},
           giveni={N\bibinitperiod}}}%
        {{hash=39b0296cc740241af3252dbfaf1aed17}{%
           family={Rothvo{ß}},
           familyi={R\bibinitperiod},
           given={T.},
           giveni={T\bibinitperiod}}}%
        {{hash=159a9da7df6e4b5c4e08c591850259b4}{%
           family={Zenklusen},
           familyi={Z\bibinitperiod},
           given={R.},
           giveni={R\bibinitperiod}}}%
      }
      \strng{namehash}{1fe9c56ae053646517e3ada159e86e07}
      \strng{fullhash}{1fe9c56ae053646517e3ada159e86e07}
      \strng{bibnamehash}{1fe9c56ae053646517e3ada159e86e07}
      \strng{authorbibnamehash}{1fe9c56ae053646517e3ada159e86e07}
      \strng{authornamehash}{1fe9c56ae053646517e3ada159e86e07}
      \strng{authorfullhash}{1fe9c56ae053646517e3ada159e86e07}
      \field{labelalpha}{GORZ12}
      \field{sortinit}{G}
      \field{sortinithash}{62eb2aa29549e4fdbd3cb154ec5711cb}
      \field{labelnamesource}{author}
      \field{labeltitlesource}{title}
      \field{booktitle}{Proceedings of 44th ACM Symposium on Theory of Computing (STOC)}
      \field{title}{Matroids and Integrality Gaps for Hypergraphic Steiner Tree Relaxations}
      \field{year}{2012}
      \field{pages}{1161\bibrangedash 1175}
      \range{pages}{15}
    \endentry
    \entry{kortsarz_2016_simplified}{article}{}
      \name{author}{2}{}{%
        {{hash=0cb1a8ab3b06a57ccb322b17d6ddd1fb}{%
           family={Kortsarz},
           familyi={K\bibinitperiod},
           given={G.},
           giveni={G\bibinitperiod}}}%
        {{hash=b1b58c0f347d5ff53d4342f5372b2662}{%
           family={Nutov},
           familyi={N\bibinitperiod},
           given={Z.},
           giveni={Z\bibinitperiod}}}%
      }
      \strng{namehash}{b705b2dd9c673c2c737a3978a192acda}
      \strng{fullhash}{b705b2dd9c673c2c737a3978a192acda}
      \strng{bibnamehash}{b705b2dd9c673c2c737a3978a192acda}
      \strng{authorbibnamehash}{b705b2dd9c673c2c737a3978a192acda}
      \strng{authornamehash}{b705b2dd9c673c2c737a3978a192acda}
      \strng{authorfullhash}{b705b2dd9c673c2c737a3978a192acda}
      \field{labelalpha}{KN16}
      \field{sortinit}{K}
      \field{sortinithash}{d3edc18d54b9438a72c24c925bfb38f4}
      \field{labelnamesource}{author}
      \field{labeltitlesource}{title}
      \field{journaltitle}{ACM Transactions on Algorithms}
      \field{number}{2}
      \field{title}{A Simplified 1.5-Approximation Algorithm for Augmenting Edge-Connectivity of a Graph from 1 to 2}
      \field{volume}{12}
      \field{year}{2016}
      \field{pages}{23:1\bibrangedash 23:20}
      \range{pages}{-1}
    \endentry
    \entry{nutov_2019_approximating}{inproceedings}{}
      \name{author}{3}{}{%
        {{hash=b1b58c0f347d5ff53d4342f5372b2662}{%
           family={Nutov},
           familyi={N\bibinitperiod},
           given={Z.},
           giveni={Z\bibinitperiod}}}%
        {{hash=0cb1a8ab3b06a57ccb322b17d6ddd1fb}{%
           family={Kortsarz},
           familyi={K\bibinitperiod},
           given={G.},
           giveni={G\bibinitperiod}}}%
        {{hash=ae455390689d96dd147d4404a46d4717}{%
           family={Shalom},
           familyi={S\bibinitperiod},
           given={E.},
           giveni={E\bibinitperiod}}}%
      }
      \strng{namehash}{ad016e9115beedc56aa5f95a0b46f40b}
      \strng{fullhash}{ad016e9115beedc56aa5f95a0b46f40b}
      \strng{bibnamehash}{ad016e9115beedc56aa5f95a0b46f40b}
      \strng{authorbibnamehash}{ad016e9115beedc56aa5f95a0b46f40b}
      \strng{authornamehash}{ad016e9115beedc56aa5f95a0b46f40b}
      \strng{authorfullhash}{ad016e9115beedc56aa5f95a0b46f40b}
      \field{labelalpha}{NKS19}
      \field{sortinit}{N}
      \field{sortinithash}{98cf339a479c0454fe09153a08675a15}
      \field{labelnamesource}{author}
      \field{labeltitlesource}{title}
      \field{booktitle}{Proceedings of the 44th International Symposium on Mathematical Foundations of Computer Science (MFCS)}
      \field{title}{Approximating Activation Edge-Cover and Facility Location Problems}
      \field{year}{2019}
      \field{pages}{20:1\bibrangedash 20:14}
      \range{pages}{-1}
    \endentry
    \entry{nutov_2017_tree}{inproceedings}{}
      \name{author}{1}{}{%
        {{hash=b1b58c0f347d5ff53d4342f5372b2662}{%
           family={Nutov},
           familyi={N\bibinitperiod},
           given={Z.},
           giveni={Z\bibinitperiod}}}%
      }
      \strng{namehash}{b1b58c0f347d5ff53d4342f5372b2662}
      \strng{fullhash}{b1b58c0f347d5ff53d4342f5372b2662}
      \strng{bibnamehash}{b1b58c0f347d5ff53d4342f5372b2662}
      \strng{authorbibnamehash}{b1b58c0f347d5ff53d4342f5372b2662}
      \strng{authornamehash}{b1b58c0f347d5ff53d4342f5372b2662}
      \strng{authorfullhash}{b1b58c0f347d5ff53d4342f5372b2662}
      \field{extraname}{1}
      \field{labelalpha}{Nut17}
      \field{sortinit}{N}
      \field{sortinithash}{98cf339a479c0454fe09153a08675a15}
      \field{labelnamesource}{author}
      \field{labeltitlesource}{title}
      \field{booktitle}{Proceedings of 25th Annual Symposium on Algorithms (ESA)}
      \field{title}{On the Tree Augmentation Problem}
      \field{year}{2017}
      \field{pages}{61:1\bibrangedash 61:14}
      \range{pages}{-1}
    \endentry
    \entry{nutov_2020_approximation}{inproceedings}{}
      \name{author}{1}{}{%
        {{hash=b1b58c0f347d5ff53d4342f5372b2662}{%
           family={Nutov},
           familyi={N\bibinitperiod},
           given={Z.},
           giveni={Z\bibinitperiod}}}%
      }
      \strng{namehash}{b1b58c0f347d5ff53d4342f5372b2662}
      \strng{fullhash}{b1b58c0f347d5ff53d4342f5372b2662}
      \strng{bibnamehash}{b1b58c0f347d5ff53d4342f5372b2662}
      \strng{authorbibnamehash}{b1b58c0f347d5ff53d4342f5372b2662}
      \strng{authornamehash}{b1b58c0f347d5ff53d4342f5372b2662}
      \strng{authorfullhash}{b1b58c0f347d5ff53d4342f5372b2662}
      \field{extraname}{2}
      \field{labelalpha}{Nut21}
      \field{sortinit}{N}
      \field{sortinithash}{98cf339a479c0454fe09153a08675a15}
      \field{labelnamesource}{author}
      \field{labeltitlesource}{title}
      \field{booktitle}{\!Proceed\-ings of the 16th International Computer Science Symposium in Russia (CSR)}
      \field{title}{Approximation Algorithms for Connectivity Augmentation Problems}
      \field{year}{2021}
      \field{pages}{321\bibrangedash 338}
      \range{pages}{18}
    \endentry
    \entry{schrijver_2003_combinatorial}{book}{}
      \name{author}{1}{}{%
        {{hash=da15a4ad2834bbe5dfa4c05f50489172}{%
           family={Schrijver},
           familyi={S\bibinitperiod},
           given={A.},
           giveni={A\bibinitperiod}}}%
      }
      \list{publisher}{1}{%
        {Springer}%
      }
      \strng{namehash}{da15a4ad2834bbe5dfa4c05f50489172}
      \strng{fullhash}{da15a4ad2834bbe5dfa4c05f50489172}
      \strng{bibnamehash}{da15a4ad2834bbe5dfa4c05f50489172}
      \strng{authorbibnamehash}{da15a4ad2834bbe5dfa4c05f50489172}
      \strng{authornamehash}{da15a4ad2834bbe5dfa4c05f50489172}
      \strng{authorfullhash}{da15a4ad2834bbe5dfa4c05f50489172}
      \field{labelalpha}{Sch03}
      \field{sortinit}{S}
      \field{sortinithash}{c319cff79d99c853d775f88277d4e45f}
      \field{labelnamesource}{author}
      \field{labeltitlesource}{title}
      \field{isbn}{978-3-540-44389-6}
      \field{title}{Combinatorial Optimization, Polyhedra and Efficiency}
      \field{year}{2003}
    \endentry
    \entry{traub_2021_better}{misc}{}
      \name{author}{2}{}{%
        {{hash=86435d42fde1aef89473f886121a79eb}{%
           family={Traub},
           familyi={T\bibinitperiod},
           given={V.},
           giveni={V\bibinitperiod}}}%
        {{hash=159a9da7df6e4b5c4e08c591850259b4}{%
           family={Zenklusen},
           familyi={Z\bibinitperiod},
           given={R.},
           giveni={R\bibinitperiod}}}%
      }
      \strng{namehash}{343133ca0ca09bfc90e3942aa9a3dcfb}
      \strng{fullhash}{343133ca0ca09bfc90e3942aa9a3dcfb}
      \strng{bibnamehash}{343133ca0ca09bfc90e3942aa9a3dcfb}
      \strng{authorbibnamehash}{343133ca0ca09bfc90e3942aa9a3dcfb}
      \strng{authornamehash}{343133ca0ca09bfc90e3942aa9a3dcfb}
      \strng{authorfullhash}{343133ca0ca09bfc90e3942aa9a3dcfb}
      \field{labelalpha}{TZ21}
      \field{sortinit}{T}
      \field{sortinithash}{51f9faf24c60c62ca764a77f78cf5666}
      \field{labelnamesource}{author}
      \field{labeltitlesource}{title}
      \field{note}{\url{https://arxiv.org/abs/2104.07114}}
      \field{title}{A Better-Than-2 Approximation Algorithm for Weighted Tree Augmentation}
      \field{year}{2021}
    \endentry
    \entry{zelikovsky_1996_better}{report}{}
      \name{author}{1}{}{%
        {{hash=f2148fd31fa0180fe151b0b5f491f844}{%
           family={Zelikovsky},
           familyi={Z\bibinitperiod},
           given={A.},
           giveni={A\bibinitperiod}}}%
      }
      \list{institution}{1}{%
        {University of Virginia}%
      }
      \strng{namehash}{f2148fd31fa0180fe151b0b5f491f844}
      \strng{fullhash}{f2148fd31fa0180fe151b0b5f491f844}
      \strng{bibnamehash}{f2148fd31fa0180fe151b0b5f491f844}
      \strng{authorbibnamehash}{f2148fd31fa0180fe151b0b5f491f844}
      \strng{authornamehash}{f2148fd31fa0180fe151b0b5f491f844}
      \strng{authorfullhash}{f2148fd31fa0180fe151b0b5f491f844}
      \field{labelalpha}{Zel96}
      \field{sortinit}{Z}
      \field{sortinithash}{8f7b480688e809b50b6f6577b16f3db5}
      \field{labelnamesource}{author}
      \field{labeltitlesource}{title}
      \field{note}{CS-96-06}
      \field{title}{Better approximation bounds for the network and Euclidean Steiner tree problems}
      \field{type}{techreport}
      \field{year}{1996}
    \endentry
  \enddatalist
\endrefsection
\endinput

\end{filecontents*}

\usepackage{xpatch}

\makeatletter
\patchcmd\blx@bblinput{\blx@blxinit}
                      {\blx@blxinit
                      }{}{\fail}
\makeatother

\usepackage[vlined,ruled,algo2e]{algorithm2e}
\SetAlgoSkip{bigskip}
\SetAlgoInsideSkip{smallskip}

\usepackage{mdframed}

\usepackage{bigfoot}
\interfootnotelinepenalty=10000

\usepackage[pdfpagelabels,bookmarks=false]{hyperref}
\hypersetup{colorlinks, linkcolor=darkblue, citecolor=darkgreen, urlcolor=darkblue}

\definecolor{darkblue}{rgb}{0,0,0.38}
\definecolor{darkred}{rgb}{0.8,0,0}
\definecolor{darkgreen}{rgb}{0.1,0.35,0}

\DeclareMathOperator{\apex}{apex}

\newcommand\OPT{\ensuremath{\mathrm{OPT}}}
\newcommand\Drop{\ensuremath{\mathrm{Drop}}}
\renewcommand{\epsilon}{\varepsilon}

\def\cupp{\stackrel{.}{\cup}}
\def\bigcupp{\stackrel{.}{\bigcup}}

\makeatletter
\let\@@pmod\pmod
\DeclareRobustCommand{\pmod}{\@ifstar\@pmods\@@pmod}
\def\@pmods#1{\mkern8mu({\operator@font mod}\mkern 6mu#1)}
\makeatother

\makeatletter
\let\@@mod\mod
\DeclareRobustCommand{\mod}{\@ifstar\@mods\@@mod}
\def\@mods#1{\mkern8mu{\operator@font mod}\mkern 6mu#1}
\makeatother

\def\Cscr{\mathcal{C}}

\def\Mscr{\mathcal{M}}

\def\Pscr{\mathcal{P}}

\newbool{soda_submission}

\makeatletter
\def\@fnsymbol#1{\ensuremath{\ifcase#1\or *\or %
\ddagger\or
    \mathsection\or \mathparagraph\or \|\or **\or \dagger\dagger
    \or \ddagger\ddagger \else\@ctrerr\fi}}
\makeatother

\title{Local Search for Weighted Tree Augmentation and Steiner Tree%
\ifbool{soda_submission}{}{\thanks{
This project received funding from Swiss National Science Foundation grant 200021\_184622 and the European Research Council (ERC) under the European Union's Horizon 2020 research and innovation programme (grant agreement No 817750).
}}%
} 

\ifbool{soda_submission}{
\author{}
}{
\author{
Vera Traub\thanks{
Department of Mathematics, ETH Zurich, Zurich, Switzerland.
Email: \href{mailto:vera.traub@ifor.math.ethz.ch}%
{vera.traub@ifor.math.ethz.ch}.
}
\and
Rico Zenklusen\thanks{
Department of Mathematics, ETH Zurich, Zurich, Switzerland.
Email: \href{mailto:ricoz@ethz.ch}%
{ricoz@ethz.ch}.}
}
}

\date{}

\begin{document}

\maketitle
\thispagestyle{empty}
\addtocounter{page}{-1}
\enlargethispage{-1cm}

\begin{abstract}
We present a technique that allows for improving on some relative greedy procedures by well-chosen (non-oblivious) local search algorithms. Relative greedy procedures are a particular type of greedy algorithm that start with a simple, though weak, solution, and iteratively replace parts of this starting solution by stronger components. Some well-known applications of relative greedy algorithms include approximation algorithms for Steiner Tree and, more recently, for connectivity augmentation problems.

The main application of our technique leads to a $(1.5+\epsilon)$-approximation for Weighted Tree Augmentation, improving on a recent relative greedy based method with approximation factor $1+\ln 2 + \epsilon\approx 1.69$.
Furthermore, we show how our local search technique can be applied to Steiner Tree, leading to an alternative way to obtain the currently best known approximation factor of $\ln 4 + \epsilon$. Contrary to prior methods, our approach is purely combinatorial without the need to solve an LP. Nevertheless, the solution value can still be bounded in terms of the well-known hypergraphic LP, leading to an alternative, and arguably simpler, technique to bound its integrality gap by $\ln 4$.
\end{abstract}
 
\ifbool{soda_submission}{}{
\begin{tikzpicture}[overlay, remember picture, shift = {(current page.south east)}]
\begin{scope}[shift={(-1.1,2.5)}]
\def\hd{2.5}
\node at (-2*\hd,0) {\includegraphics[height=0.5cm]{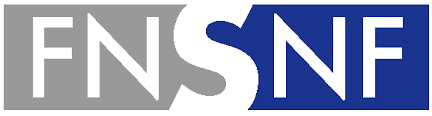}};
\node at (-\hd,0) {\includegraphics[height=1.0cm]{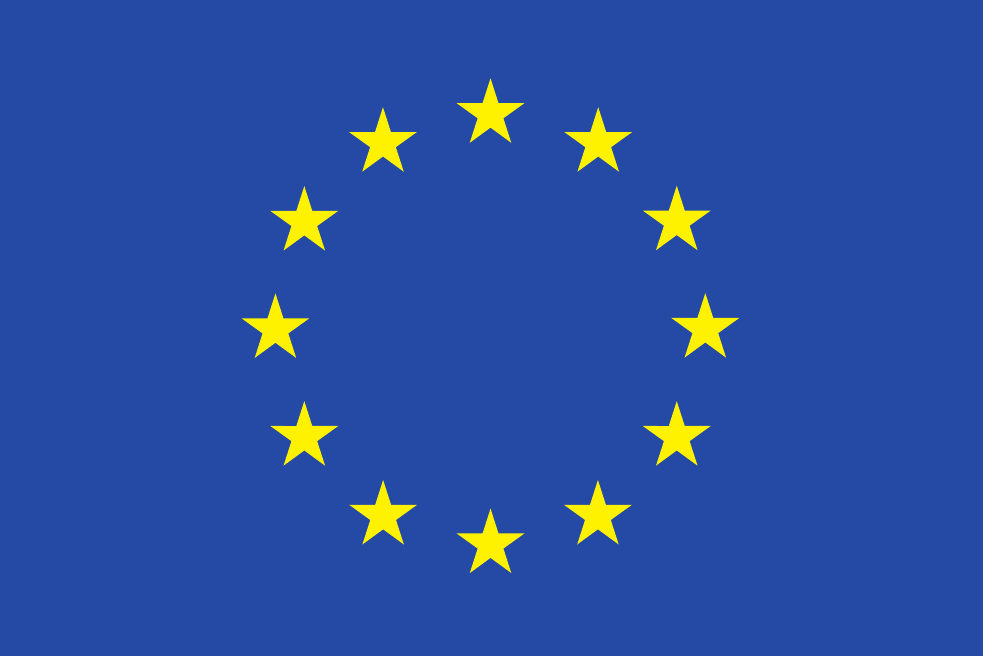}};
\node at (-0.2*\hd,0) {\includegraphics[height=1.2cm]{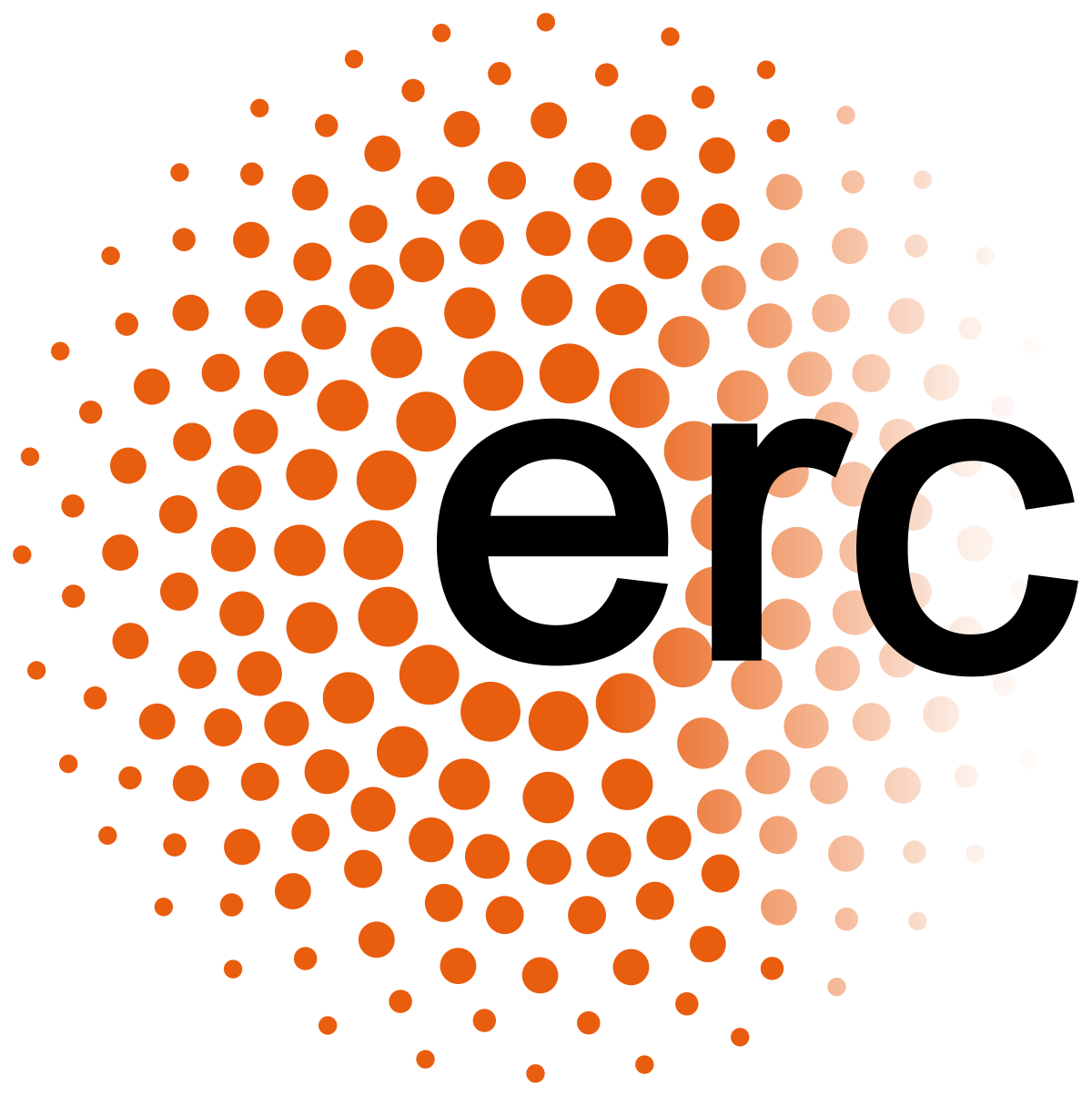}};
\end{scope}
\end{tikzpicture}
}

\clearpage

\section{Introduction}\label{sec:intro}

Relative greedy procedures, also known as relative greedy heuristics, have been employed in particular for different covering problems to obtain improved approximation guarantees compared to what one would obtain with a more canonical approach.
They follow a simple yet powerful strategy.
They start with a simple well-structured solution, with a weak approximation guarantee, and then successively improve this starting solution by replacing parts of it with cheaper components.
This approach was first used by \textcite{zelikovsky_1996_better} in the context of the Steiner tree problem.
More precisely, \citeauthor{zelikovsky_1996_better} started with a minimum spanning tree over the terminals, which is a well-known $2$-approximation for the Steiner tree problem, and then replaced some of the spanning tree edges by cheap Steiner components, i.e., subgraphs that connect several terminals at small cost.
This led to a method with an approximation guarantee of $1+\ln 2 + \epsilon < 1.7$.
(We remark that there have been later improvements in the approximation guarantee for Steiner Tree, leading to the currently best factor of $\ln4 + \epsilon < 1.39$ by \textcite{byrka_2013_steiner} (see also~\cite{goemans_2012_matroids}).)

More recently, starting with work of \textcite{cohen_2013_approximation}, relative greedy algorithms have found several applications in connectivity augmentation problems and beyond (see also~\cite{nutov_2019_approximating,nutov_2020_approximation,traub_2021_better}).
In particular, very recently, a first better-than-two approximation for Weighted Tree Augmentation was derived through a relative greedy approach by \textcite{traub_2021_better}.
The Weighted Tree Augmentation Problem (WTAP) is a very elementary and heavily studied connectivity augmentation problem defined as follows.
Given is a spanning tree $G=(V,E)$ with a set $L\subseteq \left(\!\begin{smallmatrix} V\\ 2 \end{smallmatrix}\!\right)$ of candidate edges to be added to $G$, which are also called \emph{links}, and positive link weights $w\colon L \to \mathbb{R}_{> 0}$.
The task is to find a minimum weight link set $F\subseteq L$ such that the graph $(V,E\cup F)$ is $2$-edge-connected.
WTAP can easily be seen to capture the problem of increasing the edge-connectivity of an arbitrary connected graph $G$ from $1$ to $2$, because one can contract all $2$-edge-connected components of $G$ to obtain a spanning tree.
More generally, also the problem of increasing the edge-connectivity of a $k$-edge-connected graph from $k$ to $k+1$ can be reduced to WTAP whenever $k$ is odd (see, e.g.,~\cite{cheriyan_1999_2-coverings}).

In a similar spirit as for the Steiner tree problem, the relative greedy algorithm of~\cite{traub_2021_better} for WTAP starts with a simple WTAP solution $F_0\subseteq L$ that is only guaranteed to be a $2$-approximation.
The solution $F_0$ then gets iteratively improved.
In a general iteration, the algorithm has a current WTAP solution of the form $(F_0\setminus D)\cup C$, where $D\subseteq F_0$ is a set of links from the initial solution that have been replaced in prior iterations by a cheaper set $C\subseteq L$, i.e., $w(C)<w(D)$.\footnote{When formalizing relative greedy algorithms, it is sometimes convenient to allow $w(C)=w(D)$, in which case no strict improvement is obtained. However, for this brief sketch of how relative greedy algorithms work, we do not consider this possibility.}
The algorithm then seeks to find, \emph{only among the not-yet-removed links of $F_0$}, i.e., $F_0\setminus D$, a set $\overline{D}\subseteq F_0 \setminus D$ together with a replacement set $\overline{C}\subseteq L$ such that
\begin{enumerate}
\item $(F_0\setminus (D\cup \overline{D})) \cup (C \cup \overline{C})$ is a WTAP solution, and
\item $w(\overline{C}) < w(\overline{D})$,
\end{enumerate}
which implies that the new WTAP solution is cheaper than the previous one, i.e., $w((F_0\setminus (D\cup \overline{D})) \cup (C \cup \overline{C})) < w((F_0\setminus D) \cup C)$.
Hence, the crucial challenge in designing relative greedy procedures is to show the existence of such an improving pair $(\overline{C}, \overline{D})$ (and obtain a way to efficiently find such a pair).
This is typically done through an averaging argument.
For example, in the context of WTAP, an optimal WTAP solution $\OPT\subseteq L$ is carefully partitioned into well-structured components $C_1,\ldots, C_p$. 
Then the existence of a good improving pair is implied by showing that a randomly selected component $\overline{C}$ among $C_1,\ldots, C_p$ allows for removing a link set $\overline{D}\subseteq F_0\setminus C$ such that, in expectation, $w(\overline{C}) < w(\overline{D})$.

An arguably weak spot of relative greedy approaches is that they only seek improvements with respect to the part of the solution that is left from the starting solution, i.e., $F_0\setminus D$.
Hence, potential gains that could be obtained by possibly removing some of the links in $C$, which got added later, are not considered.
This is because links in $C$ are much less structured than the once carefully chosen in the starting solution, making it difficult to develop methods that gain on those links.

The goal of this paper is to show how local search algorithms can be designed that address this problem, leading to stronger approximation guarantees and further insights.
The local search approach we suggest is a so-called non-oblivious one.
More precisely, instead of measuring progress of the approach solely in terms of how the value of the current solution improves, we introduce a well-chosen potential function that, loosely speaking, also measures whether a replacement step leads to a solution that is easier to improve in future iterations.

\subsection{Our results}

The main result based on our new non-oblivious local search approach is the currently best approximation algorithm for WTAP.
\begin{theorem}\label{thm:mainWTAP}
For any $\epsilon >0$, there is a $(1.5+\epsilon)$-approximation algorithm for WTAP.
\end{theorem}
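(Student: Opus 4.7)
The plan is to prove the theorem by designing a non-oblivious local search whose output meets the $(1.5+\epsilon)$-approximation guarantee. The overall architecture mirrors the relative greedy scheme of Traub and Zenklusen: begin with a simple, structurally clean $2$-approximate solution $F_0\subseteq L$ (e.g., one with the laminar/up-link structure exploited in prior WTAP work), and then repeatedly improve it. The departure is that, instead of only permitting removals from the untouched portion $F_0\setminus D$ of the starting solution, the local moves will be allowed to touch any link of the current solution $F$, and improvement will be measured against a carefully designed potential function $\Phi$ rather than just the cost $w(F)$.

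Concretely, I would (i) define a class of allowed local moves parameterized by a constant $k=k(\epsilon)$: a move picks a constant-complexity "witness" region in the tree and swaps at most $k$ current links for at most $k$ replacement links so that $(V,E\cup F)$ remains $2$-edge-connected; (ii) define a potential $\Phi(F)=w(F)+\sum_{\ell\in F}\phi(\ell)$ where $\phi(\ell)\ge 0$ is a correction term that penalizes links which are structurally "hard to remove" in the future, thereby encoding the non-oblivious aspect of the search; and (iii) run local search that applies any $(1+\delta)$-improving $\Phi$-move until none exist, which gives polynomial running time in the usual way provided $k$ and $\delta$ are constants.

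For the analysis, I take an optimal solution $\OPT$ and decompose it, following the WTAP structure theorems underlying prior $1.5$-type algorithms for unweighted TAP and the component partitions used in the relative greedy works, into well-behaved small components $C_1,\ldots,C_p$ on which an exchange argument can be carried out. For each $C_i$, I would exhibit a feasibility-preserving "response" set $D_i\subseteq F$ of current links that $C_i$ can replace, and bound the net change in $\Phi$ produced by the move $F\mapsto (F\setminus D_i)\cup C_i$ from above in terms of $w(C_i)$ and from below in terms of $w(D_i)$ together with changes in the $\phi$-terms. Summing over $i$ and averaging — using that at a local optimum every move is $\Phi$-non-improving — should yield $\Phi(F)\le(1.5+\epsilon)\,w(\OPT)$, and since $\phi\ge 0$ implies $w(F)\le \Phi(F)$, the approximation factor follows.

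The hard part will be the design of $\phi$, which has to reconcile two opposing tensions. On the one hand $\phi$ must be rich enough that the $\Phi$-exchange argument extracts the factor $1.5$ instead of the $1+\ln 2$ obtained by an oblivious relative greedy analysis: its role is precisely to charge, at the moment a link is added, for the later difficulty of removing it, so that gains one would ordinarily "miss" when a later component overlaps $C$ rather than $F_0\setminus D$ are already booked into the potential. On the other hand, $\phi$ must stay bounded in terms of $w(\OPT)$, so that $\Phi(F)\le(1.5+\epsilon)w(\OPT)$ implies the same bound on $w(F)$. A secondary obstacle is showing that moves of constant size $k(\epsilon)$ suffice to drive $\Phi$ to within a $(1+\delta)$-factor of local optimum; this will require a decomposition lemma asserting that any large $\Phi$-improving exchange splits into $\Phi$-improving exchanges of constant size, so that no super-constant locality is needed to guarantee the factor.
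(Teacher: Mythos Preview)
Your proposal has the right high-level architecture---a non-oblivious local search guided by a potential $\Phi(F)=w(F)+\sum_{\ell\in F}\phi(\ell)$ with $\phi\ge 0$---but it is a plan rather than a proof, and the pieces you flag as open are exactly the ones that carry the argument. The paper's construction is this: for each $\ell\in F$ maintain a \emph{witness set} $W_\ell$ of at most two up-links obtained by splitting $\ell$ at its apex, and set $\Phi(F)=\sum_{\ell\in F} H_{|W_\ell|}\, w(\ell)$. The factor $1.5$ is precisely $H_2=\tfrac32$: adding a link costs at most $\tfrac32\, w(\ell)$ in potential, while deleting one up-link from $W_\ell$ earns at least $w(\ell)/|W_\ell|$, which is the weight $\overline{w}$ spread onto the up-link solution $U=\bigcup_{\ell\in F} W_\ell$. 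Existence of a good move then follows by applying the decomposition theorem of~\cite{traub_2021_better} to $U$ with weights~$\overline{w}$. Without this specific $\phi$ and the up-link/witness mechanism that ties it to $\Drop_U$, the averaging step does not produce $1.5$; you yourself identify designing $\phi$ as ``the hard part,'' and indeed it is essentially the entire content.

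Separately, your description of the local moves is off in a way that matters. The moves are not ``swap at most $k$ links for at most $k$ links'': one adds a \emph{$k$-thin} component $C$ (its number of links is unbounded; only the per-vertex overlap is at most $k$) and removes all of $\Drop_U(C)$, also of unbounded size. Finding a best such move uses the dynamic program of~\cite{traub_2021_better}, not constant-size enumeration; and no lemma of the form ``large $\Phi$-improving exchanges split into constant-size ones'' is used, nor would such a statement hold here. Finally, the termination conclusion is on $w(F)$, not on $\Phi(F)$: local optimality plus the decomposition yields $(1-\Theta(\epsilon))\,w(F)\le 1.5\,w(\OPT)$ directly, whereas $\Phi(F)$ can be as large as $\tfrac32\,w(F)$.
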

This result improves on a recent approximation algorithm for WTAP with approximation guarantee $1+\ln 2 + \epsilon\approx 1.69$~\cite{traub_2021_better}.
Until recently, no better-than-two approximation was known for WTAP, unlike for its unweighted version where all links have unit weight, which is often simply called the Tree Augmentation Problem (TAP).
For TAP, several approaches have been developed that reach approximation factors of $1.5$ or $1.5+\epsilon$, respectively~\cite{kortsarz_2016_simplified,cheriyan_2018_approximating_b,fiorini_2018_approximating}.%
\footnote{The $(1.5+\epsilon)$-approximation by~\textcite{fiorini_2018_approximating}, which builds up on prior work by~\textcite{adjiashvili_2018_beating}, even works for WTAP as long as the ratio between largest to smallest weight is bounded by a constant.
Later results by~\textcite{grandoni_2018_improved} and~\textcite{cecchetto_2021_bridging} allow for obtaining factors below $1.5+\epsilon$ for this case.
Moreover, \textcite{nutov_2017_tree} presented a technique with which the $(1.5+\epsilon)$-approximation of~\cite{fiorini_2018_approximating} can be extended to instances where the ratio of largest to smallest weight is logarithmic in the number of vertices.}
Only recently, approximation guarantees below $1.5$~have been achieved for TAP~\cite{grandoni_2018_improved,cecchetto_2021_bridging}, with the currently best factor being~$1.393$~\cite{cecchetto_2021_bridging}. Theorem~\ref{thm:mainWTAP} narrows the gap between the unweighted and weighted version significantly.
We highlight that the canonical LP relaxation for WTAP, known as the \emph{cut LP}, is known to have an integrality gap of at least $1.5$~\cite{cheriyan_2008_integrality}.
Hence, if, contrary to our combinatorial approach, one would like to design an LP-based approach for WTAP improving on Theorem~\ref{thm:mainWTAP}---beyond removing the arbitrarily small error $\epsilon$---a stronger LP relaxation would be needed.

Leveraging the same non-oblivious local search technique as we use for WTAP, we present a local search algorithm for Steiner Tree, which, for every constant $\epsilon >0$, leads to an $(\ln 4 + \epsilon)$-approximation.
This matches the currently best approximation for the Steiner tree Problem~\cite{byrka_2013_steiner} (see also~\cite{goemans_2012_matroids} for a variation of the approach in~\cite{byrka_2013_steiner} with an LP-based analysis).
Even though we reuse key insights of prior approaches, our local search procedure, contrary to prior techniques, does not need to solve a linear program.
Despite this, it allows for deriving that the well-known hypergraphic Steiner tree relaxation has an integrality gap of no more than $\ln 4$.
This has been shown previously in~\cite{goemans_2012_matroids}; however, our proof is arguably simpler than the one presented in~\cite{goemans_2012_matroids}, which is based on building up a thorough understanding of how a highly fractional LP solution can be modified iteratively. 
Furthermore, the analysis of our approach is based on a classic exchange property about spanning trees (or, more generally, matroids).
In this way, we achieve the approximation ratio $\ln 4 + \epsilon$ without relying on the bride lemma, which was a key technical component of the analysis in~\cite{byrka_2013_steiner}.

\subsection{Organization of the paper}

We start by showing in Section~\ref{sec:WTAP} our main result, Theorem~\ref{thm:mainWTAP}.
This allows us to showcase our approach and its advantages compared to the previously strongest results, which is based on a relative greedy approach.
In Section~\ref{sec:steinerTree}, we show how our approach can be extended to the Steiner tree problem.
Finally, we discuss crucial differences between our approach and iterative randomized rounding, which led to the first $(\ln 4 + \epsilon)$-approximation for the Steiner tree problem.
Despite the fact that both approaches lead to the same approximation guarantee for the Steiner tree problem, there are significant barriers to apply iterative randomized rounding, or similar approaches, in the context of WTAP.
The reason is that these approaches require a stronger decomposition result.
We expand on this in Section~\ref{sec:comparison}, which helps to develop a better understanding of how our approach relates to prior techniques.
\section{A $(1.5+\epsilon)$-approximation for WTAP}\label{sec:WTAP}

In this section we present our local search algorithm for WTAP and prove our main result, Theorem~\ref{thm:mainWTAP}.
After introducing some basic terminology in Section~\ref{sec:preliminaries_wtap}, we first recap the relative greedy algorithm from~\cite{traub_2021_better} (Section~\ref{sec:recap_relative_greedy}).
This allows us to discuss some results that will be reused in our approach, and helps to understand how we improve on prior approaches.
We then give an overview of our new local search algorithm (Section~\ref{sec:new_algo_outline}).
Finally, we define the potential function that we use to measure progress in our algorithm (Section~\ref{sec:potential}) and describe the details of the algorithm and its analysis (Section~\ref{sec:details_algo_wtap}).

\subsection{Preliminaries}\label{sec:preliminaries_wtap}

Recall that an instance of WTAP consists of a spanning tree $G=(V,E)$ and a set $L\subseteq \left(\!\begin{smallmatrix} V\\ 2 \end{smallmatrix}\!\right)$ of links with weights $w\colon L \to \mathbb{R}_{> 0}$.
For a link $\ell=\{a,b\}\in L$, we denote by $P_{\ell}\subseteq E$ the set of edges that are contained on the unique $a$-$b$ path in $G$.
It is well-known and easy to see that a set $F\subseteq L$ is a WTAP solution if and only if every edge of the tree $G$ is contained in one of the paths $P_{\ell}$ with $\ell\in F$.
Thus, we can naturally view WTAP as a covering problem, where we want to cover the edge set $E$ of the tree $G$ by links.

It is often useful to assume that the given WTAP instance is a so-called \emph{shadow complete} instance.
A \emph{shadow} of a link $\ell =\{a,b\}\in L$ is a link $\overline{\ell}$ with $P_{\overline{\ell}} \subseteq P_{\ell}$, or equivalently, a link $\overline{\ell}=\{\overline{a},\overline{b}\}$ such that both endpoints $\overline{a}$ and $\overline{b}$ lie on the $a$-$b$ path in $G$.
We can always assume that, for each link $\ell\in L$, the set $L$ contains also all shadows of $\ell$ and that the weight of each shadow of $\ell$ is no larger than the weight of $\ell$,
in which case we call the WTAP instance \emph{shadow complete}.
This assumption is without loss of generality because we can add for every link $\ell\in L$ all its shadows and give weight $w(\ell)$ to each of them; then, in any WTAP solution $F$, we can replace any such shadow of $\ell$ by the link $\ell$ itself, obtaining a solution of same weight.

\subsection{Recap of the Relative Greedy Algorithm for WTAP}\label{sec:recap_relative_greedy}

In this section we recall the relative greedy algorithm from~\cite{traub_2021_better}.
In this algorithm, we first fix an arbitrary root $r$ of the tree $G$ and compute a WTAP solution consisting only of up-links.
An \emph{up-link} is a link $\ell$ where one of the endpoints of the link is an ancestor of the other endpoint, i.e., one endpoint of $\ell$ lies on the unique path in the tree $G$ from the root to the other endpoint of $\ell$. 
In the following, we denote the set of up-links by $L_{\mathrm{up}}\subseteq L$.
The WTAP solution $U \subseteq L_{\mathrm{up}}$ that the relative greedy algorithm starts with is a $2$-approximation, i.e., we have $w(U) \le 2 \cdot w(\OPT)$, where $\OPT$ denotes an optimal WTAP solution.

The well-known lemma below (see, e.g.,~\cite{cohen_2013_approximation}) allows us to assume without loss of generality that the paths $P_u$ with $u\in U$ are pairwise disjoint.
\begin{lemma}\label{lem:shorten_up_links}%
\footnote{
The lemma readily follows by the following two-step procedure to modify $U$.
First, successively delete redundant links in $U$ to obtain $U'$, i.e., links that can be deleted while maintaining a solution.
Second, each link $u\in U'$, in an arbitrary order, gets shortened to the smallest shadow $\overline{u}$ of $u$ for which the shortening maintains a solution.
The resulting link set $U''$ is such that the paths $P_u$ for $u\in U''$ are disjoint; indeed, if we had $P_{u_1}\cap P_{u_2}\neq \emptyset$ for distinct $u_1,u_2\in U''$, then $u_1$ or $u_2$ could have been further shortened.
}
Let $(G=(V,E),L,w)$ be a shadow-complete WTAP instance, and let $U\subseteq L_{\mathrm{up}}$ be a WTAP solution.
Then we can efficiently transform $U$ into a WTAP solution for which the paths $P_u$ with $u\in U$ are disjoint by replacing some links $u\in U$ by one of its shadows and possibly removing some links from $U$.
\end{lemma}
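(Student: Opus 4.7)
The plan is to implement the two-step procedure sketched in the footnote and verify its correctness via a structural analysis. In the first step, I would iteratively discard any link of $U$ whose removal still leaves a valid WTAP solution, arriving at an irredundant set $U'$; by construction, every link $u \in U'$ uniquely covers at least one edge, i.e.\ an edge of $P_u$ covered by no other link of $U'$. In the second step, I would fix an arbitrary order of $U'$ and, for each link $u$, replace $u$ by the smallest shadow $\bar u$ of $u$ whose substitution still keeps the current set a WTAP solution. The key observation here is that shadows of an up-link are exactly its vertical sub-paths, and at the time of shortening, the edges of $P_u$ not yet covered by the other (possibly already shortened) links of the current set admit a unique inclusion-minimal contiguous super-path inside $P_u$, namely the vertical segment spanning from the lowest to the highest such uncovered edge. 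Shadow-completeness guarantees this shortened link is available in $L$, so the procedure stays within the given instance.

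The central claim is that the resulting link set $U''$ has pairwise disjoint paths. Suppose for contradiction that $u_1, u_2 \in U''$ satisfy $P_{u_1} \cap P_{u_2} \neq \emptyset$. Since both are up-links, each $P_{u_i}$ is a vertical ancestor-descendant path and the intersection is a contiguous vertical sub-path. I would split into two structural cases based on the positions of the bottom endpoints of $u_1, u_2$. In the first case, the two bottoms lie on a common root-to-leaf path, and the two links either nest (forcing one to be a shadow of the other, contradicting irredundancy of $U'$) or overlap in a ``staircase'' pattern. In the second case, the two bottoms lie in different subtrees of the deepest common vertex of the paths, producing a ``Y'' pattern in which the shared portion lies along a common root-ward segment. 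In every pattern, the shared portion of $P_{u_1}$ and $P_{u_2}$ is covered by both of them, so at the moment of shortening, each $u_i$'s set $N_{u_i}$ of still-uncovered edges is forced to lie entirely within the \emph{non-shared} portion of $P_{u_i}$. Hence the smallest shadow $\bar P_{u_i}$ is confined to that non-shared portion of $P_{u_i}$, which is disjoint from $P_{u_j}$ for $j \neq i$, contradicting the assumed intersection.

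The main obstacle will be handling the order-dependence of the shortening pass: when a link is shortened early, it may lose coverage of part of a shared segment, thereby enlarging the $N_{u_j}$ of a subsequently shortened link and potentially forcing it to span across the (now un-shared) former-shared segment. I would argue that even in this situation, the two resulting paths share at most a single vertex and no edges, because the link processed later is confined to its own $P_{u_j}$ while the link processed earlier has already retreated into the complement. If a one-pass argument runs into trouble in an edge case, the fallback is to iterate the remove-and-shorten procedure until stable; this terminates since $\sum_{u} |P_u|$ strictly decreases in every non-trivial iteration, and at the stable state no further shortening is possible, which is exactly the contradiction used above.
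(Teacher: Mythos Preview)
Your approach is exactly the paper's (the footnote is the entire proof there). The plan is sound, but two details in your case analysis are off, and your worry about order-dependence is unnecessary.

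First, nesting of two links in $U''$ does not directly contradict irredundancy of $U'$: the links of $U''$ are shadows of those in $U'$, not the original links, so $U'$ can be irredundant while $U''$ contains a nested pair. Second, in your Y-case the non-shared portion of the link with the \emph{higher} top is not contiguous---it splits into a segment below $d=\mathrm{LCA}(b_1,b_2)$ and a segment above the other link's top---so you cannot conclude that \emph{both} shortened links are confined to their non-shared portions. Both issues are resolved by one uniform observation: whenever $P_{u_1''}\cap P_{u_2''}\neq\emptyset$ and $a_2$ is the (weakly) lower of the two top endpoints, the intersection is precisely the vertical segment from $d$ up to $a_2$, and in particular contains the \emph{top edge} of $P_{u_2''}$. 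That edge lies in $P_{u_1''}\subseteq P_{u_1}$, hence is covered by whichever version of $u_1$ was present when $u_2$ was processed; this contradicts that $u_2''$ is the smallest shadow covering the edges uniquely covered by $u_2$ at that moment.

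This same observation dissolves the order-dependence concern: since $P_{u_1''}\subseteq P_{u_1}$, it is irrelevant whether $u_1$ had already been shortened when $u_2$ was processed. More generally, shortening only removes coverage, so the set of edges uniquely covered by any link can only grow over time; hence once a link is shortened to its minimal shadow it remains minimal in all later states, and the single pass suffices without the iteration fallback.
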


In the relative greedy algorithm, we then improve the solution $U\subseteq L_{\mathrm{up}}$ as follows. 
We determine a well-chosen link set $C\subseteq L$, which we also call a component.
Then we add $C$ to the current WTAP solution and remove all up-links from $U\subseteq L_{\mathrm{up}}$ that become redundant, i.e., we remove all links in
\begin{equation*}
\Drop_U(C) \coloneqq \left\{u\in U\colon P_u \subseteq \bigcup_{\ell\in C} P_{\ell}\right\}\enspace.
\end{equation*}
This gets iterated until all up-links from the initial WTAP solution are replaced.
The links in the newly added components $C$ are not necessarily up-links and will never be removed.
The key difficulty lies in efficiently finding a good component $C$.
To this end, \textcite{traub_2021_better} restrict the search space to components $C\subseteq L$ that are \emph{$k$-thin}, which means that, for every vertex $v\in V$, there are at most $k$ links $\{a,b\}\in C$ for which $v$ lies on the $a$-$b$ path in $G$.

A key part of the analysis of the relative greedy algorithm is to show that there always exists a good $k$-thin component to add next.
This is achieved in~\cite{traub_2021_better} through a decomposition theorem (Theorem~\ref{thm:decomposition} below), which implies that for any set $U\subseteq L_{\mathrm{up}}$ of up-links for which the paths $P_u$, for $u\in U$, are disjoint, there exists a partition $\Cscr$ of the optimum solution $\OPT$ into $k$-thin components such that the following holds.
If we sample a component $C\in\Cscr$ uniformly at random, we have $\mathbb{E}[w(\Drop_U(C))] \ge (1-\epsilon) \frac{1}{|\Cscr|} w(U)$, where $\epsilon>0$ is an arbitrarily small fixed number.
Because $\mathbb{E}[w(C)] = \frac{1}{|\Cscr|} w(\OPT)$, this shows that as long as $w(U)$ is significantly larger than $w(\OPT)$, there is a $k$-thin component that we can use to improve our WTAP solution.

\begin{restatable}[Theorem~5 in \cite{traub_2021_better}]{theorem}{decompositionTheorem}\label{thm:decomposition}
Let $(G=(V,E),L,w)$ be a WTAP instance, $F\subseteq L$ be a WTAP solution, and let $U\subseteq L_{\mathrm{up}}$ be a set of up-links such that the sets $P_u$ with $u\in U$ are pairwise disjoint.
Then, for any $\epsilon >0$, there exists a partition $\Cscr$ of $F$ into $\lceil\sfrac{1}{\epsilon}\rceil$-thin sets and a set $R\subseteq U$ such that
\begin{enumerate}
\item for every $u\in U\setminus R$, there exists some $C \in \Cscr$ such that $P_u \subseteq \bigcup_{\ell\in C} P_{\ell}$, and  \label{item:links_outside_R_covered}
\item $w(R) \le \epsilon \cdot w(U)$.
\end{enumerate}
\end{restatable}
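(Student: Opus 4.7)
The plan is to construct the partition $\Cscr$ via a random-shift grouping scheme that exploits the disjointness of the up-link paths $\{P_u\}_{u\in U}$, combined with an averaging argument to bound $w(R)$.

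First, I would root the tree $G$ at an arbitrary vertex $r$ so that each up-link $u\in U$ has a well-defined deeper endpoint, and order the up-links of $U$ as $u_1,\dots,u_m$ by the DFS pre-order of their deeper endpoints. Since the paths $P_{u_1},\dots,P_{u_m}$ are pairwise disjoint, they occupy nonoverlapping ancestral segments of the tree, which together with this ordering lets us reason about coverage combinatorially. Then fix $k=\lceil 1/\epsilon\rceil$, pick a uniformly random shift $s\in\{0,\dots,k-1\}$, and partition the indices $\{1,\dots,m\}$ into blocks $B_0,B_1,\dots$ of at most $k$ consecutive indices, offset by $s$.

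Next, I would define $\Cscr=\{C_0,C_1,\dots\}$ by assigning each link $\ell\in F$ to exactly one component through a canonical anchor rule. A natural choice is: if $P_\ell$ intersects some $P_{u_i}$, anchor $\ell$ at the up-link $u_i$ with smallest index $i$ such that $P_{u_i}\cap P_\ell\neq\emptyset$ (equivalently, the one covering the topmost tree edge of $P_\ell$ in some consistent sense), and place $\ell$ into the $C_j$ corresponding to the block containing that anchor; links $\ell\in F$ whose paths avoid all $P_u$ can be placed by an auxiliary rule that does not spoil $k$-thinness. It then remains to verify the two required properties: that each $C_j$ is $k$-thin, which should follow by showing that at any vertex $v\in V$ the links of $F$ anchored to block $B_j$ and passing through $v$ correspond to anchors drawn from the at most $k$ up-links of $B_j$ whose disjoint paths lie in a local neighborhood of $v$ in the rooted tree; and the coverage property, that an up-link $u_i\notin R$ whenever all links of $F$ relevant to $P_{u_i}$ land in the same block as $u_i$.

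For the weight bound, a link $\ell\in F$ can cause an up-link $u_i$ to miss full coverage only when a block boundary separates $u_i$ from $\ell$'s anchor, and over the random shift this happens for at most one of the $k$ possible shift values per up-link. A charging argument of the form $\mathbb{E}[w(R)]\le \tfrac{1}{k}\,w(U)\le \epsilon\, w(U)$ then yields condition (ii), and a derandomization by trying all $k$ shifts makes the construction deterministic. The hardest part will be making the anchor rule precise enough that both $k$-thinness and the weight bound hold simultaneously — in particular, handling links of $F$ whose paths span the boundaries of several $P_u$-intervals, and ensuring that $k$-thinness is not broken by pathological configurations at vertices far from any $P_u$. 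I suspect the actual proof in \cite{traub_2021_better} uses a slightly different (perhaps more structural, laminar-family-based) argument, but the guiding intuition of "block-then-anchor with a random shift" should in any case drive the construction.
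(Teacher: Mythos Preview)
The paper does not prove this statement. Theorem~\ref{thm:decomposition} is quoted verbatim as ``Theorem~5 in \cite{traub_2021_better}'' and used as a black box; no proof or proof sketch appears anywhere in the present paper. So there is nothing here to compare your proposal against.

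That said, your sketch has a genuine gap in the $k$-thinness argument that would need to be closed before it could work. Your anchor rule assigns each $\ell\in F$ to a single up-link, and you want to conclude that at any vertex $v$ at most $k$ links of a component $C_j$ pass through $v$ because $|B_j|\le k$. But many links of $F$ can share the \emph{same} anchor and all pass through $v$, so bounding the number of distinct anchors by $k$ does not bound the number of links through $v$. More structurally, a DFS pre-order on the deeper endpoints of the $u_i$ does not impose any useful linear geometry on a branching tree: a single link $\ell\in F$ can go up one branch and down another, intersecting up-links with wildly separated DFS indices, so ``consecutive blocks of $k$ indices'' need not correspond to anything local in $G$. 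The coverage/charging side of your argument has a similar issue: an up-link $u_i$ can require links of $F$ whose anchors are far from $i$ in DFS order, so a single random shift does not obviously make the bad event a $1/k$ event. The actual proof in \cite{traub_2021_better} uses a more structural construction tied to how the links of $F$ cover each $P_u$ along the rooted tree, rather than a global linear ordering of $U$; if you want to reconstruct it, that is the direction to push.
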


\subsection{Improving the Approximation Guarantee through Local Search}\label{sec:new_algo_outline}

We now discuss how we improve on the algorithm from~\cite{traub_2021_better}. 
Instead of removing only redundant up-links that were part of the original $2$-approximation we started with, we also want to make progress by dropping links from components we added in previous iterations.
One reason why one might hope to obtain better solutions through such an approach is that in the analysis of the relative greedy algorithm we only used that the components in the partition $\Cscr$ of $\OPT$ cover the paths $P_u$ for the up-links $u\in U$.
But these components have the additional property that they cover all edges of the tree $G$, including those covered by the components added in earlier iterations.

However, this fact alone is not sufficient to obtain an improved approximation guarantee because covering the edges in $P_{\ell}$ for a link $\ell$ that was selected in a previous iteration could require many different components from $\Cscr$ and we can only remove $\ell$ once all edges in $P_{\ell}$ are covered by other links than $\ell$. In order to handle this, we use the simple and well-known observation that we can split every link $\ell\in L$ into (at most) two up-links that cover the same edges as $\ell$. Formally, for a link $\ell=\{a,b\}\in L\setminus L_{\mathrm{up}}$, we define 
\[
U_{\ell} \coloneqq  \{ \{a, \apex(\ell)\}, \{b,\apex(\ell)\} \} \subseteq L_{\mathrm{up}}\enspace,
\]
where $\apex(\ell)$ is the common ancestor of $a$ and $b$ in the tree $G$ that is farthest away from the root $r$.
For an up-link $\ell \in L_{\mathrm{up}}$ we define $U_{\ell}\coloneqq \{\ell\}$.
Then $U_{\ell}$ is indeed a set of at most two up-links that cover the same edges as the link $\ell$, i.e., $P_{\ell} = \bigcup_{u\in U_{\ell}} P_u$, and all links in $U_{\ell}$ are shadows of $\ell$.
Thus, as soon as we added components that cover all up-links in $U_{\ell}$, we can safely remove the link $\ell$ from our current WTAP solution.

In the analysis of our new algorithm we will apply the decomposition theorem from~\cite{traub_2021_better} not just to the set $U$ of up-links from the $2$-approximation we started with, but to the union of $U$ and the sets $U_{\ell}$ for all links $\ell$ in our current WTAP solution, which includes the components we added in previous iterations. 
(More precisely, we first apply Lemma~\ref{lem:shorten_up_links} and then Theorem~\ref{thm:decomposition} to this set of up-links to ensure that the assumptions of Theorem~\ref{thm:decomposition} are fulfilled.)
Indeed, if we now iteratively selected a random component $C$ in the resulting partition $\Cscr$ of $\OPT$, then this would lead to an improved approximation algorithm because we can remove a link $\ell$ as soon as the (at most) two up-links in $U_{\ell}$ are covered.
As each of these up-links is completely covered by a single component, intuitively there is a significant probability that this happens at some point during the  algorithm.

However, this algorithm is not feasible because without knowing $\OPT$ we cannot construct the partition $\Cscr$ of $\OPT$ into $k$-thin components.
Therefore, we define a suitable potential function $\Phi$, which we will formally define in Section~\ref{sec:potential}, and we will aim at finding components that lead to a large decrease of this potential $\Phi$.
The potential $\Phi$ (which is inspired by prior work~\cite{goemans_2012_matroids} in the context of the Steiner tree problem) also rewards partial progress, i.e., when a component is selected that only covers one of two links of some set $U_{\ell}$.
Using a dynamic programming algorithm from~\cite{traub_2021_better} we are able to find a $k$-thin component $C$ for which the decrease of the potential $\Phi$ is at least as large as the potential decrease that we could guarantee when we contracted a random component from the partition $\Cscr$ of an optimum solution.

It turns out that, with this improved algorithm, it is no longer necessary that the WTAP solution we start with is a $2$-approximation using only up-links.
Thus, our algorithm can naturally be described as a local search procedure that starts with an arbitrary solution $F$ and iteratively tries to find a $k$-thin component that can be used to decrease the potential $\Phi(F)$. As soon as we cannot anymore find a component that leads to a (significant) decrease of $\Phi$, our algorithm returns the current solution $F$.

Throughout the course of our algorithm we maintain
\begin{itemize}
\item a WTAP solution $F$, and
\item for every link $\ell=\{a,b\}\in F$, a non-empty set $W_{\ell} \subseteq L_{\mathrm{up}}$ of at most two shadows of $\ell$, called the \emph{witness set} of $\ell$, which we initially set to $U_{\ell} \coloneqq  \{ \{a, \apex(\ell)\}, \{b,\apex(\ell)\} \}$. 
\end{itemize}
We will always have the property that the disjoint union $U=\bigcupp_{\ell\in F} W_{\ell}$ of the witness sets is a WTAP solution and we will use Lemma~\ref{lem:shorten_up_links} to ensure that the paths $P_u$ with $u\in U$ are disjoint.
Whenever we add a new component $C\subseteq L$ to the solution $F$, we will remove all up-links in $\Drop_U(C)$ from  $U$ and from all witness sets. 
Moreover, for all $\ell\in C$, we add the links in the witness sets $W_{\ell}=U_{\ell}$ to $U$.
As soon as the witness set $W_{\ell}$ of a link $\ell\in F$ becomes empty, we remove the link $\ell$ from the WTAP solution $F$.
See Figure~\ref{fig:example_exchange} for an example.

\begin{figure}
\begin{center}
\begin{tikzpicture}[xscale=0.8,yscale=1.1]

\tikzset{
  prefix node name/.style={%
    /tikz/name/.append style={%
      /tikz/alias={#1##1}%
    }%
  }
}

\tikzset{
lks/.style={line width=2pt, densely dashed},
}

\newcommand\tree[2][]{

\begin{scope}[prefix node name=#1]

\begin{scope}[every node/.style={thick,draw=black,fill=black,circle,minimum size=0pt, inner sep=1.2pt, outer sep=1pt}]

\node (0) at (3,3.5) {};
\node (1) at (2,3) {};
\node (2) at (5.5,3) {};
\node (3) at (4,3) {};
\node (4) at (5,2.5) {};
\node (5) at (4,2) {};
\node (6) at (3,1.5) {};
\node (7) at (6,2) {};
\node (9) at (5,1.5) {};
\node (10) at (6,1) {};
\node (11) at (7,1.5) {};
\node (12) at (7,0.5) {};
\node (13) at (8,0.5) {};

\end{scope}

\node[above] (r) at (0) {$r$};

\begin{scope}[very thick]
\draw (1) -- (0);
\draw (3) -- (2);
\draw (0) -- (3) -- (4) -- (5) -- (6);
\draw (4) -- (7);
\draw (7) -- (9);
\draw (7) -- (10);
\draw (7) -- (11) -- (12);
\draw (11) -- (13);
\end{scope}

\end{scope}

}%

\begin{scope}[shift={(0,0)}]
\tree[f1-]{}
\node (f1) at (-1,2) {
\begin{minipage}{3.5cm}
Solutions $F$ and $U$ \\ 
before adding $C$.
\end{minipage}};
\end{scope}

\begin{scope}[shift={(8,0)}]
\tree[u1-]{}
\end{scope}

\begin{scope}[shift={(0,-5)}]
\tree[c-]{}
\node (c) at (-1,2) {
\begin{minipage}{3.5cm}
Component $C$ \\
and up-links $\bigcupp_{\ell \in C} U_{\ell}$.
\end{minipage}};
\end{scope}

\begin{scope}[shift={(8,-5)}]
\tree[cu-]{}
\end{scope}

\begin{scope}[shift={(0,-10)}]
\tree[f2-]{}
\node (f2) at (-1,2) {
\begin{minipage}{3.5cm}
Solutions $F$ and $U$ \\ 
after adding $C$.
\end{minipage}};
\end{scope}

\begin{scope}[shift={(8,-10)}]
\tree[u2-]{}
\end{scope}

\begin{scope}[lks]

\begin{scope}[orange!90!black]%
\draw (f1-1) to (f1-5);
\draw (f2-1) to (f2-5);
\draw[bend left] (u1-1) to (u1-0);
\draw (u1-5) to (u1-0);
\draw[bend left] (u2-1) to (u2-0);
\draw (u2-5) to (u2-0);
\end{scope}

\begin{scope}[violet]%
\draw (f1-3) to (f1-6);
\draw[bend left] (u1-6) to (u1-5);
\end{scope}

\begin{scope}[darkred]%
\draw (f1-9) to (f1-10);
\draw[bend left] (u1-9) to (u1-7);
\draw[bend left] (u1-7) to (u1-10);
\end{scope}

\begin{scope}[darkgreen]%
\draw (f1-2) to (f1-11);
\draw (f2-2) to (f2-11);
\draw[bend right] (u1-2) to (u1-3);
\draw[bend right] (u1-11) to (u1-4);
\draw[bend right] (u2-2) to (u2-3);
\end{scope}

\begin{scope}[yellow!60!black]%
\draw (f1-12) to (f1-13);
\draw (f2-12) to (f2-13);
\draw[bend left] (u1-12) to (u1-11);
\draw[bend left] (u1-11) to (u1-13);
\draw[bend left] (u2-12) to (u2-11);
\draw[bend left] (u2-11) to (u2-13);
\end{scope}

\begin{scope}[cyan]%
\draw (c-6) to (c-9);
\draw (f2-6) to (f2-9);
\draw[bend left] (cu-6) to (cu-4);
\draw (cu-4) to (cu-9);
\draw[bend left] (u2-6) to (u2-5);
\draw (u2-4) to (u2-9);
\end{scope}

\begin{scope}[blue] %
\draw (c-11) to (c-10);
\draw (f2-11) to (f2-10);
\draw[bend right] (cu-11) to (cu-7);
\draw[bend left] (cu-10) to (cu-7);
\draw[bend right] (u2-11) to (u2-7);
\draw[bend left] (u2-10) to (u2-7);
\end{scope}

\end{scope} %

\end{tikzpicture}
 \end{center}
\caption{Example of a local improvement step. The left column shows (from top to bottom) a WTAP solution $F$, a component $C$, and the new solution $F$ after adding $C$ through a local exchange step.
The right column shows the witness sets of the links shown in the left coloumn, i.e., the top and bottom right picture show the
 WTAP solution $U =\ \cupp_{\ell \in F} W_{\ell} \subseteq L_{\mathrm{up}}$ before and after the exchange step adding the component $C$, and the middle right picture shows $\cupp_{\ell\in C}U_{\ell}$. The witness set of a link is drawn in the same color as the link itself.
Note that in the WTAP solution $U$ we shorten up-links, i.e., replace them by a shadow, to ensure that the paths $P_u$ with $u\in U$ are disjoint (Lemma~\ref{lem:shorten_up_links}).
In this example, the violet and red witness set become empty when we remove $\Drop_U(C)$ from $U$ and, hence, we also remove the violet and red links in $F$. Moreover, the size of the dark green witness set decreases from $2$ to $1$. This change does not affect the solution $F$, but it will lead to a decrease of the potential $\Phi(F)$.
\label{fig:example_exchange}}
\end{figure}

\subsection{The Potential Function $\Phi$}\label{sec:potential}

In this section we define the potential function $\Phi$ that we use to measure progress in our local search procedure.
First, we assign positive weights $\overline{w}$ to the up-links in $U=\cupp_{\ell\in F} W_{\ell}$, where we distribute the weight $w(\ell)$ of a link $\ell \in F$ equally among the up-links in its witness set.
Formally, for an up-link $u\in W_{\ell}$, we define 
\begin{equation*}
\overline{w}(u)\ \coloneqq\ \frac{w(\ell)}{|W_{\ell}|}\enspace.
\end{equation*}
Because we simply spread the weight of $w(F)$, the total $\overline{w}$-weight is equal to $w(F)$:
\begin{equation}\label{eq:spread_weight_wtap}
\overline{w}(U)\ =\ \sum_{\ell\in F} \sum_{u\in W_{\ell}} \overline{w}(u)\ =\ \sum_{\ell\in F} \sum_{u\in W_{\ell}} \frac{w(\ell)}{|W_{\ell}|} \ =\ \sum_{\ell\in F} w(\ell)\ =\ w(F)\enspace.
\end{equation}
We will define the potential $\Phi$ such that it fulfills the following key properties:
\begin{enumerate}[label=(\alph*)]
\item\label{item:phi_decrease_wtap} $\Phi(F)$ decreases by at least $\overline{w}(\Drop_U(C))$ when we remove $\Drop_U(C)$ from all witness sets $W_{\ell}$ with $\ell\in F$ and remove the links with empty witness sets from $F$, and
\item\label{item:phi_increase_wtap} $\Phi(F)$ increases by at most $1.5\cdot w(C)$ when we add $C$ to $F$.
\end{enumerate}
To see why these properties lead to the desired approximation guarantee of our local search algorithm, consider the partition $\Cscr$ of $\OPT$ into $k$-thin components that we obtain from Theorem~\ref{thm:decomposition}, applied to the WTAP solution $U\subseteq L_{\mathrm{up}}$ with weights $\overline{w}$. 
If we sample one of these components uniformly at random, the expected decrease of $\Phi(F)$ by removing $\Drop_U(C)$ from all witness sets is at least
\[
\frac{1}{|\Cscr|}\cdot \sum_{C\in \Cscr} \overline{w}(\Drop_U(C)) \ge \frac{1-\epsilon}{|\Cscr|} \cdot \overline{w}(U) = \frac{1-\epsilon}{|\Cscr|} \cdot w(F)\enspace,
\] 
(by \ref{item:phi_decrease_wtap}, Theorem~\ref{thm:decomposition}, and \eqref{eq:spread_weight_wtap}) while the expected increase of $\Phi(F)$ when adding $C$ is at most $\frac{1}{|\Cscr|} \cdot 1.5 \cdot w(\OPT)$ (by \ref{item:phi_increase_wtap}).
Thus, as long as $w(F)$ is significantly larger than $1.5\cdot w(C)$, the potential $\Phi(F)$ decreases in expectation.
This argument shows that as long as the solution $F$ does not fulfill the desired upper bound on its weight, there exists a $k$-thin component that we can use to decrease $\Phi(F)$.
In order to find such a component efficiently, we use a dynamic programming algorithm from~\cite{traub_2021_better}, which yields the following.

\begin{lemma}\label{lem:find_improvement}
Let $k\in \mathbb{Z}_{\ge 0}$ be a constant.
Given a WTAP instance $(G=(V,E),L,w)$, a set $U\subseteq L_{\mathrm{up}}$ of up-links such that the sets $P_u$ with $u\in U$ are pairwise disjoint, and weights $\overline{w}(u)>0$ for all $u\in U$,
we can efficiently compute a $k$-thin link set $C\subseteq L$ maximizing
$\overline{w}(\Drop_U(C)) - 1.5 \cdot w(C)$.  
\end{lemma}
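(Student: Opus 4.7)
The plan is to solve the optimization by a bottom-up dynamic program on the tree $G$, rooted at an arbitrary vertex $r$, along the lines of the dynamic program from~\cite{traub_2021_better}. For each non-root vertex $v$, let $e_v$ be the edge from $v$ to its parent and $T_v$ the subtree rooted at $v$. The DP state at $v$ is the set $S\subseteq L$ of \emph{open} links at $e_v$---those links of the tentative component $C$ that cross $e_v$, i.e., have exactly one endpoint in $T_v$. The $k$-thin restriction forces $|S|\le k$, so each vertex has at most $|L|^k$ states, which is polynomial for constant $k$.

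For each state $(v,S)$ I would compute $f(v,S)$, defined as the maximum contribution to the objective from the subtree $T_v$: namely, the $\overline{w}$-reward from dropped up-links $u\in U$ whose paths $P_u$ lie entirely inside $T_v$, minus $1.5$ times the $w$-weight of links of $C$ with both endpoints in $T_v$. Leaves are trivial; an internal vertex $v$ with children $v_1,\dots,v_d$ is processed by enumerating all compatible combinations of children states and of the set of new links \emph{opening} at $v$ (links with one endpoint at $v$ and the other in or above $v$), where compatibility means consistency with $S$ and with the thinness bound at $e_v$. Since each of these sets also has size at most $k$, only polynomially many combinations arise per vertex.

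The key structural property that lets us evaluate the $\Drop_U(C)$ term inside a local DP is the disjointness of the paths $P_u$ for $u\in U$: each tree edge lies in at most one such path, and is covered by $C$ if and only if the open set at that edge is non-empty. Since every $u\in U$ is an up-link, $P_u$ is a sequence of consecutive edges along a single root-to-leaf chain, so the DP naturally climbs along that chain and can credit the reward $\overline{w}(u)$ at the apex of $u$, once it has verified that every edge of $P_u$ had a non-empty open set when processed. The main obstacle---already handled in~\cite{traub_2021_better}---is keeping the state space polynomial despite the potentially large link set $L$; this is exactly what the $k$-thin restriction accomplishes. Since our objective $\overline{w}(\Drop_U(C))-1.5\cdot w(C)$ differs from the one in~\cite{traub_2021_better} only in the specific numerical coefficients and in the choice of per-up-link weights $\overline{w}$, with no change to the underlying structural quantities $\Drop_U(C)$ and $w(C)$, the same dynamic program applies with only trivial modifications.
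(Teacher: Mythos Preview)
Your proposal is correct and takes essentially the same approach as the paper: both reduce to the dynamic program of~\cite{traub_2021_better}. The paper's proof is more concise, packaging the reduction as a weight transformation (setting $\widetilde w(\ell)=\overline w(\ell)$ for $\ell\in U$ and $\widetilde w(\ell)=1.5\cdot w(\ell)$ otherwise) and then invoking Lemma~17 of~\cite{traub_2021_better} with $\rho=1$ as a black box, whereas you sketch the DP mechanics before deferring to the same reference; the substance is identical.
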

\begin{proof}
We consider the WTAP instance $(G=(V,E),L,\widetilde w)$, where 
\begin{equation*}
\widetilde w(\ell) \coloneqq 
\begin{cases}
\overline{w}(\ell) &\text{ if }\ell\in U, \\
1.5\cdot w(\ell)&\text{ otherwise}.
\end{cases}
\end{equation*}
Applying Lemma~17 from \cite{traub_2021_better} to this instance (with $\rho=1$) completes the proof.
\end{proof}

Let us now define the potential function $\Phi$.
For a set $F\subseteq L$ with witness sets $W_{\ell}$ for $\ell\in F$, we define the potential 
\[
\Phi(F) \ \coloneqq\ \sum_{\ell\in F} H_{|W_{\ell}|} \cdot w(\ell)\ =\ \sum_{\ell \in F: |W_{\ell}|=1} w(\ell) +\sum_{\ell \in F: |W_{\ell}|=2} \frac{3}{2} \cdot w(\ell)\enspace,
\]
where $H_i\coloneqq \sum_{j=1}^i \frac{1}{j}$ for $i\in \mathbb{Z}_\geq 0$, and we used for the last equality that we always have $|W_{\ell}|\in \{1,2\}$.
Then the potential $\Phi$ satisfies \ref{item:phi_decrease_wtap} and \ref{item:phi_increase_wtap}; see Lemma~\ref{lem:potential_decrease_wtap} below.

\subsection{The Local Search Algorithm}\label{sec:details_algo_wtap}

Our local search algorithm for WTAP can now be stated as follows, where we fix a constant $0 < \epsilon \le \sfrac{1}{2}$.

\begin{algorithm2e}[H]
\KwIn{A shadow-complete WTAP instance $(G=(V,E),L,w)$.}
\KwOut{A WTAP solution $F\subseteq L$ with $w(F) \le (1.5+\epsilon)\cdot w(\OPT)$.}
\vspace*{2mm}
\begin{enumerate}[label=\arabic*.,ref=\arabic*,rightmargin=7mm]\itemsep4pt
\item\label{item:initialize} Let $F\subseteq L$ be an arbitrary solution for the given WTAP instance. \\
  Set the witness sets to be $W_{\ell} \coloneqq U_{\ell}$ for all $\ell\in F$ and apply Lemma~\ref{lem:shorten_up_links} to $U= \bigcupp_{\ell \in F} W_{\ell}$.
\item\label{item:local_step_wtap} Iterate the following as long as $\Phi(F)$ decreases in each iteration by at least a factor $\left( 1- \frac{\epsilon}{6\cdot|V|}\right)$.
\begin{itemize}\itemsep1pt
\item \textbf{Select a best component:} 
Compute a $\lceil \sfrac{4}{\epsilon} \rceil$-thin link set $C\subseteq L$ maximizing
$\overline{w}(\Drop_U(C)) - 1.5 \cdot w(C)$, where $U=\ \bigcupp_{\ell \in F} W_{\ell}$.
\item \textbf{Remove $\mathrm{\mathbf{Drop}}$:} Replace the witness set $W_{\ell}$ by $W_{\ell} \setminus \Drop_U(C)$ for all $\ell \in F$.
\item \textbf{Add the new component:} Add $C$ to $F$ and set $W_{\ell} \coloneqq U_{\ell}$ for all $\ell \in C$.
\item \textbf{Shorten up-links:} Apply Lemma~\ref{lem:shorten_up_links} to $U= \bigcupp_{\ell \in F} W_{\ell}$.
\item If for some link $\ell \in F$, the witness set $W_{\ell}$ became empty, remove $\ell$ from $F$.
\end{itemize}
\item Return $F$.
\end{enumerate}
\caption{Local search algorithm for WTAP}\label{algo:local_search}
\end{algorithm2e}

The applications of Lemma~\ref{lem:shorten_up_links} in both step~\ref{item:initialize} and in the ``shorten up-links'' operation in step~\ref{item:local_step_wtap} are to be interpreted as follows. The link set $U$, which is a WTAP solution (see Lemma~\ref{lem:algReturnsWTAPSol} below), gets replaced by a shortened up-link solution by removing and shortening links in $U$ (as stated in Lemma~\ref{lem:shorten_up_links}). When shortening a link $u$ to one of its shadows $u'$, then a witness set $W_\ell$ that used to contain $u$ will now contain $u'$ instead (as mentioned, we think of $u'$ as replacing the up-link $u$).

We first show that our local search algorithm returns a feasible solution.

\begin{lemma}\label{lem:algReturnsWTAPSol}
Both $F$ and $U=\bigcupp_{\ell \in F} W_{\ell}$ are WTAP solutions before and after each iteration of Algorithm~\ref{algo:local_search}.
In particular, when the algorithm terminates, it returns a WTAP solution.
\end{lemma}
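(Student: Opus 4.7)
My plan is to prove both statements simultaneously by induction on the iterations of Algorithm~\ref{algo:local_search}, maintaining the combined invariant that at the start and end of every iteration (i) for each $\ell \in F$, every $u \in W_\ell$ is a shadow of $\ell$, and (ii) $U = \bigcupp_{\ell \in F} W_\ell$ is a WTAP solution. Property (ii) is the second assertion of the lemma, and together with (i) it also gives the first, since for any tree edge $e$, (ii) provides $\ell \in F$ and $u \in W_\ell$ with $e \in P_u$, and (i) then yields $e \in P_u \subseteq P_\ell$, so $e$ is covered by $\ell \in F$.

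For the base case, after Step~\ref{item:initialize} the initial $F$ is a WTAP solution by choice, and $W_\ell = U_\ell$ consists of shadows of $\ell$, giving (i). For (ii), every tree edge $e$ lies in $P_\ell$ for some $\ell \in F$, and since $P_\ell = \bigcup_{u \in U_\ell} P_u$, some $u \in W_\ell$ covers $e$. The subsequent application of Lemma~\ref{lem:shorten_up_links} preserves (ii) directly, and preserves (i) because it only replaces each up-link by one of its own shadows, and the shadow relation is transitive.

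For the inductive step, assume both invariants hold at the start of an iteration. The ``Remove $\Drop$'' substep only deletes elements from the witness sets, so (i) is trivially preserved; $U$ becomes $U' = U \setminus \Drop_U(C)$. The ``Add the new component'' substep then sets $W_\ell = U_\ell$ for $\ell \in C$, which also maintains (i); the new union is $U'' = U' \cup \bigcupp_{\ell \in C} U_\ell$. To check that $U''$ is a WTAP solution, pick any tree edge $e$; by the hypothesis, some $u \in U$ covers $e$. If $u \notin \Drop_U(C)$, then $u \in U' \subseteq U''$ and we are done. Otherwise $P_u \subseteq \bigcup_{\ell' \in C} P_{\ell'}$ by definition of $\Drop$, so $e \in P_{\ell'}$ for some $\ell' \in C$, and since $P_{\ell'} = \bigcup_{u' \in U_{\ell'}} P_{u'}$, some $u' \in U_{\ell'} \subseteq U''$ covers $e$. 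The ``Shorten up-links'' substep preserves both (i) and (ii) exactly as in the base case. Finally, removing from $F$ a link $\ell$ whose witness set $W_\ell$ is empty leaves $U = \bigcupp_{\ell' \in F} W_{\ell'}$ unchanged and does not affect (i) for the remaining links, so both invariants survive.

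The only mildly subtle point is tracking (i) through the ``Shorten up-links'' substep; once one observes that Lemma~\ref{lem:shorten_up_links} only replaces an up-link by one of its own shadows and that the shadow relation is transitive, this is immediate. As the invariant is maintained throughout, it holds in particular when the algorithm terminates, and therefore the returned $F$ is a WTAP solution.
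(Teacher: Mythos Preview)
Your proof is correct and follows essentially the same approach as the paper: both arguments maintain the invariant that every $u\in W_\ell$ is a shadow of $\ell$ and that $U$ is a WTAP solution, and then derive feasibility of $F$ from these two facts. Your version is somewhat more explicit in tracking the invariants through each of the five substeps (in particular the shortening and the removal of links with empty witness sets), whereas the paper compresses the argument by observing $\Drop_U\big(\bigcupp_{\ell\in C}U_\ell\big)=\Drop_U(C)$, but the underlying reasoning is the same.
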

\begin{proof}
In step~\ref{item:initialize} of Algorithm~\ref{algo:local_search}, we set $F\subseteq L$ to be a WTAP solution.
Because we have $P_{\ell} =\bigcup_{u \in U_{\ell}}P_u$ for every link $\ell \in F$, this implies that also $U$ is a WTAP solution after step~\ref{item:initialize}.

When we add a component $C$ to $F$, we set $W_{\ell}=U_{\ell}$ and add the up-links in $U_C \coloneqq\cupp_{\ell \in C} U_{\ell}$ to $U$. 
Because these up-links cover the same edges as the links in the component $C$, we have $\Drop_U(U_C) = \Drop_U(C)$ and thus $U$ remains a WTAP solution in step~\ref{item:local_step_wtap}.
In order to show that also $F$ remains a WTAP solution, we observe that we maintain the invariant that the elements of the witness set $W_{\ell}$ of a link $\ell$ are shadows of $\ell$.
Because $U$ is a WTAP solution, every edge $e$ of the tree $G$ is covered by some up-link $u\in U$ that is contained in the witness set $W_{\ell}$ for some link $\ell \in F$. 
Hence, because $u$ is a shadow of $\ell$, we can conclude that also $\ell$ covers the edge $e$.
This shows that not only $U$, but also $F$ remains a WTAP solution.
\end{proof}

To prove that the solution returned by Algorithm~\ref{algo:local_search} fulfills the desired approximation guarantee, we use the following observation, which follows from the definition of the potential function $\Phi$.

\begin{lemma}\label{lem:potential_decrease_wtap}
If we select a component $C\subseteq L$ in step~\ref{item:local_step_wtap} of Algorithm~\ref{algo:local_search}, then $\Phi(F)$ decreases by at least $\overline{w}(\Drop_U(C)) - 1.5 \cdot w(C)$ in this iteration.
\end{lemma}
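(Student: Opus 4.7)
The plan is to track the change in $\Phi(F)$ through each of the four sub-operations of the iteration separately (remove $\Drop$, add the new component, shorten up-links, remove links with empty witness sets) and to bound each contribution by a simple expression in $w(\ell)$ and $\overline{w}(\cdot)$.

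First I would analyze the ``Remove $\Drop$'' step. For each link $\ell\in F$, let $i_\ell=|W_\ell|$ before the step and $j_\ell=|W_\ell\setminus \Drop_U(C)|$ afterwards, so that exactly $i_\ell-j_\ell$ up-links are removed from the witness set of $\ell$. The contribution of $\ell$ to the drop in $\Phi$ is $(H_{i_\ell}-H_{j_\ell})\,w(\ell)$, while the $\overline w$-weight removed from $W_\ell$ is $(i_\ell-j_\ell)\cdot\frac{w(\ell)}{i_\ell}$. Since $|W_\ell|\in\{1,2\}$, there are only the four nontrivial cases $(i_\ell,j_\ell)\in\{(1,0),(2,1),(2,0)\}$, and in each of them one can check $(H_{i_\ell}-H_{j_\ell})\,w(\ell)\ge (i_\ell-j_\ell)\cdot\frac{w(\ell)}{i_\ell}$ (with equality in the first two cases, strict inequality in the last because $\frac{3}{2}>1$). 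Summing over $\ell\in F$ then yields a decrease of at least $\overline w(\Drop_U(C))$.

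Next I would analyze the ``Add the new component'' step. Here the witness set of every $\ell\in C$ is (re)set to $U_\ell$, so $|W_\ell|=|U_\ell|\le 2$ afterwards. Because a link that was already in $F$ had $W_\ell\subseteq U_\ell$ (witness sets only shrink between resets), its potential contribution before the reset was at most $H_{|U_\ell|}\,w(\ell)$; thus in either case the total increase in $\Phi$ caused by this step is at most $\sum_{\ell\in C}H_{|U_\ell|}\,w(\ell)\le \frac{3}{2}\cdot w(C)$.

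Finally, the ``Shorten up-links'' step only replaces elements of $U$ by shadows or removes them, so the size of each $W_\ell$ can only stay the same or decrease; hence $\Phi$ does not increase. The last step, deleting links with empty witness set, also does not change $\Phi$ because such a link contributes $H_0\,w(\ell)=0$. Adding up the four estimates gives the claimed bound $\overline w(\Drop_U(C))-1.5\cdot w(C)$ on the net decrease of $\Phi(F)$. The only slightly delicate point, and the one I would write out most carefully, is the case analysis in the first step, where the inequality between harmonic-difference and $\overline w$-contribution relies on $H_2-H_1=\tfrac12=\tfrac12 H_2$ and the strict inequality $H_2>1$.
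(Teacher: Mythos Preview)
Your proof is correct and follows essentially the same approach as the paper's: bound the increase from adding $C$ by $\sum_{\ell\in C}H_{|U_\ell|}\,w(\ell)\le \tfrac{3}{2}w(C)$, and bound the decrease from removing $\Drop_U(C)$ by the case analysis on $|W_\ell|\in\{1,2\}$. You are simply more explicit than the paper in also tracking the ``shorten up-links'' and ``remove empty-witness links'' sub-steps (which the paper leaves implicit) and in handling the corner case $\ell\in C\cap F$.

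One small technical inaccuracy: the literal inclusion $W_\ell\subseteq U_\ell$ need not hold, because the shortening step may replace an element of $W_\ell$ by a strict shadow that is not in $U_\ell$. What you actually need---and what does hold for the reason you give (witness sets never grow in cardinality between resets)---is $|W_\ell|\le |U_\ell|$, which is enough for the bound $H_{|W_\ell|}\le H_{|U_\ell|}$ you use.
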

\begin{proof}
Adding $C$ to $F$ increases the potential $\Phi(F)$ by 
$\sum_{\ell\in C} H_{|U_{\ell}|} \cdot w(\ell) \le \sum_{\ell\in C} \frac{3}{2}\cdot w(\ell)$ because $|U_{\ell}|\le 2$ for all $\ell\in L$.
Thus, it remains to show that $\Phi(F)=\sum_{\ell\in F} H_{|W_{\ell}|} \cdot w(\ell)$ decreases by at least $\overline{w}(\Drop_U(C))$ when we replace $W_{\ell}$ by $W_{\ell} \setminus \Drop_U(C)$ for all $\ell\in F$.
To this end, we consider a link $\ell\in F$ and show that $H_{|W_{\ell}|} \cdot w(\ell)$ decreases by at least $\overline{w}(W_{\ell} \cap \Drop_U(C))$.
If there is exactly one link $u\in W_{\ell} \cap \Drop_U(C)$ , then $H_{|W_{\ell}|}\cdot w(\ell)$ decreases by $\frac{1}{|W_{\ell}|} \cdot w(\ell) =\overline{w}(u)$ when we replace $W_{\ell}$ by $W_{\ell} \setminus \Drop_U(C)$.
If there are two links $u_1,u_2 \in W_{\ell} \cap \Drop_U(C)$, then $W_{\ell}=\{u_1,u_2\}$. 
Hence in this case we have $\overline{w}(u_1) + \overline{w}(u_2) = w(\ell)$ and $H_{|W_{\ell}|}\cdot w(\ell)$ decreases from $\frac{3}{2}\cdot w(\ell)$ to $0$ when we replace $W_{\ell}$ by $W_{\ell} \setminus \Drop_U(C)$.
\end{proof}

Together with Lemma~\ref{lem:potential_decrease_wtap}, the lemma below gives a lower bound on the decrease of the potential $\Phi(F)$ in a single local improvement step.
This lower bound is positive as long as $w(F)$ is significantly larger than $1.5\cdot w(\OPT)$.
To prove Lemma~\ref{lem:good_improvement_exists_wtap}, we give a lower bound on the improvement that can be achieved through the components obtained from the decomposition theorem (Theorem~\ref{thm:decomposition}) applied to $\OPT$.

\begin{lemma}\label{lem:good_improvement_exists_wtap}
In every iteration of Algorithm~\ref{algo:local_search}, there exists a $\lceil\sfrac{4}{\epsilon}\rceil$-thin component $C\subseteq L$ such that
\begin{equation}\label{eq:wtap_bound_improvement}
\overline{w}(\Drop_U(C)) - 1.5 \cdot w(C)\ \ge\ \frac{1}{|V|} \cdot \Big((1-\sfrac{\epsilon}{4}) \cdot w(F) - 1.5 \cdot w(\OPT)\Big)\enspace.
\end{equation}
\end{lemma}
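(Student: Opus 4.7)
The plan is to invoke Theorem~\ref{thm:decomposition} on $\OPT$ together with the current up-link set $U=\cupp_{\ell\in F}W_\ell$ (whose paths are pairwise disjoint, as maintained by the ``Shorten up-links'' step of the algorithm), and then average over the resulting partition. Concretely, I would apply Theorem~\ref{thm:decomposition} with error parameter $\sfrac{\epsilon}{4}$ but with the weights on $U$ replaced by $\overline{w}$. This replacement is legitimate because the conclusion of the decomposition theorem depends on weights only through the inequality $w(R)\le \epsilon\cdot w(U)$ with $R\subseteq U$; any strictly positive weight function on $U$ can therefore be used in its place. The theorem produces a partition $\Cscr$ of $\OPT$ into $\lceil \sfrac{4}{\epsilon}\rceil$-thin link sets together with a set $R\subseteq U$ for which
\begin{equation*}
\overline{w}(R)\;\le\;\tfrac{\epsilon}{4}\cdot \overline{w}(U)\;=\;\tfrac{\epsilon}{4}\cdot w(F),
\end{equation*}
where the equality is~\eqref{eq:spread_weight_wtap}.

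Since property~(\ref{item:links_outside_R_covered}) of the theorem guarantees that every $u\in U\setminus R$ lies in $\Drop_U(C)$ for at least one $C\in\Cscr$, summing over $\Cscr$ and using $\sum_{C\in\Cscr}w(C)=w(\OPT)$ would yield
\begin{equation*}
\sum_{C\in\Cscr}\Bigl(\overline{w}(\Drop_U(C))-1.5\cdot w(C)\Bigr)\;\ge\;\overline{w}(U\setminus R)-1.5\cdot w(\OPT)\;\ge\;(1-\sfrac{\epsilon}{4})\cdot w(F)-1.5\cdot w(\OPT).
\end{equation*}
An averaging step then produces some $C\in\Cscr$ for which the summand is at least $\sfrac{1}{|\Cscr|}$ times the right-hand side. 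To derive~\eqref{eq:wtap_bound_improvement} it remains to verify $|\Cscr|\le|V|$. For this, I would argue that one may assume without loss of generality that $\OPT$ is edge-minimal (otherwise redundant links can be removed without increasing its weight), so that each $\ell\in\OPT$ uniquely covers some tree edge; this furnishes an injection $\OPT\hookrightarrow E$, giving $|\OPT|\le|V|-1$, and since $\Cscr$ partitions $\OPT$ into non-empty parts we conclude $|\Cscr|\le|V|$. Finally, if the right-hand side of~\eqref{eq:wtap_bound_improvement} turns out to be negative, the empty set $C=\emptyset$ is trivially a $\lceil\sfrac{4}{\epsilon}\rceil$-thin component satisfying the inequality.

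The only non-routine point in this plan is adapting Theorem~\ref{thm:decomposition} to the reweighted up-link set $U$; this is essentially immediate because the construction of $\Cscr$ and $R$ in~\cite{traub_2021_better} uses the combinatorial structure of $\OPT$ and $U$ only, with the weights entering exclusively through the final guarantee on $w(R)$. Beyond this formal check and the observation bounding $|\Cscr|$ by $|V|$, the argument is a clean averaging over the decomposition of $\OPT$, where the choice of decomposition parameter $\sfrac{\epsilon}{4}$ yields both the required thinness $\lceil \sfrac{4}{\epsilon}\rceil$ and the factor $(1-\sfrac{\epsilon}{4})$ on $w(F)$.
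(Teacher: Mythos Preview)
Your proposal is correct and follows essentially the same route as the paper: apply Theorem~\ref{thm:decomposition} to $\OPT$ and $U$ with the weights $\overline{w}$ and parameter $\sfrac{\epsilon}{4}$, then average over the resulting partition $\Cscr$ and use $|\Cscr|\le |V|$. You are in fact slightly more careful than the paper, which simply asserts ``we used $|\Cscr|\le |V|$'' and does not separately treat the case where $(1-\sfrac{\epsilon}{4})\,w(F)-1.5\,w(\OPT)$ is negative; your handling of that case via $C=\emptyset$ and your justification of $|\Cscr|\le |\OPT|\le |V|-1$ via minimality of $\OPT$ (automatic here since $w>0$) are both fine.
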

\begin{proof}
We apply Theorem~\ref{thm:decomposition} to $U$ with weight function $\overline{w}$ to obtain a partition $\Cscr$ of $\OPT$ into $\lceil\sfrac{4}{\epsilon}\rceil$-thin components such that 
\[
\sum_{C\in \Cscr} \overline{w}(\Drop_U(C))\ \ge\ (1-\sfrac{\epsilon}{4}) \cdot \overline{w}(U)\ =\ (1-\sfrac{\epsilon}{4}) \cdot w(F)\enspace.
\]
Because $\sum_{C\in \Cscr} w(C) = w(\OPT)$, we obtain 
\begin{align*}
\max_{C\in\Cscr} \left(\overline{w}(\Drop_U(C))-1.5 \cdot w(C)\right)\ \ge&\ \frac{1}{|\Cscr|}\sum_{C\in \Cscr} \left(\overline{w}(\Drop_U(C))-1.5\cdot w(C)\right)\\
 \ge&\ \frac{1}{|V|}\cdot\left((1-\sfrac{\epsilon}{4}) \cdot w(F) - 1.5\cdot w(\OPT)\right)\enspace,
\end{align*}
where we used $|\Cscr| \le |V|$.
\end{proof}

Next, we bound the weight of the solution returned by Algorithm~\ref{algo:local_search}.

\begin{lemma}\label{lem:wtap_apx_guarantee}
When Algorithm~\ref{algo:local_search} terminates, it returns a WTAP solution $F$ with $w(F)\le (1.5+\epsilon)\cdot w(\OPT)$.
\end{lemma}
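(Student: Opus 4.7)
The plan is to show that the termination condition of Algorithm~\ref{algo:local_search} forces $w(F)$ to be close to $1.5\cdot w(\OPT)$. By Lemma~\ref{lem:algReturnsWTAPSol}, the returned $F$ is a WTAP solution, so only the weight bound needs to be argued. The starting point is the two-sided bound
\[
w(F) \ \le\ \Phi(F) \ \le\ \tfrac{3}{2}\cdot w(F),
\]
which follows directly from the definition of $\Phi$ together with $|W_\ell|\in\{1,2\}$ and $H_1=1$, $H_2=\tfrac{3}{2}$.

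Next, I would combine Lemma~\ref{lem:potential_decrease_wtap} and Lemma~\ref{lem:good_improvement_exists_wtap} to lower-bound the potential drop achievable in a single iteration. Since the component $C$ selected in step~\ref{item:local_step_wtap} maximizes $\overline{w}(\Drop_U(C)) - 1.5\cdot w(C)$ over all $\lceil 4/\epsilon\rceil$-thin link sets (by Lemma~\ref{lem:find_improvement}), Lemma~\ref{lem:good_improvement_exists_wtap} guarantees that this maximum is at least $\frac{1}{|V|}\bigl((1-\epsilon/4)\cdot w(F) - 1.5\cdot w(\OPT)\bigr)$, and Lemma~\ref{lem:potential_decrease_wtap} then says the actual drop in $\Phi(F)$ during the iteration is at least that much.

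Then I would take the contrapositive of the while-loop condition: if the algorithm halts, the best available potential drop is strictly less than $\frac{\epsilon}{6|V|}\cdot \Phi(F)$. Chaining this with the two bounds above and using $\Phi(F) \le \tfrac{3}{2} w(F)$ gives
\[
\frac{1}{|V|}\bigl((1-\epsilon/4)\cdot w(F) - 1.5\cdot w(\OPT)\bigr) \ <\ \frac{\epsilon}{6|V|}\cdot \Phi(F) \ \le\ \frac{\epsilon}{4|V|}\cdot w(F).
\]
Clearing $|V|$ and rearranging yields $(1-\epsilon/2)\cdot w(F) < 1.5\cdot w(\OPT)$, i.e., $w(F) < \frac{1.5}{1-\epsilon/2}\cdot w(\OPT)$.

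The final step is a short algebraic check that, under the standing assumption $\epsilon\le \tfrac{1}{2}$, one has $\frac{1.5}{1-\epsilon/2} \le 1.5+\epsilon$; this amounts to $0 \le \epsilon/4 - \epsilon^2/2$, which holds for $\epsilon\le \tfrac{1}{2}$. I do not expect a serious obstacle here: the only real subtlety is that the constants must line up, and they do because $\tfrac{3}{2}\cdot \tfrac{\epsilon}{6} = \tfrac{\epsilon}{4}$ merges with the $(1-\epsilon/4)$ term from Lemma~\ref{lem:good_improvement_exists_wtap} to produce the clean factor $1-\epsilon/2$, which is precisely what is needed to convert $1.5$ into $1.5+\epsilon$.
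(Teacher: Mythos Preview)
Your proposal is correct and follows essentially the same route as the paper's proof: you combine Lemma~\ref{lem:potential_decrease_wtap} and Lemma~\ref{lem:good_improvement_exists_wtap} with the termination criterion and the bound $\Phi(F)\le\tfrac{3}{2}w(F)$ to obtain $(1-\epsilon/2)\,w(F)<1.5\,w(\OPT)$, and then check that $\frac{1.5}{1-\epsilon/2}\le 1.5+\epsilon$ for $\epsilon\le\tfrac{1}{2}$. The only cosmetic difference is that you spell out the final algebra a bit more explicitly than the paper does.
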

\begin{proof}
Lemma~\ref{lem:potential_decrease_wtap} implies that when the algorithm terminates, we must have $\overline{w}(\Drop_U(C)) -  1.5 \cdot w(C) < \frac{\epsilon}{6\cdot |V|} \cdot \Phi(F)$ for every $\lceil\sfrac{4}{\epsilon}\rceil$-thin component $C\subseteq L$.
By Lemma~\ref{lem:good_improvement_exists_wtap}, this implies
\[
(1-\sfrac{\epsilon}{4}) \cdot w(F) - 1.5 \cdot w(\OPT)
\ <\ \frac{\epsilon}{6} \cdot \Phi(F) \ \le\ \frac{\epsilon}{4} \cdot w(F)\enspace,
\]
where we used $\Phi(F)\le \sfrac{3}{2} \cdot w(F)$.
Therefore, $(1-\sfrac{\epsilon}{2})\cdot w(F)\le 1.5 \cdot w(\OPT)$ and thus $w(F) \le (1.5+\epsilon)\cdot w(\OPT)$, using $\epsilon \le \sfrac{1}{2}$.
\end{proof}

Finally, we show that our local search procedure terminates in polynomial time. Note that the starting WTAP solution $F_0$, computed in step~\ref{item:initialize} of Algorithm~\ref{algo:local_search}, has weight bounded by $w(F_0)\leq w(L)$; thus, the bound stated in the lemma below is indeed polynomial, independently of the starting solution $F_0$.
\begin{lemma}\label{lem:bound_iter_WTAP}
Algorithm~\ref{algo:local_search} terminates after at most $\ln\!\left(\frac{\sfrac{3}{2}\cdot w(F_0)}{w(\OPT)}\right)\cdot \frac{6|V|}{\epsilon}$ iterations, where $F_0\subseteq L$ is the initial WTAP solution computed in step~\ref{item:initialize} of Algorithm~\ref{algo:local_search}.
\end{lemma}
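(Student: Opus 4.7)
The plan is to exploit the geometric decrease of the potential $\Phi(F)$ per iteration that is enforced by the termination rule in step~\ref{item:local_step_wtap}, and sandwich $\Phi(F)$ between an initial upper bound and a final lower bound.

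First, I would establish the two endpoint bounds. For the upper bound on the starting potential, note that $|W_\ell| \leq |U_\ell| \leq 2$ for every link, so $H_{|W_\ell|} \leq H_2 = \sfrac{3}{2}$, which gives $\Phi(F_0) \leq \sfrac{3}{2}\cdot w(F_0)$ directly from the definition. For the lower bound maintained during the algorithm, observe that $H_{|W_\ell|} \geq H_1 = 1$ for every $\ell \in F$ (witness sets are non-empty as long as $\ell \in F$, by the last bullet of step~\ref{item:local_step_wtap}), so $\Phi(F) \geq w(F)$. Since $F$ is always a WTAP solution (Lemma~\ref{lem:algReturnsWTAPSol}), we have $w(F) \geq w(\OPT)$, hence $\Phi(F) \geq w(\OPT)$ throughout the run.

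Next, I would use the termination criterion: the loop continues only if $\Phi(F)$ drops by at least a factor $\bigl(1-\tfrac{\epsilon}{6|V|}\bigr)$ in the current iteration. Therefore, after $t$ full iterations,
\[
\Phi(F) \;\leq\; \Phi(F_0)\cdot\Bigl(1-\tfrac{\epsilon}{6|V|}\Bigr)^{t} \;\leq\; \tfrac{3}{2}\cdot w(F_0)\cdot \exp\!\Bigl(-\tfrac{t\,\epsilon}{6|V|}\Bigr),
\]
using $1-x \leq e^{-x}$. Combining with the lower bound $\Phi(F) \geq w(\OPT)$ gives
\[
w(\OPT) \;\leq\; \tfrac{3}{2}\cdot w(F_0)\cdot\exp\!\Bigl(-\tfrac{t\,\epsilon}{6|V|}\Bigr),
\]
which, after taking logarithms and rearranging, yields $t \leq \tfrac{6|V|}{\epsilon}\cdot \ln\!\bigl(\tfrac{\sfrac{3}{2}\cdot w(F_0)}{w(\OPT)}\bigr)$, as desired.

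There is no real obstacle here; the only small care needed is in verifying the lower bound $\Phi(F) \geq w(\OPT)$. The point is that a link $\ell$ stays in $F$ exactly as long as its witness set $W_\ell$ is non-empty, so the coefficient $H_{|W_\ell|}$ appearing in $\Phi$ is always at least $1$ for links in $F$, giving $\Phi(F) \geq w(F)$ and hence $\Phi(F) \geq w(\OPT)$ via feasibility of $F$ (Lemma~\ref{lem:algReturnsWTAPSol}). With these three ingredients--the initial upper bound, the permanent lower bound, and the per-iteration geometric decay--the stated iteration bound follows by a one-line computation.
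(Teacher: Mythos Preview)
Your proof is correct and follows essentially the same approach as the paper: bound $\Phi(F_0)\le \tfrac{3}{2}w(F_0)$, maintain $\Phi(F)\ge w(F)\ge w(\OPT)$, and use the per-iteration geometric decrease to bound the number of iterations. The only cosmetic difference is that you use $1-x\le e^{-x}$ where the paper uses the equivalent inequality $-\ln(1-x)\ge x$.
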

\begin{proof}
At the beginning of the local search algorithm we have $\Phi(F)=\Phi(F_0)\le \frac{3}{2} \cdot w(F_0)$.
Because the potential $\Phi(F)$ decreases by a factor of at least $\left(1- \frac{\epsilon}{6\cdot |V|}\right)$ in every iteration and because $\Phi(F)\ge w(F)\ge w(\OPT)$ throughout the algorithm, the number of iterations is at most
\begin{equation*}
\log_{(1-\sfrac{\epsilon}{(6|V|)})^{-1}} \left(\frac{\sfrac{3}{2}\cdot w(F_0)}{w(\OPT)}\right) \ =\ \ln\left(\frac{\sfrac{3}{2}\cdot w(F_0)}{w(\OPT)}\right)\cdot \frac{1}{-\ln(1-\sfrac{\epsilon}{(6|V|)})}\ \le\ \ln\left(\frac{\sfrac{3}{2}\cdot w(F_0)}{w(\OPT)}\right)\cdot \frac{6|V|}{\epsilon}\enspace,
\end{equation*}
where we used $\ln(1+x) \le x$ for $x> -1$.
\end{proof}

Combining Lemma~\ref{lem:wtap_apx_guarantee} and Lemma~\ref{lem:bound_iter_WTAP} yields that Algorithm~\ref{algo:local_search} is a $(1.5+\epsilon)$-approximation algorithm for WTAP and thus completes the proof of Theorem~\ref{thm:mainWTAP}.

\section{Local Search for Steiner Tree}\label{sec:steinerTree}

In this section we discuss our local search algorithm for the Steiner tree problem.
An instance of the Steiner tree problem consists of an undirected graph $G=(V,E)$ with positive edge weights $w : E \to \mathbb{R}_{> 0}$ and a set $T\subseteq V$ of \emph{terminals}.
The task is to find a set $F$ of edges that connects all terminals, i.e., an edge set $F$ such that the graph $(V,F)$ contains a path between any pair of terminals.\footnote
{
Sometimes the Steiner tree problem is defined such that edges of weight zero are allowed to exist.
However, this is equivalent because edges of weight zero can always be included in any solution at no extra cost and thus contracting these edges upfront yields an equivalent instance with positive weights only.
}

The currently best approximation algorithm for the Steiner tree problem is an $(\ln 4 + \epsilon)$-approximation algorithm by \textcite{byrka_2013_steiner} through an elegant iterative randomized rounding method.
After some preliminaries in Section~\ref{sec:steiner_preliminaries}, we describe a new local search algorithm and prove that it achieves the same approximation ratio without the need to solve a linear program (Section~\ref{sec:steiner_algo}).

\textcite{goemans_2012_matroids} gave a variant of the $(\ln 4 + \epsilon)$-approximation algorithm from~\cite{byrka_2013_steiner} and proved that the cost of the output of their algorithm can be bounded with respect to the optimal value of the well-known hypergraphic LP relaxation for Steiner tree (see Section~\ref{sec:steiner_lp}).
In particular, they proved that the hypergraphic LP relaxation  has an integrality gap of at most $\ln 4$.
In Section~\ref{sec:steiner_lp} we give a simpler proof of this result by showing that our local search procedure computes a solution of cost no more than $\ln 4 + \epsilon$ times the LP value.

\subsection{Components and $k$-restricted Steiner trees}\label{sec:steiner_preliminaries}

A \emph{component} is a nonempty edge set $C\subseteq E$ such that $C$ is (the edge set of) a tree.
For a component $C$, we denote by $T_C \subseteq T$ the set of terminals connected by the component $C$, i.e., the set of terminals that are an endpoint of at least one of the edges of the tree $C$.
A component is called a \emph{$k$-component} if it connects at most $k$ terminals, i.e., $|T_C|\le k$.

A \emph{$k$-restricted Steiner tree} $F=\bigcupp_{C\in\Cscr} C$ is the disjoint union of a collection $\Cscr$ of $k$-components such that the hypergraph with vertex set $T$ and hyperedge set $\{ T_C : C\in \Cscr\}$ is connected. 

We denote by $\OPT$ an optimal Steiner tree solution and by $\OPT_k$ an optimal $k$-restricted Steiner tree, i.e., a $k$-restricted Steiner tree  $F$ minimizing $w(F)$.
Borchers and Du~\cite{borchers_1997_thek} showed that for large $k$ the weight $w(\OPT_k)$ of a cheapest $k$-restricted Steiner tree cannot be much larger than the weight $w(\OPT)$ of an optimal Steiner tree solution.

\begin{theorem}[\cite{borchers_1997_thek}]\label{thm:steiner_decomposition}
For any instance $(G=(V,E),T,w)$ of the Steiner tree problem and $k\in \mathbb{Z}_{\geq 2}$, we have
\[
\frac{w(\OPT_k)}{w(\OPT)} \ \le\ 1 + \frac{1}{\lfloor \log_2(k) \rfloor}\enspace.
\]
\end{theorem}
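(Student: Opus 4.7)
The plan is to reduce to a canonical situation and then use an averaging argument over several tree decompositions. First, set $s = \lfloor \log_2 k \rfloor$; since any $2^s$-restricted Steiner tree is also $k$-restricted, it suffices to show $w(\OPT_{2^s}) \le (1 + 1/s)\cdot w(\OPT)$. Second, one may normalize $\OPT$ to a disjoint union of \emph{full Steiner trees} (every terminal is a leaf and every Steiner vertex has degree $3$) by splitting at internal terminals and subdividing as needed, then handle each full component separately. So we may assume $\OPT = T^*$ is a single full binary Steiner tree whose leaves are exactly its terminals.

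The heart of the proof is then to extract from $T^*$ a $2^s$-restricted Steiner tree of weight at most $(1 + 1/s)\cdot w(T^*)$. The approach is to exhibit $s$ different ways to cut $T^*$ into subtrees each containing at most $2^s$ terminals, in such a way that every edge of $T^*$ is a cut edge in at most one of the $s$ decompositions. To do this, root $T^*$ at an arbitrary terminal, assign each Steiner vertex $v$ a depth $\lceil \log_2 n_v \rceil$, where $n_v$ is the number of terminal descendants of $v$, and for each shift $i \in \{0, 1, \ldots, s-1\}$ cut every edge whose child-endpoint has depth $\equiv i \pmod s$. Because depths along any root-to-leaf path increase and consecutive cuts on such a path are $s$ depth-levels apart, each resulting piece has $\le 2^s$ terminals, and each tree edge is cut in exactly one of the $s$ shifts.

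Given such a family of $s$ decompositions, each one yields a collection of subtrees of $T^*$ that can be transformed into a valid $2^s$-restricted Steiner tree by duplicating at most one edge per cut — namely, routing a single copy of each cut edge (together with a short path to a terminal in the adjacent piece, which lies within the $2^s$-budget) — thereby paying extra cost at most the total weight of the cut edges in that decomposition. Since the sum of cut weights over the $s$ decompositions equals $w(T^*)$, the cheapest shift has duplication cost at most $w(T^*)/s$, giving the claimed bound. The main obstacle I expect is executing this reassembly cleanly: one must argue that a single copy of each cut edge (rather than a longer detour) really suffices to turn each cut subtree into a bona fide $2^s$-component spanning its terminal set, and that the resulting hypergraph on $T$ is connected. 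A slightly different depth function or a more symmetric cyclic cut pattern may be needed to get the constant $1/s$ exactly — as opposed to, say, $1/(s-1)$ — but the averaging over $s$ disjoint cut sets is the key mechanism driving the bound.
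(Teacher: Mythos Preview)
The paper does not prove this theorem; it is quoted from Borchers and Du and used as a black box, so there is no in-paper proof to compare against.

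On the merits of your sketch: the overall architecture---reduce to full binary trees, build $s=\lfloor\log_2 k\rfloor$ decompositions, and average the duplication cost---is indeed the Borchers--Du strategy. But the specific cut rule you propose does not work. Take a caterpillar: a path of Steiner vertices $s_1,\ldots,s_{N-1}$ rooted at a terminal $t_0$ adjacent to $s_1$, with one terminal $t_j$ hanging off each $s_j$ (and two off $s_{N-1}$). Then $n_{s_j}=N-j+1$, so $\lceil\log_2 n_{s_j}\rceil$ stays constant on stretches of length $\Theta(N)$. With $N=2^{2s}$ and shift $i=1$ (for $s\ge 2$), terminals have depth $0\not\equiv 1$, and the first Steiner vertex below the root with depth $\equiv 1\pmod s$ occurs only at depth $s{+}1$; the top piece then contains roughly $2^{2s}-2^{s+1}$ terminals, far more than $2^s$. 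So ``each resulting piece has $\le 2^s$ terminals'' is false for this depth function, and no small perturbation of $\lceil\log_2 n_v\rceil$ fixes the caterpillar.

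There is a second gap in the reassembly step. To turn a cut piece into a genuine component that reconnects with the piece below it in the final $k$-restricted tree, you must extend from the cut point down to a \emph{terminal} in the child subtree. The duplicated object is therefore a path, not a single edge, and it is the total weight of these connector paths---not of the cut edges---that must average to at most $w(T^*)/s$ over the $s$ shifts. Borchers and Du arrange this by fixing, for each internal vertex, a designated path to a leaf descendant in such a way that these connector paths partition the edge set of $T^*$; the level structure and the $s$ shifts are set up relative to this path decomposition so that each edge is charged to exactly one shift. Your hedges correctly flag both weak points, but closing them is not a matter of adjusting the depth function---it requires this connector-path partition, which is the main combinatorial content of the original proof.
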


In the next section we show that for any constants $\tilde \epsilon >0$ and $k\in \mathbb{Z}_{\ge 2}$, there is a polynomial-time local search procedure that computes a Steiner tree solution $F$ with $w(F) \le (\ln 4 +\tilde\epsilon) \cdot w(\OPT_k)$.
Hence, together with Theorem~\ref{thm:steiner_decomposition}, this implies that for any $\epsilon \in (0,1]$, we can get a Steiner tree solution $F$ with $w(F) \leq (\ln 4 + \epsilon)\cdot w(\OPT)$. Indeed, this can be obtained by choosing $\tilde\epsilon = \sfrac{\epsilon}{3}$ and $k=2^{\lceil\sfrac{2 \ln(4)}{\epsilon} \rceil}$.

\subsection{The Local Search Algorithm}\label{sec:steiner_algo}

A \emph{terminal spanning tree} is a set $S \subseteq \left(\!\begin{smallmatrix} T\\ 2 \end{smallmatrix}\!\right)$ such that $(T,S)$ is a (spanning) tree.
In our local search algorithm for the Steiner tree problem we maintain
\begin{itemize}
\item a Steiner tree solution $F$,
\item a non-empty witness set $W_{f} \subseteq \left(\!\begin{smallmatrix} T\\ 2 \end{smallmatrix}\!\right)$ for all $f \in F$ such that 
\begin{itemize}
\item the union $S=\bigcup_{f\in F} W_{f}$ of the witness sets is a terminal spanning tree, and 
\item for every edge $e=\{v,w\}\in S$, the set $\{ f\in F: e\in W_f\}$ contains a $v$-$w$ path.
\end{itemize}
\end{itemize}
Because $S$ is a terminal spanning tree throughout the algorithm, the latter property of the witness sets guarantees that $F$ indeed remains a feasible Steiner tree solution throughout the algorithm.
Moreover, if the witness set $W_f$ for an edge $f\in F$ is empty, we can remove $f$ from $F$ while maintaining a feasible Steiner tree solution.
This concept of witness sets has been introduced in ~\cite{byrka_2013_steiner}.
\bigskip

Similar to \cite{goemans_2012_matroids} and our WTAP algorithm from Section~\ref{sec:WTAP}, we define a weight function $\overline{w}$ where we distribute the weight $w(f)$ equally on the edges in the witness set $W_f$.
Formally, for an edge $e$ contained in the terminal spanning tree $S = \bigcup_{f\in F} W_f$, we define 
\[
\overline{w}(e) \coloneqq \sum_{f\in F: e\in W_f} \frac{1}{|W_f|} \cdot w(f)\enspace .
\]
Then we have $\overline{w}(S)=w(F)$ because
\[
\overline{w}(S)\ =\ \sum_{e\in S} \sum_{f\in F: e\in W_f} \frac{1}{|W_f|} \cdot w(f) \ =\ \sum_{f\in F}\sum_{e\in W_f}\frac{1}{|W_f|} \cdot w(f) \ =\  \sum_{f\in F} w(f) \ =\ w(F)\enspace.
\]
Moreover, we define the potential function $\Phi$ to be 
\[
\Phi(F) \coloneqq \sum_{f\in F} H_{|W_f|} \cdot w(f) \enspace,
\]
where we again use the notation $H_q\coloneqq \sum_{i=1}^q \frac{1}{i}$.
Our algorithm will iteratively make local improvement steps that decrease the potential $\Phi(F)$.
Essentially the same potential function has been used in~\cite{goemans_2012_matroids} in the analysis of a different $(\ln 4+\epsilon)$-approximation algorithm for Steiner tree.
\bigskip

Let us now discuss how we choose the witness sets $W_f$ for $f\in F$.
We will do this in the same way as \textcite{byrka_2013_steiner}.
The below lemma captures the key properties of the witness sets that we will need to analyze our local search procedure.

\begin{lemma}[\cite{byrka_2013_steiner, goemans_2012_matroids}]\label{lem:witness_sets}
For any component $C\subseteq E$, we can efficiently find a tree $S_C \subseteq \left(\!\begin{smallmatrix} T_C\\ 2 \end{smallmatrix}\!\right)$ spanning $T_C$ and sets $W_f \subseteq S_C$ for all $f\in C$ such that 
\begin{itemize}\itemsep2pt
\item $\Phi(C) = \sum_{f\in C} H_{|W_f|} \cdot w(f) \le \ln(4)\cdot w(C)$, and 
\item for every edge $e=\{v,w\}\in S_C$, the set $\{ f\in C: e\in W_f\}$ contains a $v$-$w$ path.
\end{itemize}
\end{lemma}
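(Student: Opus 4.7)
My plan is to adopt the probabilistic construction of Byrka et al., followed by derandomization for the algorithmic statement. Viewing $C$ as a tree, I would root it at an arbitrary terminal and draw a uniformly random permutation $\pi$ of $T_C$. I would then build $S_C$ by processing terminals in the order given by $\pi$: when the $i$-th terminal $t = \pi(i)$ is processed (for $i \geq 2$), I add to $S_C$ the edge $\{t, t'\}$, where $t' \in \{\pi(1),\ldots,\pi(i-1)\}$ minimizes the $C$-distance to $t$. This yields a spanning tree of $T_C$. For each $f \in C$, I define $W_f$ as the set of those $S_C$-edges $e=\{v,w\}$ whose $v$-$w$ path in $C$ contains $f$. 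The second bullet in the lemma then holds by construction, because for any $e=\{v,w\} \in S_C$ the set $\{f \in C: e \in W_f\}$ equals the (unique) $v$-$w$ path in $C$.

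The substantive part is to show $\mathbb{E}[\Phi(C)] \leq \ln(4) \cdot w(C)$, where the expectation is over $\pi$. I would analyze each edge $f \in C$ separately. Removing $f$ partitions $T_C$ into two parts of sizes $a$ and $b = |T_C|-a$, and $|W_f|$ equals the number of $S_C$-edges crossing this bipartition. Under the random construction, the distribution of $|W_f|$ can be worked out via a classical random-permutation argument (of the kind used by Greene and Magnanti~\cite{greene_1975_some}). Using the integral identity $H_q = \int_0^1 \tfrac{1-(1-x)^q}{x}\,dx$, the quantity $\mathbb{E}[H_{|W_f|}]$ reduces to a single elementary integral whose worst case over pairs $(a,b)$ (with fixed sum) is bounded by $\ln 4$. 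Summing $w(f) \cdot \mathbb{E}[H_{|W_f|}]$ over $f \in C$ therefore gives the claimed bound, and so some choice of $\pi$ witnesses $\Phi(C) \leq \ln(4)\cdot w(C)$.

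For the efficiency part, I would derandomize by the method of conditional expectations: at each step one selects the next terminal in the permutation so as to minimize the conditional expected value of $\Phi(C)$, which is computable in polynomial time from the $H$-identity above. Alternatively, since this lemma is applied only to $k$-components with $k$ a constant fixed by the $k$-restricted setting, one may simply enumerate all $O(k!)$ permutations and output the best. The main obstacle I anticipate is the calculus step that pins down $\ln 4$ as the correct constant in the per-edge expectation bound; this is precisely where the argument mirrors, and essentially reuses, the analyses of~\cite{byrka_2013_steiner, goemans_2012_matroids}, and where all the non-trivial work of the previous Steiner-tree literature is concentrated.
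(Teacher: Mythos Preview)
Your high-level outline---choose a random spanning tree $S_C$ of $T_C$, set $W_f$ to be the $S_C$-edges whose $C$-path uses $f$, show $\mathbb{E}[\Phi(C)]\le\ln(4)\cdot w(C)$, then derandomize---matches the paper. The gap is that the random construction you specify (nearest-neighbor attachment under a random permutation of $T_C$) is \emph{not} the construction of~\cite{byrka_2013_steiner}, and your sketched analysis does not go through for it. Under your rule the distribution of $|W_f|$ does not depend only on the bipartition sizes $(a,b)$; it depends on the metric of $C$ and on how distance ties are broken, so the claimed reduction to ``a single elementary integral'' indexed by $(a,b)$ is unfounded. (The Greene--Magnanti citation is also misplaced: that paper is about matroid exchanges, not random permutations.) Concretely, take a star with Steiner center $c$ and unit-weight edges to terminals $t_1,\dots,t_m$. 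All pairwise $C$-distances among terminals coincide, so ``connect $\pi(i)$ to $\pi(i-1)$'' is an admissible nearest-neighbor rule; it makes $S_C$ a Hamiltonian path, hence $|W_{\{c,t_j\}}|=\deg_{S_C}(t_j)\in\{1,2\}$ and $\Phi(C)/w(C)=(3m-2)/(2m)\to 3/2>\ln 4$. Thus your construction, as written, can fail the bound for every permutation.

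The construction actually used in~\cite{byrka_2013_steiner} roots $C$ at a terminal and, at each internal Steiner node, marks one child edge uniformly at random; the witness tree $S_C$ is then read off from the marked edges. Its key property is the per-edge tail bound $\mathbb{P}[|W_f|\le q]\ge 1-2^{-q}$ (Lemma~18 of~\cite{byrka_2013_steiner}), which immediately gives $\mathbb{E}[H_{|W_f|}]\le\sum_{i\ge1}2^{-i}H_i=\ln 4$ and hence $\mathbb{E}[\Phi(C)]\le\ln(4)\cdot w(C)$. The paper then derandomizes by minimizing $\Phi(C)$ over a class of candidate witness trees via the dynamic program of~\cite{goemans_2012_matroids}; as both the paper and you note, for $k$-components with constant $k$ one can instead simply enumerate all spanning trees of $T_C$.
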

\begin{proof}
Once we fixed a tree $S_C \subseteq \left(\!\begin{smallmatrix} T_C\\ 2 \end{smallmatrix}\!\right)$ spanning $T_C$, we set 
\[
W_f\coloneqq\{\{a,b\}\in S_C : f\text{ is contained in the unique $a$-$b$ path in $C$}\}\enspace.
\]
It follows from \cite{byrka_2013_steiner} that there exists a choice of $S_C$ such that $\Phi(C) \le \ln(4) \cdot w(C)$.
More precisely, it was shown in~\cite{byrka_2013_steiner} that one can choose a random tree $S_C \subseteq \left(\!\begin{smallmatrix} T_C\\ 2 \end{smallmatrix}\!\right)$ spanning $T_C$ such that for every edge $f \in C$, we have $\mathbb{P}[|W_{f}|\le q] \ge \sum_{i=1}^q \frac{1}{2^i}$ for all $q\in \mathbb{Z}_{\ge 0}$.%
\footnote{
\textcite{byrka_2013_steiner} call the tree $S_C$ a witness tree and denote it by $W$. 
We apply their construction of the witness tree (in Section~5 of \cite{byrka_2013_steiner}) to the component $C$, i.e., to the Steiner tree connecting the terminals in $T_C$.
For the bound on the cardinality of the witness set $W_{\ell}$, see Lemma~18 in \cite{byrka_2013_steiner}.
}
Then $\mathbb{E}[|W_{f}|] \le \sum_{i=1}^{\infty} \frac{1}{2^i} \cdot H_i = \ln 4$ for every edge $f\in C$, implying $\mathbb{E}[\Phi(C)]\le \ln(4)\cdot w(C)$.

Finally, it was shown in~\cite{goemans_2012_matroids} how to efficiently find, among a large class of trees $S_C \subseteq \left(\!\begin{smallmatrix} T_C\\ 2 \end{smallmatrix}\!\right)$ spanning $T_C$, which includes the ones considered by~\cite{byrka_2013_steiner}, a tree $S_C$ that minimizes $\Phi(C)$ through a dynamic program (Lemma~B.3 in~\cite{goemans_2012_matroids}).
Hence, such a tree $S_C$ leads to a potential $\Phi(C)$ that satisfies $\Phi(C) \leq \ln(4)\cdot w(C)$ as desired.
\end{proof}
We remark that if the starting solution of our local search procedure is a $k$-restricted Steiner tree, we need to apply Lemma~\ref{lem:witness_sets} only to $k$-components with constant $k$.
Then one does not need to use the dynamic program from~\cite{goemans_2012_matroids} to compute $S_C$, but one can simply enumerate over all possible trees $S_C \subseteq \left(\!\begin{smallmatrix} T_C\\ 2 \end{smallmatrix}\!\right)$ spanning $T_C$ to find the tree $S_C$ that leads to a minimum  potential $\Phi(C)=\sum_{f\in C} H_{|W_f|} \cdot w(f)$.
\bigskip

In the following we denote by $S_C \subseteq \left(\!\begin{smallmatrix} T_C\\ 2 \end{smallmatrix}\!\right)$ the tree that we obtain from Lemma~\ref{lem:witness_sets}.
For a terminal spanning tree $S$ and a component $C\subseteq E$, we define $\Drop^{\overline{w}}_S(C)\subseteq S$ to be a set in
\[
\mathrm{argmax} \{ \overline{w}(D) : D\subseteq S \text{ such that }(S \setminus D) \cup S_C \text{ is a terminal spanning tree} \}\enspace,
\]
i.e., $\Drop^{\overline{w}}_S(C)$ is a maximum weight set with respect to $\overline{w}$ that we can remove from $S$ when adding $S_C$.
We observe that, for any $D\subseteq S$, the set $(S \setminus D) \cup S_C$ is a terminal spanning tree if and only if the graph $(T, S\setminus D)/T_C$ is connected.
Thus, $\Drop^{\overline{w}}_S(C)$ depends only on the set $T_C$ of terminals connected by the component $C$.
\bigskip

\begin{figure}
\begin{center}
\begin{tikzpicture}[xscale=1.55,yscale=3]

\tikzset{
  prefix node name/.style={%
    /tikz/name/.append style={%
      /tikz/alias={#1##1}%
    }%
  }
}

\tikzset{steiner/.style={
fill=black,circle,inner sep=0em,minimum size=0pt, inner sep=2.5pt, outer sep=1pt}
}

\newcommand\term[2][]{

\begin{scope}[prefix node name=#1]

\begin{scope}[every node/.style={ultra thick,draw=gray!70!black,fill=none,rectangle,minimum size=0pt, inner sep=4pt, outer sep=2pt}]

\node (0) at (0,1) {};
\node (1) at (0,2) {};
\node (2) at (2,2) {};
\node (3) at (2,1) {};
\node (4) at (3,0) {};
\node (5) at (4,1) {};
\node (6) at (4,2) {};

\end{scope}

\end{scope}

}%

\begin{scope}[shift={(0,0)}]
\term[f1-]{}
\begin{scope}[every node/.style={steiner}]
\node (f1-a) at (0.5,1.5) {};
\node (f1-b) at (1.5,1.5) {};
\node (f1-c) at (3, 0.65) {};
\end{scope}
\end{scope}

\begin{scope}[shift={(5.5,0)}]
\term[s1-]{}

\node[darkred] () at (3.2,1.3) {$\Drop^{\overline{w}}_S(C)$};
\end{scope}

\begin{scope}[shift={(0,-2.7)}]
\term[f2-]{}

\node[blue] () at (3.1,1.4) {$C$};

\begin{scope}[every node/.style={steiner}]
\node (f2-a) at (0.5,1.5) {};
\node (f2-b) at (1.5,1.5) {};
\node (f2-c) at (3, 0.65) {};
\node (f2-d) at (3.3, 1.6) {};
\end{scope}

\end{scope}

\begin{scope}[shift={(5.5,-2.7)}]
\term[s2-]{}

\node[blue] () at (4,1.7) {$S_C$};
\end{scope}

\begin{scope}[very thick, black]
\draw (f1-0) -- node[left] () {$\{e_1 , e_2 \}$} (f1-a);
\draw (f1-1) -- node[left] () {$\{e_1 \}$} (f1-a);
\draw (f1-a) -- node[above] () {$\{e_2 \}$} (f1-b);
\draw (f1-2) -- node[right] () {$\{e_3 \}$} (f1-b);
\draw (f1-3) -- node[right] () {$\{e_2 , e_3 \}$} (f1-b);

\draw (f1-3) -- node[below left=-4pt] () {$\{e_4 , \textcolor{darkred}{e_5} \}$} (f1-c);
\draw (f1-4) -- node[left] () {$\{e_4 \}$} (f1-c);
\draw (f1-5) -- node[below right=-4pt] () {$\{\textcolor{darkred}{e_5}\}$} (f1-c);

\draw (f1-5) -- node[left] () {$\{\textcolor{darkred}{e_6} \}$} (f1-6);

\draw (f2-0) -- node[left] () {$\{e_1 , e_2 \}$} (f2-a);
\draw (f2-1) -- node[left] () {$\{e_1 \}$} (f2-a);
\draw (f2-a) -- node[above] () {$\{e_2 \}$} (f2-b);
\draw (f2-2) -- node[right] () {$\{e_3 \}$} (f2-b);
\draw (f2-3) -- node[right] () {$\{e_2 , e_3 \}$} (f2-b);

\draw (f2-3) -- node[below left=-4pt] () {$\{e_4 \}$} (f2-c);
\draw (f2-4) -- node[left] () {$\{e_4 \}$} (f2-c);

\begin{scope}[blue, ultra thick] %
\draw (f2-2) -- node[above right=-0pt and -9pt] () {$\{e_7, e_8 \}$} (f2-d);
\draw (f2-5) -- node[above right=-4pt] () {$\{e_8 \}$} (f2-d);
\draw (f2-6) -- node[below right=-4pt] () {$\{e_7 \}$} (f2-d);

\end{scope}
\end{scope}

\begin{scope}[very thick, densely dashed]

\draw (s1-1) -- node[right] () {$e_1$} (s1-0);
\draw (s1-3) -- node[above] () {$e_2$} (s1-0);
\draw (s1-3) -- node[left] () {$e_3$} (s1-2);
\draw (s1-3) -- node[left] () {$e_4$} (s1-4);
\draw[darkred, ultra thick] (s1-3) -- node[below] () {$e_5$}(s1-5);
\draw[darkred, ultra thick] (s1-6) -- node[right] () {$e_6$} (s1-5);

\draw (s2-1) -- node[right] () {$e_1$} (s2-0);
\draw (s2-3) -- node[above] () {$e_2$} (s2-0);
\draw (s2-3) -- node[left] () {$e_3$} (s2-2);
\draw (s2-3) -- node[left] () {$e_4$} (s2-4);

\begin{scope}[blue, ultra thick] %
\draw (s2-5) -- node[right] () {$e_8$} (s2-2);
\draw (s2-6) -- node[above] () {$e_7$}(s2-2);
\end{scope}

\end{scope}

\end{tikzpicture}
 \end{center}
\caption{Example of a local improvement step.
Terminals are shown as squares, non-terminals as circles.
The left column shows the current Steiner tree solution $F$ and the right column the terminal spanning tree $S$ before (top) and after (bottom) the improvement step.
The newly added component $C$ and the tree $S_C$ are shown in blue, the set $\Drop^{\overline{w}}_S(C)$ in red.
The witness sets $W_f$ for $f\in F$ are written next to the edges in $F$.
\label{fig:example_steiner}}
\end{figure}

In a local improvement step of our algorithm, we will select a $k$-component $C$, add $S_C$ to $S$, and remove $\Drop^{\overline{w}}_S(C)$ from $S$. 
Then we will remove all $f\in F$ from $F$ for which the witness set $W_f$ became empty.
See Figure~\ref{fig:example_steiner} for an example.
In such an improvement step, the potential $\Phi(F)$ will decrease by at least $\overline{w}(\Drop^{\overline{w}}_S(C)) - \ln(4)\cdot w(C)$ as we show below (Lemma~\ref{lem:potential_decrease_steiner}), and we therefore select a $k$-component $C$ maximizing $\overline{w}(\Drop^{\overline{w}}_S(C)) - \ln(4)\cdot w(C)$.
A formal description of our local search algorithm is given in Algorithm~\ref{algo:steiner_local_search}.

In the following we fix constants $0 < \epsilon \le 1$ and $k\in \mathbb{Z}_{\ge 2}$, and we define $n\coloneqq |V|$.

\begin{algorithm2e}[H]
\KwIn{A Steiner tree instance $(G=(V,E),T,w)$.}
\KwOut{A Steiner tree solution $F$ for the terminal set $T$ with $w(F)\le (\ln4+\epsilon)\cdot w(\OPT_k)$.}
\vspace*{2mm}
\begin{enumerate}[label=\arabic*.,ref=\arabic*,rightmargin=7mm]\itemsep4pt
\item\label{item:initialize_steiner}
Let $F\subseteq E$ be an arbitrary Steiner tree. \\
Define witness sets $W_f$ for all $f\in F$ by applying Lemma~\ref{lem:witness_sets} to $F$.
\item\label{item:local_step_steiner} Iterate the following as long as $\Phi(F)$ decreases by at least a factor $ \left( 1- \frac{\epsilon}{2H_n\cdot\ln(4)\cdot|T|}\right)$.
\begin{itemize}\itemsep1pt
\item \textbf{Select a best component:} Choose a $k$-component $C\subseteq E$ that maximizes $\overline{w}(\Drop^{\overline{w}}_S(C)) - \ln(4) \cdot w(C)$, where $S=\bigcup_{f\in F} W_f$. (See Lemma~\ref{lem:find_k_component}.)
\item \textbf{Remove $\mathrm{\mathbf{Drop}}$:} Replace the witness set $W_{f}$ by $W_{f} \setminus \Drop^{\overline{w}}_S(C)$ for all $f \in F$. \\
      If for some edge $f\in F$, the witness set $W_{f}$ becomes empty, remove $f$ from $F$.
\item \textbf{Add the new component:} Add $C$ to $F$ and define witness sets $W_f$ for all $f\in C$ by applying Lemma~\ref{lem:witness_sets}.
\end{itemize}
\item Return $F$.
\end{enumerate}
\caption{Local search algorithm for Steiner tree}\label{algo:steiner_local_search}
\end{algorithm2e}
Note that in step~\ref{item:initialize_steiner} of Algorithm~\ref{algo:steiner_local_search}, we apply Lemma~\ref{lem:witness_sets} to the whole Steiner tree, which we can view as a single component.
Here we assume without loss of generality that $F$ is (the edge set of) a tree; otherwise we can remove some edges from $F$ while maintaining a Steiner tree solution.
Alternatively, if the starting solution $F$ computed in step~\ref{item:initialize_steiner} is a $k$-restricted Steiner tree, we can simply apply Lemma~\ref{lem:witness_sets} to every $k$-component $C\in \Cscr$ of the $k$-restricted Steiner tree $F=\bigcupp_{C\in\Cscr} C$ separately.\footnote
{ 
If $F$ is not inclusionwise minimal, it might happen that $S=\bigcup_{f\in F} W_f$ contains a terminal spanning tree, but is not a terminal spanning tree itself. 
In this case, we can remove edges from $S$ to turn it into a terminal spanning tree.
Alternatively, we can first remove edges from $F$ to turn it into an inclusionwise minimal $k$-restricted Steiner tree before applying Lemma~\ref{lem:witness_sets} to every $k$-component.
}
As mentioned after Lemma~\ref{lem:witness_sets}, this avoids applying Lemma~\ref{lem:witness_sets} to components connecting more than a constant number of terminals.
\bigskip 

Let us now turn to the analysis of Algorithm~\ref{algo:steiner_local_search}.
First, we show that, in every iteration of the algorithm, the potential $\Phi(F)$ indeed decreases by at least $\overline{w}(\Drop^{\overline{w}}_S(C)) - \ln(4)\cdot w(C)$.

\begin{lemma}\label{lem:potential_decrease_steiner}
Whenever we select a $k$-component $C$ in some iteration  of step~\ref{item:local_step_steiner} of Algorithm~\ref{algo:steiner_local_search}, then $\Phi(F)$ decreases by at least $\overline{w}(\Drop^{\overline{w}}_S(C)) - \ln(4) \cdot w(C)$ in this iteration.
\end{lemma}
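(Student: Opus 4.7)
The plan is to split each iteration into two phases and bound the change of $\Phi$ in each. In phase one, we replace $W_f$ by $W_f\setminus D$ for every $f\in F$, where $D:=\Drop^{\overline{w}}_S(C)$, and remove from $F$ all edges whose witness set became empty. In phase two, we add $C$ to $F$ and install the witness sets $W_f$ ($f\in C$) given by Lemma~\ref{lem:witness_sets}. Phase two increases $\Phi$ by exactly $\Phi(C)=\sum_{f\in C}H_{|W_f|}w(f)$, which by Lemma~\ref{lem:witness_sets} is at most $\ln(4)\cdot w(C)$. So the heart of the argument is to show that phase one decreases $\Phi$ by at least $\overline{w}(D)$.

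For phase one, I would proceed edge-by-edge. Fix $f\in F$ and set $q:=|W_f|$ and $k':=|W_f\cap D|$. After the removal, the witness set of $f$ has size $q-k'$ (with $f$ being removed from $F$ if $q-k'=0$). Since $H_0=0$, the contribution of $f$ to $\Phi$ drops from $H_q\cdot w(f)$ to $H_{q-k'}\cdot w(f)$ in all cases, i.e.\ the decrease equals
\[
\bigl(H_q-H_{q-k'}\bigr)\cdot w(f)\ =\ \sum_{i=q-k'+1}^{q}\frac{1}{i}\cdot w(f)\ \ge\ \frac{k'}{q}\cdot w(f),
\]
where the inequality uses that each of the $k'$ summands is at least $\tfrac{1}{q}$. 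This is essentially the same telescoping trick as in the proof of Lemma~\ref{lem:potential_decrease_wtap}, only without the shortcut that $|W_f|\le 2$.

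It remains to sum the per-edge bound over $f\in F$ and match it to $\overline{w}(D)$. By definition of $\overline{w}$,
\[
\overline{w}(D)\ =\ \sum_{e\in D}\ \sum_{f\in F:\,e\in W_f}\frac{w(f)}{|W_f|}\ =\ \sum_{f\in F}\frac{|W_f\cap D|}{|W_f|}\cdot w(f),
\]
which coincides with the sum of the lower bounds derived above. Combining phases one and two yields that $\Phi(F)$ decreases by at least $\overline{w}(D)-\ln(4)\cdot w(C)=\overline{w}(\Drop^{\overline{w}}_S(C))-\ln(4)\cdot w(C)$, as claimed. I do not expect a real obstacle here: the only point to be careful about is bookkeeping when a witness set empties out, and this is handled cleanly by the convention $H_0=0$.
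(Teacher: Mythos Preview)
Your proof is correct and follows essentially the same approach as the paper: the paper also splits the iteration into the ``Remove Drop'' phase and the ``Add component'' phase, bounds the latter by $\ln(4)\cdot w(C)$ via Lemma~\ref{lem:witness_sets}, and for the former uses the identical telescoping/harmonic inequality $\sum_{i=|W_f\setminus D|+1}^{|W_f|}\tfrac{1}{i}\ge \tfrac{|W_f\cap D|}{|W_f|}$ together with swapping the order of summation to recover $\overline{w}(D)$. Your explicit handling of the $H_0=0$ case is the only (cosmetic) addition.
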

\begin{proof}
To simplify the notation, we write $\Drop \coloneqq \Drop^{\overline{w}}_S(C)$.
We first bound the decrease of the potential by removing $\Drop$ from all witness sets $W_f$ with $f\in F$.
We have
\begin{align*}
\sum_{f\in F} H_{|W_f|}\cdot w(f) - \sum_{f\in F} H_{|W_f\setminus \Drop|}\cdot w(f)
=&\ \sum_{f\in F}\ \sum_{i=|W_f\setminus \Drop|+1}^{|W_f|} \frac{1}{i}\cdot w(f) \\
\ge& \sum_{f\in F}\ \sum_{e\in W_f\cap\Drop} \frac{1}{|W_f|}\cdot w(f) \\
=&\ \sum_{e\in \Drop}\ \sum_{f\in F:e\in W_f} \frac{1}{|W_f|}\cdot w(f) \\
=&\ \overline{w}(\Drop)\enspace.
\end{align*}
Moreover, by Lemma~\ref{lem:witness_sets}, we have 
$\Phi(C)=\sum_{f\in C} H_{|W_f|} \cdot w(f) \le \ln(4)\cdot w(C)$. Thus, adding $C$ to $F$ increases the potential $\Phi(F)$ by at most $\ln(4)\cdot w(C)$ .
\end{proof}

Next, we show that we can find a best component in step~\ref{item:local_step_steiner} of our algorithm efficiently.

\begin{lemma}\label{lem:find_k_component}
Let $k\in \mathbb{Z}_{\ge 2}$ be a constant.
Given a Steiner tree instance $(G=(V,E),T,w)$, a terminal spanning tree $S$, and weights $\overline{w}: S\to \mathbb{R}_{\ge 0}$, we can efficiently compute a $k$-component $C\subseteq E$ 
maximizing $\overline{w}(\Drop^{\overline{w}}_S(C)) - \ln(4) \cdot w(C) $.
\end{lemma}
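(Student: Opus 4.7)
The plan is to exploit the fact that $k$ is a constant, so we can afford to enumerate over all possible terminal sets $T_C$. Since $|T_C|\le k$, the number of candidate subsets $T'\subseteq T$ with $2\le |T'|\le k$ is $O(|T|^k)$, which is polynomial. Moreover, as already observed in the paragraph preceding the lemma, both the feasibility of a set $D\subseteq S$ in the definition of $\Drop^{\overline{w}}_S(C)$, and hence also the value $\overline{w}(\Drop^{\overline{w}}_S(C))$, depend on $C$ only through $T_C$. Consequently, the objective $\overline{w}(\Drop^{\overline{w}}_S(C))-\ln(4)\cdot w(C)$ splits, for a fixed value of $T_C$, into a constant first term and a second term that is maximized by taking any $k$-component $C$ with $T_C = T'$ that minimizes $w(C)$.

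For each candidate $T'$, I would therefore compute two quantities separately. First, to obtain $\overline{w}(\Drop^{\overline{w}}_S(C))$ for any $C$ with $T_C=T'$, I would use that $(S\setminus D)\cup S_C$ is a terminal spanning tree if and only if $S\setminus D$ becomes a spanning tree after contracting $T'$ in $(T,S)$. The feasible sets $S\setminus D$ are thus precisely the bases of the graphic matroid of the contracted multigraph $(T,S)/T'$, so $S\setminus\Drop^{\overline{w}}_S(C)$ is a minimum $\overline{w}$-weight spanning tree of $(T,S)/T'$ and can be obtained by Kruskal's algorithm in polynomial time. The value $\overline{w}(\Drop^{\overline{w}}_S(C))$ is then $\overline{w}(S)$ minus the weight of this minimum spanning tree.

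Second, to find a minimum weight $k$-component $C$ with $T_C=T'$, I would apply the Dreyfus-Wagner algorithm~\cite{dreyfus_1971_steiner} to the graph $G-(T\setminus T')$ (i.e., $G$ with all other terminals removed so that they cannot appear inside $C$), with terminal set $T'$. Since $|T'|\le k$ is constant, this runs in polynomial time in $n$. Enumerating $T'$ and returning the component $C$ that attains the largest value of $\overline{w}(\Drop^{\overline{w}}_S(C))-\ln(4)\cdot w(C)$ then produces a maximizer of the desired objective.

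There is no real obstacle here; the work was already done by the observation in the paragraph preceding the lemma that $\Drop^{\overline{w}}_S(C)$ depends on $C$ only through $T_C$, together with the matroid interpretation of the ``drop'' computation and the standard polynomial-time solvability of Steiner tree for a constant number of terminals. The only mild point to check is that restricting to the subgraph $G-(T\setminus T')$ correctly enforces $T_C=T'$ (rather than $T_C\supsetneq T'$), and this larger-$T_C$ case is in any event covered by another iteration of the enumeration.
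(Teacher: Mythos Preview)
Your proposal is correct and follows essentially the same approach as the paper: enumerate all terminal subsets $T'$ of size at most $k$, compute a cheapest Steiner tree for each using Dreyfus--Wagner, evaluate the objective, and return the best. You add two details the paper glosses over---computing $\overline{w}(\Drop^{\overline{w}}_S(C))$ via a minimum spanning tree of the contracted graph, and deleting $T\setminus T'$ to enforce $T_C=T'$ exactly---but these are refinements of the same argument rather than a different route.
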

\begin{proof}
Because $\Drop^{\overline{w}}_S(C)$ depends only on the terminal set $T_C$ connected by the component $C$, the following yields an optimal $k$-component.
We enumerate over all subsets $T_C \subseteq T$ with $|T_C|\le k$ and compute a cheapest Steiner tree with terminal set $T_C$, which is possible in polynomial time because $k$ is constant; see, e.g.,~\cite{dreyfus_1971_steiner}.
Among these components, we return the one maximizing $\overline{w}(\Drop^{\overline{w}}_S(C)) - \ln(4) \cdot w(C) $.
\end{proof}

\begin{lemma}\label{lem:steiner_feasible}
When Algorithm~\ref{algo:steiner_local_search} terminates, the edge set $F$ is a Steiner tree solution.
\end{lemma}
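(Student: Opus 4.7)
The plan is to identify two invariants that are easily seen to imply feasibility at termination, and then verify that these invariants are established in step~\ref{item:initialize_steiner} and preserved by every iteration of step~\ref{item:local_step_steiner}. Concretely, throughout the algorithm I aim to maintain:
\begin{enumerate}
\item[(I)] $S\coloneqq\bigcup_{f\in F} W_f$ is a terminal spanning tree;
\item[(II)] for every edge $e=\{v,w\}\in S$, the set $\{f\in F:e\in W_f\}\subseteq F$ contains a $v$-$w$ path.
\end{enumerate}
Given (I) and (II), feasibility of $F$ follows immediately: for any two terminals $t,t'\in T$, the tree $S$ contains a $t$-$t'$ path $t=u_0,u_1,\dots,u_\ell=t'$ through terminals, and (II) lets me replace every tree edge $\{u_{i-1},u_i\}$ by an actual $u_{i-1}$-$u_i$ path in $F$; concatenating yields a walk in $F$ between $t$ and $t'$.

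For the base case, in step~\ref{item:initialize_steiner} we start from an arbitrary Steiner tree $F$ (which we may assume is a tree, possibly after discarding redundant edges) and invoke Lemma~\ref{lem:witness_sets} on the single component $F$. That lemma provides a terminal spanning tree $S_F$ and witness sets $W_f\subseteq S_F$ satisfying exactly property~(II); setting $S\coloneqq S_F$ establishes both invariants. (If instead one applies Lemma~\ref{lem:witness_sets} separately to the components of a $k$-restricted starting solution, property~(II) follows componentwise and the union of the component witness trees is a terminal spanning tree because the component terminal sets form a hypertree.)

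The main work, and the core of the argument, is the inductive step. Consider one iteration of step~\ref{item:local_step_steiner} that selects a component $C$; write $\Drop\coloneqq\Drop^{\overline{w}}_S(C)$, and let $F',W'_f,S'$ denote the updated objects. Invariant~(I) for $S'=(S\setminus\Drop)\cup S_C$ is immediate from the very definition of $\Drop^{\overline{w}}_S(C)$, which requires $(S\setminus\Drop)\cup S_C$ to be a terminal spanning tree. For invariant~(II) I treat the two kinds of edges of $S'$ separately. For $e\in S_C$, Lemma~\ref{lem:witness_sets} applied to $C$ directly yields a $v$-$w$ path among $\{f\in C:e\in W'_f\}\subseteq F'$. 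For $e\in S\setminus\Drop$, the inductive hypothesis gives a $v$-$w$ path in $\{f\in F:e\in W_f\}$. The delicate point — and the one thing that actually needs to be observed — is that none of these path edges can disappear when we pass from $F$ to $F'$: an edge $f$ is removed only when $W_f\setminus\Drop=\emptyset$, i.e.\ $W_f\subseteq\Drop$, whereas any $f$ with $e\in W_f$ satisfies $e\in W_f\setminus\Drop$, so $W'_f\neq\emptyset$ and $f\in F'$. Moreover, $W_f$ is unchanged on this edge set, so the same path certifies~(II) for $e$ under the updated witness sets.

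The potential obstacle I expect is ensuring that the accounting around removed edges of $F$ is watertight; it would be tempting to worry that shrinking witness sets could break the ``contains a path'' property for some edge $e\in S\setminus\Drop$, but as just observed the removal rule is precisely calibrated so this cannot happen. Everything else is bookkeeping, and the conclusion follows once the two invariants are shown to persist up to termination, at which point (I) and (II) give the Steiner tree property of $F$ as explained above.
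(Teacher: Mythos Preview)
Your proposal is correct and follows essentially the same approach as the paper: both arguments maintain that $S=\bigcup_{f\in F}W_f$ stays a terminal spanning tree (by definition of $\Drop^{\overline{w}}_S(C)$) and that the ``contains a path'' property of the witness sets persists because an edge $f$ is only deleted from $F$ once $W_f$ becomes empty. The paper phrases the second point temporally (the path established when $e$ enters $S$ survives until $e$ leaves $S$), whereas you package it as an explicit invariant and split the case analysis over $e\in S_C$ versus $e\in S\setminus\Drop$; the underlying content is the same.
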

\begin{proof}
At the end of step~\ref{item:initialize_steiner}, $F$ is a Steiner tree solution and hence, by Lemma~\ref{lem:witness_sets}, the set $S=\cup_{f\in F}W_f$ is a terminal spanning tree.
By the definition of $\Drop^{\overline{w}}_S(C)$, the edge set $S$ remains a terminal spanning tree throughout the algorithm.
When we add an edge $e=\{a,b\}$ to $S$, then $\{f\in F: e\in W_f \}$ contains an $a$-$b$ path $P$.
Because an edge $f$ is only removed from $F$ when its witness set $W_f$ becomes empty, the path $P$ remains in $F$ until $e$ is removed from $S$.
Therefore, because $S$ connects all terminals throughout the algorithm, also $F$ connects all terminals throughout the algorithm.
\end{proof}

Let us now analyze the approximation ratio of our algorithm.
We first prove a lower bound on the decrease of $\Phi(F)$ in a single iteration.
For this we use a well-known block exchange property of matroids, stated in the lemma below. 
We will apply this result to the matroid whose bases are the terminal spanning trees.
\begin{lemma}[\cite{greene_1975_some}]\label{lem:matroids_block_exchange}
Let $\mathcal{M}$ be a matroid and let $B_1,B_2$ be bases of $\mathcal{M}$. 
Let $\mathcal{P}_1$ be a partition of $B_1$.
Then there exists a partition $\mathcal{P}_2$ of $B_2$ and a bijection $\phi: \mathcal{P}_1 \to \mathcal{P}_2$ such that for each $X \in \mathcal{P}_1$, the set $(B_2 \setminus \phi(X))\cup X$ is a basis of the matroid $\Mscr$.
\end{lemma}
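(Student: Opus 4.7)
The plan is to deduce the lemma from Edmonds' matroid union theorem together with a single submodularity argument on the rank function $r$ of $\Mscr$. Write $\mathcal{P}_1 = \{X_1, \ldots, X_p\}$, and assume first that $B_1 \cap B_2 = \emptyset$; common elements can be pre-assigned to the class of the partition containing them and contracted in $\Mscr$, reducing to this case. For each $i$, let $\Mscr_i := (\Mscr/X_i)|_{B_2}$, which is a matroid on $B_2$ of rank $|B_2| - |X_i|$ (since $X_i \subseteq B_1$ is independent). Its dual $\Mscr_i^*$ then has rank $|X_i|$, and a direct check shows that, for $Y \subseteq B_2$ with $|Y| = |X_i|$, the set $(B_2 \setminus Y) \cup X_i$ is a basis of $\Mscr$ if and only if $Y$ is a basis of $\Mscr_i^*$. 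Hence, producing the desired bijection $\phi$ is equivalent to partitioning $B_2$ into bases of $\Mscr_1^*, \ldots, \Mscr_p^*$, which is precisely the setting of matroid union.

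By the matroid union theorem, such a partition of $B_2$ exists if and only if $\sum_{i=1}^p r_i^*(A) \geq |A|$ for every $A \subseteq B_2$, where $r_i^*$ is the rank function of $\Mscr_i^*$. Unwinding the standard formulas $r_i^*(A) = |A| + r_i(B_2 \setminus A) - r_i(B_2)$, $r_i(S) = r(S \cup X_i) - |X_i|$, and $r_i(B_2) = |B_2| - |X_i|$ (using that $B_2$ spans $E$), and substituting $C := B_2 \setminus A$, reduces this condition to the following key inequality:
\[
\sum_{i=1}^p r(C \cup X_i) \ \geq\ |B_2| + (p-1)\,|C| \qquad \text{for every } C \subseteq B_2.
\]

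I will establish this inequality by induction on $p$. The base case $p = 1$ is immediate, since $r(C \cup B_1) \geq r(B_1) = |B_2|$. For the inductive step, submodularity of $r$ applied to $C \cup X_1$ and $C \cup X_2$ (whose intersection equals $C$ because $X_1 \cap X_2 = \emptyset$) yields
\[
r(C \cup X_1) + r(C \cup X_2) \ \geq\ r(C \cup X_1 \cup X_2) + r(C) \ =\ r(C \cup X_1 \cup X_2) + |C|,
\]
where we used that $C \subseteq B_2$ is independent. Applying the inductive hypothesis to the $(p-1)$-class partition $\{X_1 \cup X_2, X_3, \ldots, X_p\}$ of $B_1$ then completes the induction. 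The bulk of the work thus lies in correctly unwinding the matroid union condition to arrive at the clean submodular inequality above; once this is done, the inductive submodularity argument and the reduction from $B_1 \cap B_2 \neq \emptyset$ are both straightforward.
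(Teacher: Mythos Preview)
Your proof is correct. The reduction to the disjoint case is clean, the identification of $\phi(X_i)$ with a basis of $\Mscr_i^*$ is right, the matroid-union condition unwinds exactly to the inequality $\sum_i r(C\cup X_i)\ge |B_2|+(p-1)|C|$, and the submodular induction (merging $X_1,X_2$ and using $(C\cup X_1)\cap (C\cup X_2)=C$) establishes it.

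As for comparison: the paper does not actually give a proof of this lemma. It cites \cite{greene_1975_some} (see also (42.15) in \cite{schrijver_2003_combinatorial}) and only remarks in a footnote that the stated version---with $(B_2\setminus\phi(X))\cup X$ a basis---follows from the version in \cite{greene_1975_some}---with $(B_1\setminus X)\cup\phi(X)$ a basis---by applying the latter to the dual of the restricted matroid $\Mscr|_{B_1\cup B_2}$. So your argument is a genuine, self-contained proof where the paper offers only a citation plus a one-line duality reduction. Your approach via Edmonds' matroid union theorem is in fact close in spirit to how such block-exchange results are traditionally proved; what it buys over the paper's footnote is that a reader without access to \cite{greene_1975_some} or \cite{schrijver_2003_combinatorial} can follow the argument end to end.
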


Lemma~\ref{lem:matroids_block_exchange} was proven in~\cite{greene_1975_some}; see also (42.15) in \cite{schrijver_2003_combinatorial}.\footnote
{
Theorem 3.3 in \cite{greene_1975_some} is a slightly different but equivalent version of Lemma~\ref{lem:matroids_block_exchange}, requiring that $(B_1\setminus X)\cup \phi(X)$ is a basis of $\mathcal{M}$ instead of $(B_2\setminus \phi(X))\cup X$, for every $X\in \mathcal{P}_1$. Lemma~\ref{lem:matroids_block_exchange} immediately follows from the one in~\cite{greene_1975_some} by applying the version in~\cite{greene_1975_some} to the dual matroid of the matroid $\mathcal{M}\vert_{B_1\cup B_2}$, where $\mathcal{M}\vert_{B_1\cup B_2}$ is the restriction of $\mathcal{M}$ to the elements $B_1\cup B_2$.
}
Lemma~\ref{lem:good_improvement_exists_steiner} below (together with Lemma~\ref{lem:potential_decrease_steiner}) provides a lower bound on the decrease of $\Phi(F)$ in a single iteration.
In particular, it immediately implies that, as long as our current solution $F$ has a weight strictly larger than $\ln(4)\cdot w(\OPT_k)$, the potential $\Phi(F)$ decreases, i.e., our algorithm makes progress.

\begin{lemma}\label{lem:good_improvement_exists_steiner}
In every iteration of Algorithm~\ref{algo:steiner_local_search}, we have 
\begin{equation*}
\overline{w}(\Drop^{\overline{w}}_S(C)) - \ln(4) \cdot w(C) \ \ge\ \frac{1}{|T|} \cdot \Big(w(F) - \ln(4) \cdot w(\OPT_k)\Big)\enspace.
\end{equation*}
\end{lemma}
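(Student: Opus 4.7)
The plan is an averaging argument: I will exhibit, via a matroid block exchange between the current terminal spanning tree $S$ and one derived from an optimal decomposition of $\OPT_k$, at most $|T|$ candidate $k$-components whose gains sum to at least $w(F)-\ln(4)\cdot w(\OPT_k)$. Since the algorithm in each iteration picks a $k$-component maximizing $\overline{w}(\Drop^{\overline{w}}_S(C))-\ln(4)\cdot w(C)$, averaging then yields the claimed bound.

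Fix an optimal $k$-restricted Steiner tree $\OPT_k=\bigcupp_{C^*\in\Cscr^*}C^*$ and, for each $C^*\in\Cscr^*$, apply Lemma~\ref{lem:witness_sets} to obtain a tree $S_{C^*}\subseteq\binom{T_{C^*}}{2}$ spanning $T_{C^*}$ with $\Phi(C^*)\le\ln(4)\cdot w(C^*)$. Because the hypergraph $(T,\{T_{C^*}\}_{C^*\in\Cscr^*})$ is connected, the graph $\bigl(T,\bigcup_{C^*}S_{C^*}\bigr)$ is connected and hence contains a terminal spanning tree $\tilde S$. Labeling each edge of $\tilde S$ by some $C^*$ whose $S_{C^*}$ contains it yields a partition $\mathcal{P}_1=\{X_{C^*}\}_{C^*\in\Cscr^*}$ of $\tilde S$ with $X_{C^*}\subseteq S_{C^*}$. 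Now invoke Lemma~\ref{lem:matroids_block_exchange} for the graphic matroid of $K_T$ (whose bases are precisely the terminal spanning trees) with $B_1=\tilde S$, $B_2=S$ and partition $\mathcal{P}_1$. This produces a partition $\mathcal{P}_2=\{\phi(X_{C^*})\}_{C^*\in\Cscr^*}$ of $S$ such that $(S\setminus\phi(X_{C^*}))\cup X_{C^*}$ is a terminal spanning tree for every $C^*$. Since $X_{C^*}\subseteq S_{C^*}$, the graph $\bigl(T,(S\setminus\phi(X_{C^*}))\cup S_{C^*}\bigr)$ is connected and contains the forest $S_{C^*}$; it therefore contains a spanning tree of $T$ that includes $S_{C^*}$. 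This exhibits a set $D\supseteq\phi(X_{C^*})$ with $(S\setminus D)\cup S_{C^*}$ a terminal spanning tree, so by the definition of $\Drop^{\overline{w}}_S(C^*)$ and nonnegativity of $\overline{w}$ we obtain $\overline{w}(\Drop^{\overline{w}}_S(C^*))\ge\overline{w}(\phi(X_{C^*}))$.

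Summing over $C^*\in\Cscr^*$, using that $\mathcal{P}_2$ partitions $S$ and the identity $\overline{w}(S)=w(F)$ established in Section~\ref{sec:steiner_algo}, gives $\sum_{C^*}\overline{w}(\Drop^{\overline{w}}_S(C^*))\ge w(F)$; combining with $\sum_{C^*}w(C^*)=w(\OPT_k)$ yields $\sum_{C^*\in\Cscr^*}\bigl[\overline{w}(\Drop^{\overline{w}}_S(C^*))-\ln(4)\cdot w(C^*)\bigr]\ge w(F)-\ln(4)\cdot w(\OPT_k)$. Averaging and using that the algorithm picks an optimal $k$-component then proves the lemma, provided $|\Cscr^*|\le|T|$. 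This last bound, which I expect to be the main technical subtlety, follows from optimality of $\OPT_k$: if instead $|\Cscr^*|\ge|T|$, then pigeonhole applied to a spanning tree (of size $|T|-1$) of the multi-graph $\bigcupp_{C^*}S_{C^*}$ produces a $C^*$ whose $S_{C^*}$ contributes no edge, and chasing this back through the hypergraph shows that $C^*$ can be deleted from $\Cscr^*$ without disconnecting $(T,\{T_{C^*}\})$, contradicting optimality of $\OPT_k$.
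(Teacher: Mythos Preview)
Your proof is correct and follows essentially the same strategy as the paper: partition an optimal $k$-restricted tree into its components, apply the matroid block-exchange lemma (Lemma~\ref{lem:matroids_block_exchange}) between the witness spanning tree of $\OPT_k$ and the current terminal spanning tree $S$, and then average using $|\Cscr^*|\le|T|$. The only difference is in the handling of $\bigcupp_{C^*}S_{C^*}$: the paper argues directly that this disjoint union \emph{is} a terminal spanning tree (by minimality of $\OPT_k$), which immediately yields both the partition $\mathcal{P}_1=\{S_{C^*}\}$ for the block exchange and the bound $|\Cscr^*|\le|T|-1$; you instead extract a spanning subtree $\tilde S$, label its edges to obtain the $X_{C^*}$'s, and then derive $|\Cscr^*|\le|T|$ separately via pigeonhole plus minimality. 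Both routes are valid and essentially equivalent.
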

\begin{proof}
Because $\OPT_k$ is a $k$-restricted Steiner tree, we can write
$\OPT_k=\bigcupp_{C\in\Cscr} C$ as the disjoint union of a collection $\Cscr$ of $k$-components such that the hypergraph with vertex set $T$ and edge set $\{ T_C : C\in \Cscr\}$ is connected. 
Then the disjoint union of the trees $S_C$ over all components in $\Cscr$ contains a terminal spanning tree.
In fact, it even is a terminal spanning tree because $\OPT_k$ is an optimal, and thus minimal, $k$-resticted Steiner tree.

Hence, by Lemma~\ref{lem:matroids_block_exchange} applied to the matoid whose bases are the terminal spanning trees, there is a partition $\Pscr$ of $S$ and a bijection $\Phi: \Cscr \to \Pscr$ such that, for each $C\in \Cscr$, the set $(S\setminus \phi(C)) \cup S_C$ is a terminal spanning tree.
Therefore, for each $C\in\Cscr$ we have $\overline{w}(\Drop^{\overline{w}}_S(C)) \ge \overline{w}(\phi(C))$ by the definition of $\Drop^{\overline{w}}_S(C)$.
Because $\sum_{C\in \Cscr}\overline{w}(\phi(C)) = \overline{w}(S)\ =\ w(F)$ and $\sum_{C\in \Cscr} w(C) = w(\OPT_k)$, this implies
\begin{align*}
\max_{C\in\Cscr} \Big(\overline{w}(\Drop^{\overline{w}}_S(C))-\ln(4) \cdot w(C)\Big)\ \ge&\ \max_{C\in\Cscr} \Big(\overline{w}(\phi(C))-\ln(4) \cdot w(C)\Big)\\
\ge&\ \frac{1}{|\Cscr|}\sum_{C\in \Cscr} \Big(\overline{w}(\phi(C))-\ln(4) \cdot w(C)\Big)\\
 \ge&\ \frac{1}{|T|}\cdot\Big(w(F) - \ln(4)\cdot w(\OPT_k)\Big)\enspace,
\end{align*}
where we used $|\Cscr| \le |T|$ for the last inequality.
\end{proof}

Using the above lower bound on the progress we make in a single iteration of our local search procedure, we can now prove that Algorithm~\ref{algo:steiner_local_search} indeed has the claimed approximation guarantee.

\begin{lemma}\label{lem:apx_guarantee_steiner}
When Algorithm~\ref{algo:steiner_local_search} terminates, it returns a Steiner tree solution $F$ with $w(F)\le (\ln(4)+\epsilon)\cdot w(\OPT_k)$.
\end{lemma}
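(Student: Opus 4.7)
The plan is to combine the termination criterion with Lemmas~\ref{lem:potential_decrease_steiner} and~\ref{lem:good_improvement_exists_steiner}, together with a simple upper bound of $\Phi(F)$ in terms of $w(F)$. First, I would observe that since $S=\bigcup_{f\in F} W_f$ is a terminal spanning tree on $T$, it has $|T|-1 \le n-1$ edges, so every witness set satisfies $|W_f|\le n-1$. Consequently
\[
\Phi(F) \;=\; \sum_{f\in F} H_{|W_f|}\cdot w(f)\;\le\; H_n\cdot w(F).
\]

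Next I would translate the termination rule into an inequality on $\overline{w}(\Drop^{\overline{w}}_S(C))-\ln(4)\cdot w(C)$. When the algorithm stops, the $k$-component $C^*$ selected by the ``best component'' step fails to decrease $\Phi(F)$ by a factor $(1-\tfrac{\epsilon}{2H_n\ln(4)|T|})$; in other words, the actual decrease of $\Phi$ obtained from $C^*$ is strictly less than $\tfrac{\epsilon}{2H_n\ln(4)|T|}\cdot \Phi(F)$. By Lemma~\ref{lem:potential_decrease_steiner}, this actual decrease is at least $\overline{w}(\Drop^{\overline{w}}_S(C^*))-\ln(4)\cdot w(C^*)$, and since $C^*$ is the maximizer of this quantity, we obtain
\[
\max_{C\text{ a $k$-component}} \bigl(\overline{w}(\Drop^{\overline{w}}_S(C))-\ln(4)\cdot w(C)\bigr) \;<\; \frac{\epsilon}{2H_n\ln(4)|T|}\cdot \Phi(F).
\]
Now I would invoke Lemma~\ref{lem:good_improvement_exists_steiner} to lower-bound the left-hand side by $\tfrac{1}{|T|}(w(F)-\ln(4)\cdot w(\OPT_k))$, and plug in $\Phi(F)\le H_n\cdot w(F)$ on the right-hand side. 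After multiplying by $|T|$, this yields
\[
w(F) - \ln(4)\cdot w(\OPT_k) \;<\; \frac{\epsilon}{2\ln(4)}\cdot w(F).
\]

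The final step is pure algebra: rearranging gives $w(F)\bigl(1-\tfrac{\epsilon}{2\ln(4)}\bigr) < \ln(4)\cdot w(\OPT_k)$, and then I need to check the clean bound
\[
\frac{\ln(4)}{1-\tfrac{\epsilon}{2\ln(4)}} \;\le\; \ln(4)+\epsilon,
\]
which is equivalent to $\epsilon(\ln(4)-\epsilon)\ge 0$, true since $\epsilon\le 1 < \ln(4)$. Combined with Lemma~\ref{lem:steiner_feasible} (feasibility at termination), this delivers $w(F)\le (\ln(4)+\epsilon)\cdot w(\OPT_k)$, completing the argument. No step looks like a serious obstacle; the main subtlety is just being careful about the direction of the inequality in the termination condition (the actual decrease may exceed the lower bound from Lemma~\ref{lem:potential_decrease_steiner}, but this only strengthens what we need) and choosing the threshold $\tfrac{\epsilon}{2H_n\ln(4)|T|}$ precisely so the algebra closes up.
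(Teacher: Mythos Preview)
Your proposal is correct and follows essentially the same approach as the paper's proof: combine the termination threshold with Lemmas~\ref{lem:potential_decrease_steiner} and~\ref{lem:good_improvement_exists_steiner}, use $\Phi(F)\le H_n\cdot w(F)$, and finish with the same algebraic manipulation (the paper writes the last step as $(1+\tfrac{\epsilon}{2\ln 4-\epsilon})\ln 4\le \ln 4+\epsilon$ using $\epsilon\le\ln 4$, which is equivalent to your check). Your added justification for $|W_f|\le n-1$ and your remark about the direction of the inequality in the termination condition are welcome clarifications but do not change the argument.
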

\begin{proof}
By Lemma~\ref{lem:potential_decrease_steiner} and Lemma~\ref{lem:good_improvement_exists_steiner}, we have that, in every iteration, the potential $\Phi(F)$ decreases by at least $\frac{1}{|T|} \cdot \left(w(F) - \ln(4) \cdot w(\OPT_k)\right)$.
Thus, when the algorithm terminates, it returns a Steiner tree solution $F\subseteq E$ that satisfies
\[ 
\frac{\epsilon}{2H_n\cdot\ln(4)\cdot |T|} \cdot \Phi(F) > \frac{1}{|T|} \cdot \left(w(F) - \ln(4) \cdot w(\OPT_k)\right) \enspace.
\]
(Note that Lemma~\ref{lem:steiner_feasible} guarantees that $F$ is a Steiner tree solution.)
Using $\Phi(F)\le H_n \cdot w(F)$, this implies
\[
\left(1-\frac{\epsilon}{2\cdot\ln(4)}\right) \cdot w(F) < \ln(4) \cdot w(\OPT_k)
\]
and thus  
\begin{align*}
w(F) &<    \left(1-\frac{\epsilon}{2\cdot \ln 4}\right)^{-1}\cdot \ln(4) \cdot w(\OPT_k)  \\
     &=    \left(1 + \frac{\epsilon}{2\ln(4)-\epsilon}\right)\cdot \ln(4) \cdot w(\OPT_k) \\
     &\leq \left(1 + \frac{\epsilon}{\ln(4)}\right)\cdot \ln(4) \cdot w(\OPT_k)           \\
     &=    \left(\ln(4) + \epsilon \right)\cdot w(\OPT_k)\enspace,
\end{align*}
where the last inequality uses $\epsilon\leq \ln(4)$.
\end{proof}

Finally, we show that our local search procedure terminates in polynomial time, which follows from an analysis analogous to the one we applied in the context of WTAP to derive Lemma~\ref{lem:bound_iter_WTAP}.
Note that the initial Steiner tree solution $F_0$ computed in step~\ref{item:initialize} of Algorithm~\ref{algo:steiner_local_search} has weight $w(F_0)\leq w(E)$; thus, the bound stated in the lemma below is indeed polynomial independently of the starting solution $F_0$.
\begin{lemma}
Algorithm~\ref{algo:steiner_local_search} terminates after at most $\ln\left(\frac{H_n \cdot w(F_0)}{w(\OPT)}\right)\cdot \frac{2H_n \cdot \ln(4)\cdot |V|}{\epsilon}$ iterations, where $F_0\subseteq E$ is the initial Steiner tree computed in step~\ref{item:initialize_steiner} of Algorithm~\ref{algo:steiner_local_search}.
\end{lemma}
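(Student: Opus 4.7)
The plan is to mimic the argument used for Lemma~\ref{lem:bound_iter_WTAP} in the WTAP setting. There are three ingredients to assemble: an upper bound on the initial potential, a lower bound on the potential that holds throughout the algorithm, and a bound on the per-iteration multiplicative decrease enforced by the termination condition in step~\ref{item:local_step_steiner}.

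First I would bound the initial potential by $\Phi(F_0) \le H_n \cdot w(F_0)$. This follows from the definition $\Phi(F_0) = \sum_{f \in F_0} H_{|W_f|} \cdot w(f)$ together with the trivial bound $|W_f| \le |S| \le |T| \le n$ (recall $n = |V|$), which implies $H_{|W_f|} \le H_n$. Second, I would show that $\Phi(F) \ge w(\OPT)$ at every point in the algorithm: each $|W_f| \ge 1$ (otherwise $f$ would have been removed), so $H_{|W_f|} \ge 1$ and hence $\Phi(F) \ge w(F)$; and Lemma~\ref{lem:steiner_feasible} guarantees that $F$ is a feasible Steiner tree throughout, so $w(F) \ge w(\OPT)$.

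Combining these bounds with the termination criterion (every iteration decreases $\Phi(F)$ by a factor of at least $1-\frac{\epsilon}{2H_n \ln(4) |T|}$, which is at most $1-\frac{\epsilon}{2H_n \ln(4) |V|}$ since $|T| \le |V|$), the number $N$ of iterations satisfies
\begin{equation*}
N \ \le\ \log_{(1-\epsilon/(2H_n \ln(4)|V|))^{-1}}\!\left(\frac{\Phi(F_0)}{w(\OPT)}\right) \ \le\ \log_{(1-\epsilon/(2H_n \ln(4)|V|))^{-1}}\!\left(\frac{H_n \cdot w(F_0)}{w(\OPT)}\right).
\end{equation*}
Finally, I would use the standard estimate $\ln(1+x) \le x$ applied to $x = \frac{\epsilon/(2H_n \ln(4)|V|)}{1-\epsilon/(2H_n \ln(4)|V|)}$, which yields $-\ln(1-\epsilon/(2H_n \ln(4)|V|)) \ge \epsilon/(2H_n \ln(4)|V|)$, to convert the $\log$ into the claimed bound
\begin{equation*}
N\ \le\ \ln\!\left(\frac{H_n \cdot w(F_0)}{w(\OPT)}\right)\cdot \frac{2 H_n \cdot \ln(4)\cdot |V|}{\epsilon}\enspace.
\end{equation*}

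There is no real technical obstacle; the only minor care point is noting $\Phi(F_0) \le H_n \cdot w(F_0)$ (rather than the $\ln(4) \cdot w(F_0)$ bound from Lemma~\ref{lem:witness_sets}), since the starting tree $F_0$ is a single arbitrary Steiner tree and the guarantee of Lemma~\ref{lem:witness_sets} applied to all of $F_0$ at once does not sharpen this estimate materially. Using the crude estimate $H_n$ suffices and matches the statement of the lemma.
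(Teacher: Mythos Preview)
Your proof is correct and follows essentially the same argument as the paper's own proof: bound $\Phi(F_0)\le H_n\cdot w(F_0)$, use $\Phi(F)\ge w(F)\ge w(\OPT)$ throughout, and combine with the per-iteration multiplicative decrease. One small slip: to obtain $-\ln(1-a)\ge a$ from $\ln(1+x)\le x$ you should take $x=-a$ (as the paper does), not $x=\tfrac{a}{1-a}$, which would give the reverse inequality $-\ln(1-a)\le \tfrac{a}{1-a}$; with this correction your derivation goes through verbatim.
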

\begin{proof}
At the beginning of Algorithm~\ref{algo:steiner_local_search} we have $\Phi(F)=\Phi(F_0)\le H_n \cdot w(F_0)$.
Because the potential $\Phi(F)$ decreases by a factor of at least $\left(1- \frac{\epsilon}{2H_n\cdot\ln(4)\cdot |T|}\right)$ in every iteration and because $\Phi(F)\ge w(F)\ge w(\OPT)$ throughout the algorithm, the number of iterations is at most
\begin{align*}
\log_{(1-\sfrac{\epsilon}{(2H_n\cdot\ln(4)\cdot |T|)})^{-1}} \left(\frac{H_n \cdot w(F_0)}{w(\OPT)}\right) \ =&\ \ln\left(\frac{H_n\cdot w(F_0)}{w(\OPT)}\right)\cdot \frac{1}{-\ln(1-\sfrac{\epsilon}{(2H_n\cdot\ln(4)\cdot |T|)})}\\[2mm]
 \le&\ \ln\left(\frac{H_n\cdot w(F_0)}{w(\OPT)}\right)\cdot \frac{2H_n\cdot\ln(4)\cdot |T|}{\epsilon}\enspace,
\end{align*}
where we used $\ln(1+x) \le x$ for $x> -1$.
\end{proof}

Finally, we note that, analogous to the algorithms in~\cite{byrka_2013_steiner,goemans_2012_matroids}, Algorithm~\ref{algo:steiner_local_search} can be improved for Steiner Tree problems restricted to particular graph topologies.
For example when the underlying graph is quasi-bipartite, i.e., non-terminal nodes are pairwise non-adjacent.
In such cases, one can get a lower potential for a component $C$ than $\ln(4)\cdot w(C)$, i.e., one can strengthen the $\ln 4$ factor in Lemma~\ref{lem:witness_sets}.
The only modification necessary in Algorithm~\ref{algo:steiner_local_search} to obtain improved factors in such cases, is to select a $k$-component that maximizes $\overline{w}(\Drop^{\overline{w}}_S(C)) - \Phi(C)$ instead of $\overline{w}(\Drop^{\overline{w}}_S(C)) - \ln 4 \cdot w(C)$.
This will lead to the same improved factors as with the procedures in~\cite{byrka_2013_steiner,goemans_2012_matroids}.

\subsection{An LP based Analysis}\label{sec:steiner_lp}

The $(\ln 4+\epsilon)$-approximation algorithm by \textcite{byrka_2013_steiner} is based on a linear programming relaxation, called the directed component LP.
The variables in this LP relaxation correspond to directed components, i.e., to pairs $(C,t)$ where $C$ is a component and $t\in T_C$ is a terminal that we interpret as the root of the component $C$.
We write 
\[
\vec{\mathcal{C}} \coloneqq \left\{ (C,t) \colon C\subseteq E\text{ is a component with }t\in T_C \right\}
\]
 to denote the set of all directed components.
Then, for a terminal set $R\subseteq T$, we denote by 
\[
 \delta^-_{\vec{\Cscr}}(R) \coloneqq \left\{ (C,t)\in \vec{\Cscr} \colon t\notin R,\ T_C\cap R \ne \emptyset \right\}
\]
the set of directed components that enter $R$.
The directed component relaxation can now be stated as follows, where $r\in T$ is an arbitrary fixed terminal:
\begin{equation}\label{eq:dcr}
\begin{array}{rr@{\;}c@{\;}ll}
\min & \multicolumn{3}{c}{\displaystyle\sum_{(C,t)\in\vec{\Cscr}} w(C) \cdot x_{C,t} }\\
&\displaystyle\sum_{(C,t)\in  \delta^-_{\vec{\Cscr}}(R)} x_{C,t} &\geq  &1 &\forall\; \emptyset \ne R\subseteq T\setminus\{r\}\\
&x_{C,t} &\ge &0 &\forall\;(C,t)\in\vec{\Cscr}\enspace.
\end{array}
\end{equation}
It is NP-hard to solve the LP~\eqref{eq:dcr} exactly as observed in~\cite{goemans_2012_matroids}, but it follows from Theorem~\ref{thm:steiner_decomposition} that for large enough $k$, the weight of an optimum solution $\mathrm{LP}_k$ of the $k$-restricted directed component LP, i.e., LP~\eqref{eq:dcr} restricted to the set 
\[
\vec{\Cscr}_k \coloneqq \left\{ (C,t) \colon C\subseteq E\text{ is a $k$-component with }t\in T_C \right\}
\]
of directed $k$-components, is at most $(1+\epsilon)\cdot w(\mathrm{LP})$, where $\mathrm{LP}$ denotes an optimal solution of the unrestricted LP \eqref{eq:dcr}.

\textcite{goemans_2012_matroids} analyzed a variant of the algorithm in~\cite{byrka_2013_steiner} and showed that this algorithm yields an $(\ln 4+\epsilon)$-approximation not only with respect to an optimal Steiner tree solution, but also with respect to the optimal value of the directed component LP, i.e., they showed that the computed solution $F$ fulfills $w(F)\le (\ln 4 +\epsilon)\cdot w(\mathrm{LP})$.

Next we show that also our local search algorithm can be analyzed with respect to the directed component LP, although we needed the LP neither in the algorithm itself nor in the proof of its approximation guarantee.

\begin{theorem}\label{thm:int_gap}
Algorithm~\ref{algo:steiner_local_search} returns a Steiner tree solution $F$ with $w(F)\le (\ln 4 +\epsilon)\cdot w(\mathrm{LP}_k)$.
\end{theorem}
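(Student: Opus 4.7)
The plan is to preserve the proof of Lemma~\ref{lem:apx_guarantee_steiner} essentially verbatim, upgrading only Lemma~\ref{lem:good_improvement_exists_steiner} to an LP-based analog: in every iteration of Algorithm~\ref{algo:steiner_local_search}, there exists a $k$-component $C$ satisfying
\[
\overline{w}\bigl(\Drop^{\overline{w}}_S(C)\bigr) \;-\; \ln(4)\cdot w(C) \;\ge\; \frac{1}{|T|}\bigl( w(F) \;-\; \ln(4)\cdot w(\mathrm{LP}_k) \bigr).
\]
Since the proof of Lemma~\ref{lem:apx_guarantee_steiner} uses the bound of Lemma~\ref{lem:good_improvement_exists_steiner} only as a black box, substituting this LP analog for its integral predecessor yields the theorem.

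To derive the LP analog, I would fix an optimal vertex solution $x^*$ of the $k$-restricted directed component LP, for which $w(\mathrm{LP}_k) = \sum_{(C,t)\in\vec{\Cscr}_k} x^*_{C,t}\, w(C)$ and, by a minimality argument on the LP polytope, one can assume $\sum_{(C,t)} x^*_{C,t} \le |T|$. The $x^*$-weighted averaging used in Lemma~\ref{lem:good_improvement_exists_steiner} then reduces the claim to the fractional exchange inequality
\[
\sum_{(C,t)\in\vec{\Cscr}_k} x^*_{C,t} \cdot \overline{w}\bigl(\Drop^{\overline{w}}_S(C)\bigr) \;\ge\; \overline{w}(S) \;=\; w(F).
\]
This is a fractional analog, with respect to the graphic matroid on the complete graph over $T$, of the matroid block-exchange property (Lemma~\ref{lem:matroids_block_exchange}) used in the $\OPT_k$ case.

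To establish this fractional exchange, I would construct for each $(C,t)$ in the support of $x^*$ a non-negative weighting $\alpha_{(C,t),D}$ over feasible drop sets $D$ (i.e., $D\subseteq S$ with $(S\setminus D)\cup S_C$ a terminal spanning tree) with $\sum_D \alpha_{(C,t),D} = x^*_{C,t}$, such that every $e\in S$ is covered with total mass at least $1$. Because $\Drop^{\overline{w}}_S(C)$ maximizes $\overline{w}$ over feasible drops, this cover yields the inequality immediately. The starting point for constructing $\alpha$ is the LP feasibility: for each $e\in S$, the two components of $S\setminus\{e\}$ bipartition $T$, and, choosing $R_e$ to be the side not containing $r$, the cut constraint $\sum_{(C,t)\in\delta^-(R_e)} x^*_{C,t} \ge 1$ is exactly the statement that the $(C,t)$'s whose $S_C$ crosses $R_e$---equivalently, those for which $e$ lies in some feasible drop---have total $x^*$-mass at least $1$.

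The main obstacle is upgrading these per-edge lower bounds into a single, globally consistent fractional assignment $\alpha$. I would do this either through LP duality / a max-flow-min-cut argument on a suitable bipartite polytope between directed components and edges of $S$, or more structurally by invoking polyhedral properties of matroid partition applied to the graphic matroid over $T$ (essentially a fractional version of Lemma~\ref{lem:matroids_block_exchange}). Establishing this fractional exchange cleanly is the technical heart of the theorem; once it is in hand, the LP-strengthened improvement lemma, the $x^*$-weighted averaging, and the unchanged conclusion of Lemma~\ref{lem:apx_guarantee_steiner} combine to give $w(F) \le (\ln(4)+\epsilon)\cdot w(\mathrm{LP}_k)$.
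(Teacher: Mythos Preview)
Your overall structure is exactly the paper's: establish an LP analog of Lemma~\ref{lem:good_improvement_exists_steiner} and then reuse the proof of Lemma~\ref{lem:apx_guarantee_steiner} verbatim. The one place you diverge is that you treat the fractional exchange inequality
\[
\sum_{(C,t)\in\vec{\Cscr}_k} x^*_{C,t}\cdot \overline{w}\bigl(\Drop^{\overline{w}}_S(C)\bigr)\ \ge\ \overline{w}(S)
\]
as ``the main obstacle'' to be proved from scratch via a hand-built fractional covering $\alpha$, whereas the paper simply cites it: this inequality is precisely the \emph{bridge lemma} of Byrka, Grandoni, Rothvo\ss, and Sanit\`a (Lemma~11 in~\cite{byrka_2013_steiner}). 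So there is nothing left to fill in---once you invoke the bridge lemma, your $x^*$-weighted averaging together with $\sum_{(C,t)} x^*_{C,t}\le |T|$ gives the LP analog of Lemma~\ref{lem:good_improvement_exists_steiner} immediately, and the remainder follows exactly as you outline.
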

\begin{proof}
Let $x$ be an optimal solution to the $k$-restricted directed component LP .
To show that Algorithm~\ref{algo:steiner_local_search} computes a solution $F$ with $w(F)\le (\ln 4 +\epsilon)\cdot w(\mathrm{LP}_k)$, we show that as long as $w(F)$ is strictly lager than $\ln(4)\cdot w(\mathrm{LP}_k)$, then there a component whose selection will improve the potential.
To this end, consider the state of Algorithm~\ref{algo:steiner_local_search} at the beginning of an iteration in step~\ref{item:local_step_steiner}.
By the bridge lemma (Lemma~11 in \cite{byrka_2013_steiner}), we have
\begin{equation*}
\overline{w}(S) \le \sum_{(C,t)\in\vec{\Cscr}_k} x_{C,t} \cdot \overline{w}(\Drop^{\overline{w}}_S(C))\enspace,
\end{equation*}
where $S=\cup_{f\in F} W_f$, as usual.
Because $\overline{w}(S)=w(F)$ and 
\[
\sum_{(C,t)\in\vec{\Cscr}_k} x_{C,t} \cdot w(C) = w(\mathrm{LP}_k)\enspace,
\] we obtain the following LP-based version of Lemma~\ref{lem:good_improvement_exists_steiner}:
\begin{align*}
\max_{(C,t)\in\vec{\Cscr}_k} \left(\overline{w}(\Drop^{\overline{w}}_S(C))-\ln(4) \cdot w(C)\right)\ \ge&\
 \frac{1}{x\big(\vec{\Cscr}_k \big)}\sum_{(C,t)\in\vec{\Cscr}_k} x_{C,t} \cdot \Big(\overline{w}(\Drop^{\overline{w}}_S(C))-\ln(4) \cdot w(C)\Big)\\
 \ge&\ \frac{1}{|T|}\cdot\Big(w(F) - \ln(4)\cdot w(\mathrm{LP}_k)\Big)\enspace,
\end{align*}
where we used $x\big(\vec{\Cscr}_k \big)\coloneqq \sum_{(C,t)\in\vec{\Cscr}_k}\ x_{C,t} \le |T|$ for the last inequality. 
Proceeding as in the proof of Lemma~\ref{lem:apx_guarantee_steiner}, but using the above lower bound on the decrease of $\Phi(F)$ in a single iteration of Algorithm~\ref{algo:steiner_local_search} instead of Lemma~\ref{lem:good_improvement_exists_steiner}, yields that the Steiner tree solution $F$ returned by Algorithm~\ref{algo:steiner_local_search} fulfills $w(F)\le (\ln 4+\epsilon)\cdot w(\mathrm{LP}_k)$.
\end{proof}

Theorem~\ref{thm:int_gap} immediately implies that the integrality gap of the hypergraphic LP relaxation~\eqref{eq:dcr} is at most $\ln 4$, which has first been shown in \cite{goemans_2012_matroids} with an arguably more involved reasoning.

\section{Iterative Randomized Rounding and Local Search}\label{sec:comparison}

In this section we discuss the relation of the iterative randomized rounding technique from \cite{byrka_2013_steiner} and our new local search algorithm.
In particular, we explain why it seems difficult to apply the iterative rounding technique to WTAP, despite the fact that for the Steiner tree problem both techniques yield the same approximation ratio.

The iterative randomized rounding algorithm from \cite{byrka_2013_steiner} solves the $k$-restricted directed component LP, samples a directed component $(C,t)$ proportional to the value of its LP variable $x_{C,t}$, contracts $C$, and iterates on the resulting residual instance until all terminals are connected.

One way to analyze this algorithm, which was proposed in~\cite{goemans_2012_matroids}, is to show that, in any iteration of the algorithm, the expected decrease of the potential function $\Phi(\OPT_k)$ is at least the expected weight $w(C)$ of the sampled component $C$.
Because the potential is always nonnegative, this implies that the expected weight of the resulting Steiner tree is at most $\Phi(\OPT_k) \le \ln(4)\cdot w(\OPT_k)$.
To prove that the expected decrease of the potential is at least the expected weight of the contracted component, one can use essentially the same argument that we used to show the existence of a good local improvement step (Lemma~\ref{lem:potential_decrease_steiner} and Lemma~\ref{lem:good_improvement_exists_steiner}).

The above discussion shows the close relation of the analysis of the iterative randomized rounding algorithm from \cite{byrka_2013_steiner, goemans_2012_matroids} and the analysis of our local search algorithm.
Moreover, we have seen that, for the Steiner tree problem, both techniques are equally strong in terms of the achieved approximation guarantee.
Nevertheless, it is highly unclear how one could design a $(1.5+\epsilon)$-approximation for WTAP using iterative randomized rounding.
One important reason for this is the following crucial difference between the Steiner tree problem and WTAP.
For the Steiner tree problem we could show that the partition $\Cscr$ of $\OPT_k$ into $k$-components fulfills
\begin{equation}\label{eq:decomposition_steiner_tree}
\sum_{C\in\Cscr} \overline{w}(\Drop^{\overline{w}}_S(C)) \ge \overline{w}(S)
\end{equation}
for any terminal spanning tree $S$ and any non-negative weight function $\overline{w} : S \to \mathbb{R}_{\ge 0}$.
For WTAP, we showed that for every WTAP solution $U$ for which the paths $P_u$ with $u\in U$ are disjoint and any weight function $\overline{w} : U \to \mathbb{R}_{\ge 0}$, there exists a partition $\Cscr$ of $\OPT$ into $k$-thin components such that
\begin{equation}\label{eq:decomposition_wtap}
\sum_{C\in\Cscr} \overline{w}(\Drop_U(C)) \ge (1-\epsilon)\cdot \overline{w}(U)\enspace.
\end{equation}
These statements for Steiner tree and WTAP played an analogous role in the analysis of our local search procedures, but the statement for WTAP is weaker in the sense that the partition $\Cscr$ of $\OPT$ into $k$-thin components crucially depends on the solution $U\subseteq L_{\mathrm{up}}$, while for Steiner tree the set $\Cscr$ of components is independent of $S$. 
For WTAP, this dependence of the decomposition on the solution $U$ is necessary (see Figure~\ref{fig:example_nonobliviousness_wtap_decomp}).

\begin{figure}[ht]
\begin{center}
\begin{tikzpicture}

\tikzset{
  prefix node name/.style={%
    /tikz/name/.append style={%
      /tikz/alias={#1##1}%
    }%
  }
}

\tikzset{
lks/.style={line width=2pt, densely dashed},
}

\def\rad{1.5cm}
\def\nb{3}
\def\hsep{10mm}
\def\ang{15}

\begin{scope}[prefix node name=f]
\begin{scope}[every node/.style={thick,draw=black,fill=black,circle,minimum size=0pt, inner sep=1.2pt, outer sep=1pt}]

\node (c) at (0,0) {};
\node (r) at (0,\rad) {};

\foreach \i in {-\nb, ..., \nb} {
  \node (\i) at (\i*\hsep,-\rad) {}; 
}

\end{scope}

\node[above] (tr) at (r) {$r$};
\node[above right] (tr) at (c) {$c$};

\node[blue] at ($(c)+(1.5,0.6)$) {$\mathrm{OPT}$};

\begin{scope}[very thick]
\draw (r) -- (c);

\foreach \i in {-\nb, ..., \nb} {
  \draw (c) -- (\i);
}
\end{scope}

\begin{scope}[lks]

\def\bnd{10}

\begin{scope}[blue]
\draw (r) to[bend right=\bnd] (c);

\foreach \i in {-\nb, ..., 0} {
\draw (\i) to[bend left=\bnd] (c);
}
\foreach \i in {1, ..., \nb} {
\draw (\i) to[bend right=\bnd] (c);
}
\end{scope}

\end{scope}[lks]
\end{scope}

\begin{scope}[prefix node name=s, xshift=9cm]

\def\sl{-1} %

\begin{scope}[every node/.style={thick,draw=black,fill=black,circle,minimum size=0pt, inner sep=1.2pt, outer sep=1pt}]

\node (c) at (0,0) {};
\node (r) at (0,\rad) {};

\foreach \i in {-\nb, ..., \nb} {
  \node (\i) at (\i*\hsep,-\rad) {}; 
}

\end{scope}

\node[above] (tr) at (r) {$r$};
\node[above right] (tr) at (c) {$c$};

\node[below] at (\sl) {$v$};

\node[red] at ($(c)+(1.5,0.6)$) {$U$};

\begin{scope}[very thick]
\draw (r) -- (c);

\foreach \i in {-\nb, ..., \nb} {
  \draw (c) -- (\i);
}
\end{scope}

\begin{scope}[lks]

\def\bnd{10}

\begin{scope}[red]
\draw (r) to[bend right=\bnd] (\sl);

\foreach \i in {-\nb, ..., 0} {
  \ifthenelse {\i=\sl
}{

}{
\draw (\i) to[bend left=\bnd] (c);
}
}
\foreach \i in {1, ..., \nb} {
\draw (\i) to[bend right=\bnd] (c);
}
\end{scope}

\end{scope}[lks]
\end{scope}

\end{tikzpicture}
 \end{center}
\caption{
A simple example showing why a decomposition with guarantees as stated in Theorem~\ref{thm:decomposition} cannot be computed without knowing the up-link solution $U$ upfront.
The graph we consider is a star graph with an arbitrary root $r$ and center vertex $c$.
On the left-hand side is a possible optimal solution, and the right-hand picture shows an up-link solution $U$.
Assume that the link $\{r,v\}\in U$ is expensive.
Hence, a good decomposition of $\OPT$ into components needs to have at least one component that covers $P_{\{r,v\}}$.
If we do not know the vertex $v$, then it is impossible to guarantee that such a component exists, except if all links of $\OPT$ are put into the same component.
However, this component would not be $O(1)$-thin if we start with a large star.
Also, using some cheap links of $\OPT$ multiple times when constructing components will not help if the link $\{r,c\}\in \OPT$ is expensive.
}\label{fig:example_nonobliviousness_wtap_decomp}
\end{figure}

In our local search algorithm, we used \eqref{eq:decomposition_steiner_tree} and \eqref{eq:decomposition_wtap} to show that there exists a component $C\in\Cscr$ that we can use to decrease the potential $\Phi(F)$ of the current solution $F$.
In the iterative randomized rounding algorithm, we want to choose a component $C$ that we can use to decrease the potential $\Phi(\OPT)$.
Both for Steiner tree and WTAP, choosing a random component from $\Cscr$ yields a component $C$ that in expectation leads to a decrease of $\Phi(F)$.
However, in the context of WTAP, the partition $\Cscr$ depends on the unknown solution $\OPT$ and thus it seems challenging to actually find a good component $C$ to contract (or to design an LP from which we could sample $C$).
In contrast to this, in our local search procedure we know the current solution $F$ explicitly, which makes it much easier to find a component $C$ that can be used to decrease $\Phi(F)$.

\printbibliography

\end{document}